\newcommand{\doublewidetilde}[1]{{%
  \mathpalette\double@widetilde{#1}%
}}
\newcommand{\double@widetilde}[2]{%
  \sbox\z@{$\m@th#1\widetilde{#2}$}%
  \ht\z@=.9\ht\z@
  \widetilde{\box\z@}%
}
\newtheorem{theorem}{Theorem}[section]
\newtheorem{theorem-definition}[theorem]{Theorem-Definition}
\newtheorem{theorem-construction}[theorem]{Theorem-Construction}
\newtheorem{lemma-definition}[theorem]{Lemma--Definition}
\newtheorem{lemma-construction}[theorem]{Lemma--Construction}
\newtheorem{lemma}[theorem]{Lemma}
\newtheorem{proposition}[theorem]{Proposition}
\newtheorem{corollary}[theorem]{Corollary}
\newtheorem{conjecture}[theorem]{Conjecture}
\theoremstyle{definition}
\newtheorem{definition}[theorem]{Definition}
\newtheorem{remark}[theorem]{Remark}
\newtheorem{example}[theorem]{Example}
\newcommand{\old}[1]{}
\newcommand{\Z}{{\mathbb Z}}\definecolor{calpolypomonagreen}{rgb}{0, 0.6, 0.2}
\newcommand{\R}{{\mathbb R}}
\newcommand{\C}{{\mathbb C}}
\newcommand{\T}{{\mathbb T}}
\newcommand{\extp}{\@ifnextchar^\@extp{\@extp^{\,}}}
\def\@extp^#1{\mathop{\bigwedge\nolimits^{\!#1}}}
\newcommand\restr[2]{{
  \left.\kern-\nulldelimiterspace 
  #1 
  \vphantom{\big|} 
  \right|_{#2} 
  }}
\newcommand{\ra}{\rightarrow}
\newcommand{\be}{\begin{equation}}
\newcommand{\ee}{\end{equation}}
\newcommand{\bt}{\begin{theorem}}
\newcommand{\et}{\end{theorem}}
\newcommand{\bd}{\begin{definition}}
\newcommand{\ed}{\end{definition}}
\newcommand{\bp}{\begin{proposition}}
\newcommand{\ep}{\end{proposition}}
\newcommand{\bl}{\begin{lemma}}
\newcommand{\el}{\end{lemma}}
\newcommand{\bc}{\begin{corollary}}
\newcommand{\ec}{\end{corollary}}
\newcommand{\bcon}{\begin{conjecture}}
\newcommand{\econ}{\end{conjecture}}
\newcommand{\la}{\label}
\newcommand{\w}{{\rm w}}
\newcommand{\bw}{{\rm b}}
\newcommand{\dd}{{\bf d}} 
\newcommand{\wt}{{\rm wt}}
\DeclareMathOperator{\Hom}{Hom}
\tikzset{mid arrow/.style={postaction={decorate,decoration={
				markings,
				mark=at position .5 with {\arrow{latex}}
	}}},
	mid rarrow/.style={postaction={decorate,decoration={
				markings,
				mark=at position .5 with {\arrow{latex reversed}}
	}}},
}
\tikzset{qvert/.style={draw,black,circle,fill=gray,minimum size=5pt,inner sep=0pt}  } 
\tikzset{bvert/.style={draw,circle,fill=black,minimum size=5pt,inner sep=0pt}  }  
\tikzset{wvert/.style={draw,circle,fill=white,minimum size=5pt,inner sep=0pt}  } 
\tikzset{fvert/.style={text=MidnightBlue}  } 
\tikzset{sqvert/.style={draw,black,rectangle,fill=black,minimum size=5pt,inner sep=0pt}  } 
\tikzset{lvert/.style={draw,circle,fill=black,minimum size=4pt,inner sep=0pt}  }  
\newcommand{\Addresses}{{
  \bigskip
  \footnotesize

  \textsc{University of Michigan, Department of Mathematics, 4860 East Hall, 530 Church Street, Ann Arbor, MI 48109-1043, USA}\par\nopagebreak
  \emph{E-mail address}: \texttt{georgete at umich.edu}

 \medskip

  \textsc{Université Paris-Saclay, CNRS, CEA, Institut de physique théorique, 91191 Gif-sur-Yvette, France}\par\nopagebreak
  \emph{E-mail address}: \texttt{sanjay.ramassamy at ipht.fr}

}}
   \def\MR#1{}
\begin{document}

\title{Discrete dynamics in cluster integrable systems from geometric $R$-matrix transformations}
\author{Terrence George and Sanjay Ramassamy}
 \date{\today}
 \maketitle

\begin{abstract}
Cluster integrable systems are a broad class of integrable systems modelled on bipartite dimer models on the torus. Many discrete integrable dynamics arise by applying sequences of local transformations, which form the cluster modular group of the cluster integrable system. This cluster modular group was recently characterized by the first author and Inchiostro. There exist some discrete integrable dynamics that make use of non-local transformations associated with geometric $R$-matrices. In this article we characterize the generalized cluster modular group -- which includes both local and non-local transformations -- in terms of extended affine symmetric groups. We also describe the action of the generalized cluster modular group on the spectral data associated with cluster integrable systems.
\end{abstract}

\section{Introduction}
\label{sec:intro}

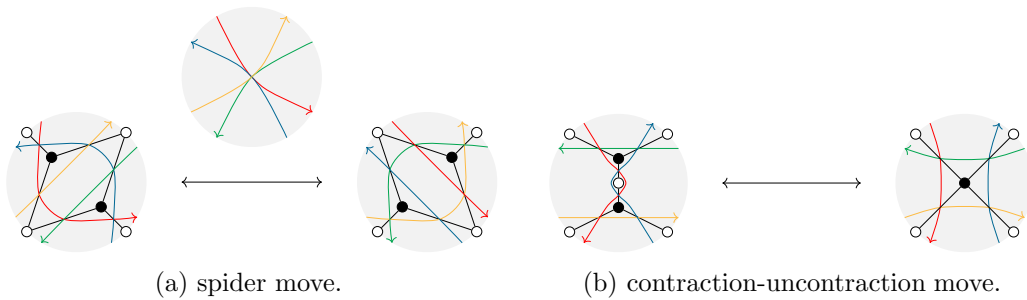
\begin{figure}
\centering
\def\twd{0.45\textwidth}
  \def\scl{0.5}
  \begin{tabular}{cc}
  \resizebox{\twd}{!}{ 
      \begin{tikzpicture}[scale=0.6]
    \begin{scope}[shift={(5,0)},rotate=45]
\def\r{2};
\fill[black!5] (0,0) circle (\r cm);
\coordinate[wvert] (n1) at (0:\r);
\coordinate[wvert] (n2) at (0+90:\r);
\coordinate[wvert] (n3) at (0+180:\r);
\coordinate[wvert] (n4) at (0+270:\r);

\coordinate[bvert] (b1) at (0:0.5*\r);
\coordinate[bvert] (b2) at (0+180:0.5*\r);

\draw[-]
				(n1) --  (b1) --  (n2)--(b2)--(n4) -- (b1)
				(b2)--(n3)
				;
\coordinate[] (t1) at (15:\r);
\coordinate[] (t2) at (120-45:\r);
\coordinate[] (t3) at (150-45:\r);
\coordinate[] (t4) at (210-45:\r);
\coordinate[] (t5) at (240-45:\r);
\coordinate[] (t6) at (300-45:\r);
\coordinate[] (t7) at (330-45:\r);
\coordinate[] (t8) at (30-45:\r);

\draw [<-,Dandelion] plot [smooth, tension=0.5] coordinates {(t1) (0:1.5) (0.5,-1) (-0.5,-1)  (180:1.5)  (t4)};
\draw [<-,Green] plot [smooth, tension=0.5] coordinates {(t5) (180:1.5)  (-0.5,1) (0.5,1)  (0:1.5) (t8)};
\draw [->,red] plot [smooth, tension=0.5] coordinates {(t2)   (0.5,1) (0.5,-1)   (t7)};
\draw [<-,MidnightBlue] plot [smooth, tension=0.5] coordinates {(t3)   (-0.5,1) (-0.5,-1)   (t6)};
\end{scope}

\begin{scope}[shift={(0,3)},rotate=-45]
\def\r{2};
\fill[black!5] (0,0) circle (\r cm);

\coordinate[] (t1) at (15:\r);
\coordinate[] (t2) at (120-45:\r);
\coordinate[] (t3) at (150-45:\r);
\coordinate[] (t4) at (210-45:\r);
\coordinate[] (t5) at (240-45:\r);
\coordinate[] (t6) at (300-45:\r);
\coordinate[] (t7) at (330-45:\r);
\coordinate[] (t8) at (30-45:\r);

\draw [<-,red] plot [smooth, tension=0.5] coordinates {(t1) (0,0)  (t4)};
\draw [<-,MidnightBlue] plot [smooth, tension=0.5] coordinates {(t5)  (0,0) (t8)};
\draw [->,Green] plot [smooth, tension=0.5] coordinates {(t2)    (0,0)   (t7)};
\draw [<-,Dandelion] plot [smooth, tension=0.5] coordinates {(t3)    (0,0)   (t6)};
\end{scope}

\begin{scope}[shift={(-5,0)},rotate=-45]
\def\r{2};
\fill[black!5] (0,0) circle (\r cm);
\coordinate[wvert] (n1) at (0:\r);
\coordinate[wvert] (n2) at (0+90:\r);
\coordinate[wvert] (n3) at (0+180:\r);
\coordinate[wvert] (n4) at (0+270:\r);

\coordinate[bvert] (b1) at (0:0.5*\r);
\coordinate[bvert] (b2) at (0+180:0.5*\r);

\draw[-]
				(n1) --  (b1) --  (n2)--(b2)--(n4) -- (b1)
				(b2)--(n3)
				;
\coordinate[] (t1) at (15:\r);
\coordinate[] (t2) at (120-45:\r);
\coordinate[] (t3) at (150-45:\r);
\coordinate[] (t4) at (210-45:\r);
\coordinate[] (t5) at (240-45:\r);
\coordinate[] (t6) at (300-45:\r);
\coordinate[] (t7) at (330-45:\r);
\coordinate[] (t8) at (30-45:\r);

\draw [<-,red] plot [smooth, tension=0.5] coordinates {(t1) (0:1.5) (0.5,-1) (-0.5,-1)  (180:1.5)  (t4)};
\draw [<-,MidnightBlue] plot [smooth, tension=0.5] coordinates {(t5) (180:1.5)  (-0.5,1) (0.5,1)  (0:1.5) (t8)};
\draw [->,Green] plot [smooth, tension=0.5] coordinates {(t2)   (0.5,1) (0.5,-1)   (t7)};
\draw [<-,Dandelion] plot [smooth, tension=0.5] coordinates {(t3)   (-0.5,1) (-0.5,-1)   (t6)};
\end{scope}
\draw[<->] (-2,0) -- (2,0);
\end{tikzpicture}}&
\resizebox{\twd}{!}{ 
    \begin{tikzpicture}[scale=0.6]
\begin{scope}[shift={(5,-5)},rotate=-45]
\def\r{2};
\fill[black!5] (0,0) circle (\r cm);

\coordinate[] (t1) at (15:\r);
\coordinate[] (t2) at (120-45:\r);
\coordinate[] (t3) at (150-45:\r);
\coordinate[] (t4) at (210-45:\r);
\coordinate[] (t5) at (240-45:\r);
\coordinate[] (t6) at (300-45:\r);
\coordinate[] (t7) at (330-45:\r);
\coordinate[] (t8) at (30-45:\r);

\coordinate[] (m1) at (0:1);
\coordinate[] (m2) at (0+90:1);
\coordinate[] (m3) at (0+180:1);
\coordinate[] (m4) at (0+270:1);

\draw [->,red] plot [smooth, tension=0.5] coordinates {(t4) (m3) (m4) (t7)};
\draw [->,MidnightBlue] plot [smooth, tension=0.5] coordinates {(t8)  (m1) (m2) (t3)};
\draw [->,Green] plot [smooth, tension=0.5] coordinates {(t2)    (m2) (m3)   (t5)};
\draw [->,Dandelion] plot [smooth, tension=0.5] coordinates {(t6)    (m4) (m1)  (t1)};
\coordinate[wvert] (n1) at (0:\r);
\coordinate[wvert] (n2) at (0+90:\r);
\coordinate[wvert] (n3) at (0+180:\r);
\coordinate[wvert] (n4) at (0+270:\r);

\coordinate[bvert] (n5) at (0,0);
\draw[] (n5) edge (n2) edge (n3) edge (n4) edge (n1);
\end{scope}
\draw[<->] (-2,-5) -- (2,-5);
    \begin{scope}[shift={(-5,-5)},rotate=-45]
\def\r{2};
\fill[black!5] (0,0) circle (\r cm);
\coordinate[wvert] (n1) at (0:\r);
\coordinate[wvert] (n2) at (0+90:\r);
\coordinate[wvert] (n3) at (0+180:\r);
\coordinate[wvert] (n4) at (0+270:\r);
\coordinate[wvert] (n5) at (0,0);

\coordinate[bvert] (b1) at (-0.5,0.5);
\coordinate[bvert] (b2) at (0.5,-0.5);

\draw[-]
				(n1)--(b2)--(n4)
				(n2)--(b1)--(n3)
				(b1)--(n5)--(b2)
				;
\coordinate[] (t1) at (15:\r);
\coordinate[] (t2) at (120-45:\r);
\coordinate[] (t3) at (150-45:\r);
\coordinate[] (t4) at (210-45:\r);
\coordinate[] (t5) at (240-45:\r);
\coordinate[] (t6) at (300-45:\r);
\coordinate[] (t7) at (330-45:\r);
\coordinate[] (t8) at (30-45:\r);

\draw [->,Green] plot [smooth, tension=0.5] coordinates {(t2) (t5)};
\draw [<-,Dandelion] plot [smooth, tension=.25] coordinates {(t1)  (t6)};

\draw [->,red] plot [smooth, tension=0.5] coordinates {(t4) (-0.75,0.25) (-0.25,0.25) (0.15,0.15) (0.25,-0.25)  (0.25,-0.75)  (t7)};
\draw [<-,MidnightBlue] plot [smooth, tension=0.5] coordinates {(t3)(-0.25,.75)(-0.25,0.25)  (-0.15,-0.15) (0.25,-0.25) (0.75,-0.25)(t8)};

\end{scope}    
    \end{tikzpicture}
}\\
(a) spider move. & (b) contraction-uncontraction move.
  \end{tabular}
  \caption{The elementary transformations, along with the corresponding homotopies of zig-zag paths. For the contraction-uncontraction move, the degree-two vertex getting contracted/uncontracted can be either black or white, and the two vertices adjacent to it are distinct and of degree at least two (here (b) illustrates the case when both degrees are equal to three).}\label{et}
\end{figure}

 Cluster integrable systems are an infinite class of Liouville integrable systems constructed by Goncharov and Kenyon \cite{GK13} from the bipartite dimer model on the torus and classified by convex integral polygons in the plane. This class of integrable systems was found \cite{FM16} to contain many other integrable systems, notably the pentagram map \cite{OST}. The pentagram map is, moreover, discrete integrable, and the discrete dynamics can be realized as a sequence of certain elementary transformations in the cluster integrable system \cites{Glick,FM16,GSTV}, called a cluster modular transformation in \cite{FG03}. Such cluster modular transformations exist for any cluster integrable system, each one discrete integrable, and the first author and Inchiostro \cite{Gi} proved a classification for cluster modular transformations, verifying a conjecture of Fock and Marshakov \cite{FM16}.

However, recently several integrable systems have been discovered that fall in the class of cluster integrable systems, but whose discrete dynamics are described not just by sequences of elementary transformations, but require additionally a non-local transformation first considered by \cite{ILP1} and further studied in \cites{ILP2,Chepuri}. They include the following:
\begin{enumerate}
    \item The generalized discrete Toda lattice \cite{ILP1}.
    \item The cross-ratio dynamics integrable system, introduced by \cite{AFIT} and shown to be a  cluster integrable system in \cite{AGR}.
    \item Polygon recutting \cite{AdTM,Izos}.
    \item The short-diagonal hyperplane map \cite{glickpylyavskyy}.
\end{enumerate}
Inoue, Lam and Pylyavskyy \cite{ILP1} related this non-local transformation to geometric $R$-matrices from representation theory \cites{BK,Etingof,KNY,KNO}, following previous interpretations in terms of semi-local transformations of electrical networks \cites{LP1,LP2,LP3}. We will henceforth abuse terminology by calling these semi-local transformations \emph{geometric $R$-matrix transformations}.

This motivates the study of generalized cluster modular transformations, defined as sequences of elementary transformations and geometric $R$-matrix transformations. The goal of this paper is to prove a classification for generalized cluster modular transformations for any cluster integrable system, simultaneously generalizing the results of \cite{Gi} (which does not include geometric $R$-matrix transformations) and \cite{ILP1} (which provides such a classification for hexagonal lattices). We expect that systematic consideration of the full set of generalized cluster modular transformations will bring about new constructions of geometric integrable systems.

Besides integrable systems in geometry, cluster modular transformations also originate from the combinatorial and probabilistic study of the dimer model, where they have been used to compute partition functions, sample exactly according to a Boltzmann distribution via shuffling algorithms, and prove limit shape results, see e.g. \cites{EKLP92, JPS98, PS05, FSG14, LMNT14, LM17, BF15, GLVY19, LM20}.
Commutation relations arising from $R$-matrices have also recently been considered \cite{BerggrenDuits} in the special case of the dimer model on rail-yard graphs \cite{rail}.

We now briefly review the cluster integrable system, before proceeding to more precise statements of our results. Let $\Gamma$ be a bipartite graph embedded in the torus $\T$ equipped with edge weights $\wt$. Let $\mathcal X_\Gamma:=H^1(\Gamma,\C^\times)$ be the space of weights on directed cycles of $\Gamma$: $[\wt]\in\mathcal X_\Gamma$ is obtained by taking alternating products of edge weights $\wt$ along any cycle. A \emph{zig-zag path} in $\Gamma$ is a closed path in $\Gamma$ that turns maximally right at black vertices and maximally left at white vertices. It is customary to draw zig-zag paths as paths in the medial graph of $\Gamma$ that go straight across at each intersection. By definition, the medial graph is the four-valent graph whose vertices are the mid-points of edges of $\Gamma$, and with two vertices are connected by an edge if they occur consecutively around a face (see Figure~\ref{fig:sqoct}). Since $\Gamma$ is embedded in $\T$, each zig-zag path $\alpha$ has a homology class $[\alpha] \in H_1(\T,\Z) \cong \Z^2$. There is a convex integral polygon $N$ in the plane, unique up to translations, whose primitive edge vectors are given by the homology classes of zig-zag paths. Here, \emph{integral} means that the vertices of $N$ are lattice points (i.e., contained in $\Z^2 \subset \R^2$), and by a \emph{primitive edge vector} we mean a vector contained in an edge of $N$, oriented counterclockwise around the boundary of $N$, whose endpoints are lattice points and which contains no other lattice points. The polygon $N$ is called the \emph{Newton polygon} of $\Gamma$. We denote by $Z$ the set of zig-zag paths of $\Gamma$.

There are two local transformations of weighted bipartite graphs embedded in $\T$, called \emph{elementary transformations} (Figure \ref{et}). Each elementary transformation consists of a pair $(s,\mu^s)$, where:
\begin{enumerate}
\item {$s:\Gamma_1 \rightsquigarrow \Gamma_2$} is a local modification of the bipartite graph $\Gamma_1$ to obtain the bipartite graph $\Gamma_2$, and
    \item $\mu^s: \mathcal X_{\Gamma_1} \ra \mathcal X_{\Gamma_2}$ is a birational map of weights.
\end{enumerate}
The local modification $s$ induces a homology-class-preserving bijection between zig-zag paths of $\Gamma_1$ and $\Gamma_2$. Therefore, the elementary transformations preserve the Newton polygon. Indeed, we can view elementary transformations as homotopies of zig-zag paths (Figure \ref{et}).

Goncharov and Kenyon \cite{GK13} show that associated to any convex integral polygon $N$ in the plane is a family of graphs with Newton polygon $N$, and that two members of the family are related by elementary transformations. Gluing the spaces of weights $\mathcal X_\Gamma$ as $\Gamma$ varies over the family of graphs with Newton polygon $N$ using the birational maps induced by elementary transformations, we obtain the \emph{dimer cluster Poisson variety} $\mathcal X_N$, which is an $\mathcal X$-cluster variety in the terminology of \cite{FG03} and therefore possesses a canonical cluster Poisson structure. The Poisson variety $\mathcal X_N$ is the phase space of the cluster integrable system.

A sequence $\phi$ of elementary transformations and isotopies from a graph $\Gamma$ to itself is called a \emph{cluster transformation based at $\Gamma$}. Cluster transformations $\phi$ may be viewed as homotopies of zig-zag paths on the torus. Composing the induced birational maps of weights, we get a birational automorphism $\mu^\phi$ of $\mathcal X_\Gamma$, and therefore of $\mathcal X_N$. A cluster transformation based at $\Gamma$ is called \emph{trivial} if $\mu^\phi$ is the identity rational map. For example, doing the spider move at the same face twice is a trivial cluster transformation. The \emph{cluster modular group $G_\Gamma$ based at $\Gamma$} is the group of cluster transformations based at $\Gamma$ modulo the trivial cluster transformations. If $\Gamma'$ is another minimal bipartite graph with Newton polygon $N$, then $G_\Gamma$ and $G_{\Gamma'}$ are isomorphic, so the isomorphism class of the cluster modular group depends only on $N$.

Let $\Sigma(1)$ be the collection of $1$-dimensional cones generated by the unit normal vectors to the sides of $N$, oriented so that they point into $N$. Elements $\rho\in\Sigma(1)$ are called \emph{rays}, they are in bijection with the sides of $N$. This notation and terminology come from toric geometry: if $\Sigma$ is the normal fan of $N$, then $\Sigma(r)$ denotes the set of $r$-dimensional cones of $\Sigma$. We label the side of $N$ normal to the ray $\rho$ by $E_\rho$, and the set of zig-zag paths whose homology classes are primitive edges of $E_\rho$ by $Z_\rho$. We denote by $|E_\rho|$ the \emph{integral length} of $E_\rho$, that is, the number of primitive edge vectors contained in $E_\rho$, so that $|E_\rho|=\# Z_\rho$. See Figure~\ref{fig:sqoct} for an illustration of the Newton polygon of a graph and the rays $\rho$.

Let $p:\R^2 \ra \T$ be the universal cover of the torus $\T$. The preimage $\widetilde \Gamma:=p^{-1}(\Gamma)$ is a biperiodic graph in the plane that is periodic under translations by $H_1(\T,\Z) \cong \Z^2$. {Let $\rho\in\Sigma(1)$ be a ray and let $Z_\rho=\{ \alpha_1,\dots,\alpha_{|E_\rho|}\}$ be zig-zag paths} labeled in cyclic order from right to left (i.e., labels increase in the direction $\rho$). Their lifts to $\widetilde \Gamma$ form an infinite collection of parallel zig-zag paths $\widetilde Z_\rho=\{\widetilde \alpha_i\}_{i \in \Z}$, such that $p(\widetilde \alpha_i)=\alpha_{j}$, where $1\leq j \leq |E_\rho|$ and $j \equiv i \text{ modulo }|E_\rho|$. Here by \emph{parallel} we also assume that they are pointing in the same direction.
\begin{figure}
\centering

	\begin{tikzpicture}  [scale=0.5]
	\begin{scope}
   \fill[black!5] (0,0) rectangle (8,8);
	  \fill[black!5] (0,0) rectangle (8,8); 		\draw[dashed, gray] (0,0) rectangle (8,8);

\coordinate[bvert] (b1) at (2,7);
			\coordinate[bvert] (b2) at (2,5);
			\coordinate[bvert] (b3) at (5,6);
			\coordinate[bvert] (b4) at (7,6);
			\coordinate[bvert] (b5) at (1,2);
			\coordinate[bvert] (b6) at (3,2);
			\coordinate[bvert] (b7) at (6,3);
			\coordinate[bvert] (b8) at (6,1);
			\coordinate[wvert] (w1) at (1,6);
			\coordinate[wvert] (w2) at (3,6);
			\coordinate[wvert] (w3) at (6,7);
			\coordinate[wvert] (w4) at (6,5);
			\coordinate[wvert] (w5) at (2,3);
			\coordinate[wvert] (w6) at (2,1);
			\coordinate[wvert] (w7) at (5,2);
			\coordinate[wvert] (w8) at (7,2);

\draw[] (0,2)--(b5)--(w5)--(b6)--(w7)--(b7)--(w8)--(8,2)
(b5)--(w6)--(b6)
(w7)--(b8)--(w8)
(w6)--(2,0)
(b8)--(6,0)
(w5)--(b2)--(w2)--(b3)--(w4)--(b7)
(w4)--(b4)--(8,6)
(b4)--(w3)--(b3)
(0,6)--(w1)--(b2)
(w1)--(b1)--(w2)
(b1)--(2,8)
(w3)--(6,8)
;
\draw[<-,Dandelion,thick] (0,6)to[out=-27,in=207](4,6)to[out=27,in=180-27](8,6);
\draw[->,Green,thick] (2,8)to[out=-90-27,in=117](2,4)to[out=-90+27,in=90-27](2,0);
\draw[<-,MidnightBlue,thick] (2,8)to[out=-90+27,in=90-27](2,4)to[out=-90-27,in=90+27](2,0);
\draw[->,red,thick] (0,6)to[out=27,in=180-27](4,6)to[out=-27,in=180+27](8,6);

\draw[->,red,thick] (0,2)to[out=-27,in=207](4,2)to[out=27,in=180-27](8,2);
\draw[<-,MidnightBlue,thick] (6,8)to[out=-90-27,in=117](6,4)to[out=-90+27,in=90-27](6,0);
\draw[->,Green,thick] (6,8)to[out=-90+27,in=90-27](6,4)to[out=-90-27,in=90+27](6,0);
\draw[<-,Dandelion,thick] (0,2)to[out=27,in=180-27](4,2)to[out=-27,in=180+27](8,2);

		\node[](no) at (1.,0.5) {$\alpha_2$};
			
		\node[](no) at (7,0.5) {$\alpha_1$};
		
\node[](no) at (4,-1) {(a) square-octagon graph.};
\end{scope}
\begin{scope}[shift={(14,1)}]
\draw[red,thick] (-1,-1)--(1,-1);
\draw[Green,thick] (-1,1)--(-1,-1);
\draw[MidnightBlue,thick] (1,-1)--(1,1);
\draw[Dandelion,thick] (1,1)--(-1,1);
	\draw[fill=black] (0,0) circle (4pt);
		
		\draw[fill=black] (0,1) circle (2pt);
		\draw[fill=black] (1,0) circle (2pt);
		\draw[fill=black] (0,-1) circle (2pt);
			\draw[fill=black] (-1,0) circle (2pt);
			\draw[fill=black] (1,1) circle (2pt);
		\draw[fill=black] (1,-1) circle (2pt);
		\draw[fill=black] (-1,-1) circle (2pt);
			\draw[fill=black] (-1,1) circle (2pt);
   \node[](no) at (0,-2) {(b) Newton polygon.};
\end{scope}	

\begin{scope}[shift={(22,1)}]

\draw[red,thick,->] (0,0)--(0,1);
\draw[Green,thick,->] (0,0)--(1,0);
\draw[MidnightBlue,thick,->] (0,0)--(-1,0);
\draw[Dandelion,thick,->] (0,0)--(0,-1);
	\draw[fill=black] (0,0) circle (2pt);
		\node[](no) at (-1.5,0) {$\rho$};
  \node[](no) at (0,-2) {(c) normal fan.};
\end{scope}	
\end{tikzpicture}
\caption{A square-octagon graph, its four-sided Newton polygon, and normal rays. For the blue ray labeled $\rho$, the set $Z_\rho=\{\alpha_1,\alpha_2\}$ consists of the two blue zig-zag paths, and they are indexed so that the index increases along $\rho$ (cyclically).}\label{fig:sqoct}
\end{figure}
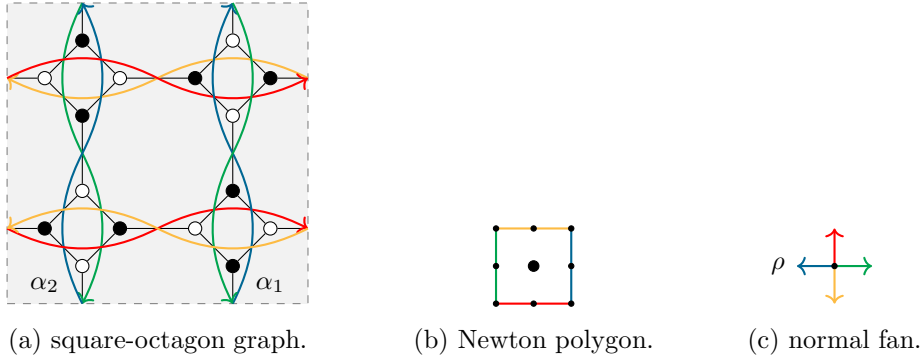

A cluster transformation $\Gamma \stackrel{\phi}{\rightsquigarrow} \Gamma$ lifts to an $H_1(\T,\Z)$-periodic sequence of elementary transformations and isotopies $\widetilde \Gamma \stackrel{\widetilde \phi}{\rightsquigarrow} \widetilde \Gamma$. Viewing $\widetilde\phi$ as a homotopy of zig-zag paths, each zig-zag path $\widetilde \alpha_i \in \widetilde Z_\rho$ is moved by the homotopy induced by $\widetilde \phi$ to the initial location of a parallel zig-zag path $\widetilde \alpha_{i+f(\rho)}$, where $f(\rho) \in \Z$. Parallel zig-zag paths can never cross during a cluster transformation, and therefore every zig-zag path in $\widetilde Z_\rho$ is translated by $f(\rho)$, that is, $f(\rho)$ depends only on $\rho$ and not on $i \in \Z$. Therefore, we get a function $f:\Sigma(1) \ra \Z$, which can be shown to satisfy 
\be \la{eq:f}
\sum_{\rho \in \Sigma(1)}f(\rho)=0.
\ee
We denote by $\Z^{\Sigma(1)}_0$ the subgroup of such functions. Informally, the function $f$ keeps track of how much each zig-zag path moves during the homotopy induced by $\widetilde \phi$, and the values of $f$ sum to zero because in each elementary transformation, for every zig-zag path moving in a certain direction, there is another one moving in the opposite direction. If we take $\widetilde \phi$ to be a translation of $\widetilde \Gamma$ by $H_1(\T,\Z)$, then it does not do anything on $\Gamma$. Therefore, we quotient by the subgroup of functions that arise from translations and consider $f$ to be an element of $\Z^{\Sigma(1)}_0/H_1(\T,\Z)$.

The map $\phi \mapsto f$ is a group homomorphism 
\[
\{\text{cluster transformations}\} \ra \Z^{\Sigma(1)}_0/H_1(\T,\Z),
\]
which was shown in \cite{Gi} to induce an isomorphism 
\begin{equation}
\label{eq:GI}
G_\Gamma \cong \Z^{\Sigma(1)}_0/H_1(\T,\Z)
\end{equation}
if $N$ contains at least one interior lattice point, and $G_\Gamma$ was shown to be a finite abelian group otherwise.
\begin{figure}[htpb]
\centering
		\begin{tikzpicture}[scale=0.7]
\begin{scope}[xscale=1.8,shift={(0,0)},rotate=90]
\draw[gray,dashed,-] (0,0) -- (0,1.24); \draw[gray,dashed,-] (8,0) -- (8,1.24); \fill[black!5] (0,0) rectangle (8,1.24);
\draw[red,->] (0,0.57)--(8,0.57);
\draw[MidnightBlue,->] (0,.67)--(8,.67);
\end{scope}
		
\begin{scope}[xscale=1.8,shift={(-2.78,0)},rotate=90]
  \fill[black!5] (0,0) rectangle (8,1.24);
\draw[gray,dashed,-] (0,0) -- (0,1.24); \draw[gray,dashed,-] (8,0) -- (8,1.24);
\node[](no) at (0.5,0.62) {$\vdots$};
\node[](no) at (7.5,0.62) {$\vdots$};
\draw[red,->] (0,1.055)--(8,1.055);
\draw[MidnightBlue,->] (0,.185)--(8,.185);

		\node[wvert] (w1) at (1,0.87) {};
		\node[wvert] (w2) at (3,0.87) {};
		\node[wvert] (w3) at (5,0.87) {};
		\node[wvert] (w4) at (7,0.87) {};
		\node[wvert] (w5) at (2,0) {};
		\node[wvert] (w6) at (4,0) {};
		\node[wvert] (w7) at (6,0) {};
		\node[bvert] (b1) at (2,1.24) {};
		\node[bvert] (b2) at (4,1.24) {};
        \node[bvert] (bb3) at (6,1.24) {};
		\node[bvert] (b4) at (1,0.37) {};
		\node[bvert] (b5) at (3,0.37) {};
		\node[bvert] (b6) at (5,0.37) {};
		\node[bvert] (b7) at (7,0.37) {};
		\draw[-] (w1) -- (b1) -- (w2) -- (b2) -- (w3) -- (bb3)--(w4);
		\draw[-] (b4) -- (w5) -- (b5) -- (w6) -- (b6)--(w7) -- (b7);
		\draw[-] (w1) -- (b4)
		(w2) -- (b5)
		(w3) --(b6)
		(w4) -- (b7);
		 \end{scope}
		 
		 \begin{scope}[xscale=1.8,shift={(2.78,0)},rotate=90]
  \fill[black!5] (0,0) rectangle (8,1.24);
  \draw[gray,dashed,-] (0,0) -- (0,1.24); \draw[gray,dashed,-] (8,0) -- (8,1.24);
\node[](no) at (0.5,0.62) {$\vdots$};
\node[](no) at (7.5,0.62) {$\vdots$};
\draw[MidnightBlue,->] (0,1.055)--(8,1.055);
\draw[red,->] (0,.185)--(8,.185);

		\node[wvert] (w1) at (1,0.87) {};
		\node[wvert] (w2) at (3,0.87) {};
		\node[wvert] (w3) at (5,0.87) {};
		\node[wvert] (w4) at (7,0.87) {};
		\node[wvert] (w5) at (2,0) {};
		\node[wvert] (w6) at (4,0) {};
		\node[wvert] (w7) at (6,0) {};
		\node[bvert] (b1) at (2,1.24) {};
		\node[bvert] (b2) at (4,1.24) {};
        \node[bvert] (bb3) at (6,1.24) {};
		\node[bvert] (b4) at (1,0.37) {};
		\node[bvert] (b5) at (3,0.37) {};
		\node[bvert] (b6) at (5,0.37) {};
		\node[bvert] (b7) at (7,0.37) {};
		\draw[-] (w1) -- (b1) -- (w2) -- (b2) -- (w3) -- (bb3)--(w4);
		\draw[-] (b4) -- (w5) -- (b5) -- (w6) -- (b6)--(w7) -- (b7);
		\draw[-] (w1) -- (b4)
		(w2) -- (b5)
		(w3) --(b6)
		(w4) -- (b7);
		 \end{scope}
	
	\end{tikzpicture}
\caption{The geometric $R$-matrix transformation and the homotopy of zig-zag paths. The red and blue zig-zag paths on either side of the cyclic chain of hexagons interchange positions, whereas all other zig-zag paths are left unchanged.}
\label{fig:Rmatrixsequence}
\end{figure}
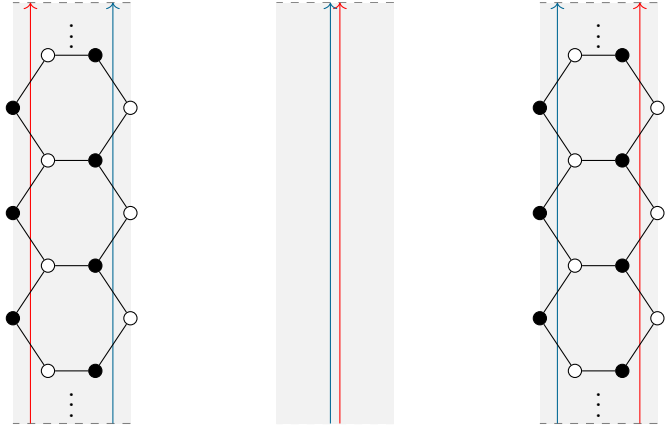

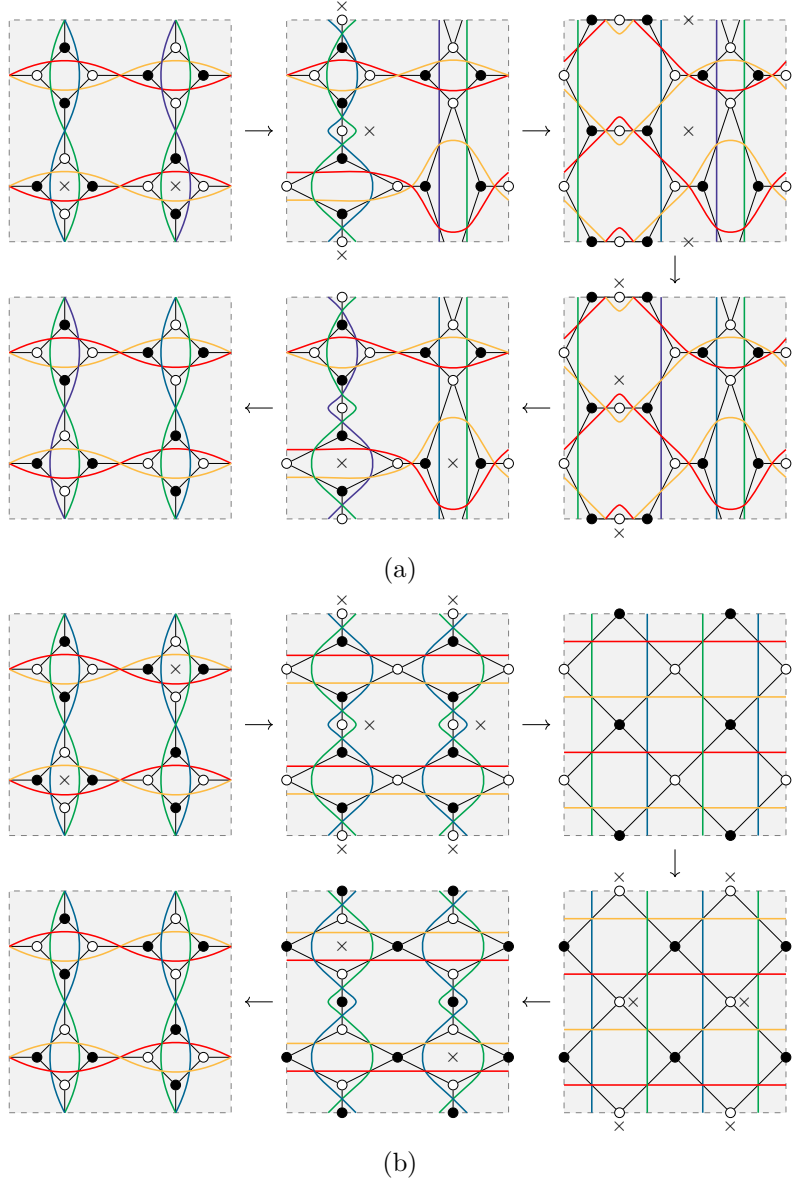
\begin{figure}
\centering
\def\twd{0.7\textwidth}
  \def\scl{0.5}
  \begin{tabular}{c}
  \resizebox{\twd}{!}{ 
	\begin{tikzpicture}  [scale=\scl]
	\begin{scope}
	  \fill[black!5] (0,0) rectangle (8,8); 		\draw[dashed, gray] (0,0) rectangle (8,8);
\coordinate[bvert] (b1) at (2,7);
			\coordinate[bvert] (b2) at (2,5);
			\coordinate[bvert] (b3) at (5,6);
			\coordinate[bvert] (b4) at (7,6);
			\coordinate[bvert] (b5) at (1,2);
			\coordinate[bvert] (b6) at (3,2);
			\coordinate[bvert] (b7) at (6,3);
			\coordinate[bvert] (b8) at (6,1);
			\coordinate[wvert] (w1) at (1,6);
			\coordinate[wvert] (w2) at (3,6);
			\coordinate[wvert] (w3) at (6,7);
			\coordinate[wvert] (w4) at (6,5);
			\coordinate[wvert] (w5) at (2,3);
			\coordinate[wvert] (w6) at (2,1);
			\coordinate[wvert] (w7) at (5,2);
			\coordinate[wvert] (w8) at (7,2);

\draw[] (0,2)--(b5)--(w5)--(b6)--(w7)--(b7)--(w8)--(8,2)
(b5)--(w6)--(b6)
(w7)--(b8)--(w8)
(w6)--(2,0)
(b8)--(6,0)
(w5)--(b2)--(w2)--(b3)--(w4)--(b7)
(w4)--(b4)--(8,6)
(b4)--(w3)--(b3)
(0,6)--(w1)--(b2)
(w1)--(b1)--(w2)
(b1)--(2,8)
(w3)--(6,8)
;
\draw[Dandelion,thick] (0,6)to[out=-27,in=207](4,6)to[out=27,in=180-27](8,6);
\draw[Green,thick] (2,8)to[out=-90-27,in=117](2,4)to[out=-90+27,in=90-27](2,0);
\draw[MidnightBlue,thick] (2,8)to[out=-90+27,in=90-27](2,4)to[out=-90-27,in=90+27](2,0);
\draw[red,thick] (0,6)to[out=27,in=180-27](4,6)to[out=-27,in=180+27](8,6);

\draw[red,thick] (0,2)to[out=-27,in=207](4,2)to[out=27,in=180-27](8,2);
\draw[BlueViolet,thick] (6,8)to[out=-90-27,in=117](6,4)to[out=-90+27,in=90-27](6,0);
\draw[Green,thick] (6,8)to[out=-90+27,in=90-27](6,4)to[out=-90-27,in=90+27](6,0);
\draw[Dandelion,thick] (0,2)to[out=27,in=180-27](4,2)to[out=-27,in=180+27](8,2);

\node[] (no) at (2,2) {$\times$};
\node[] (no) at (6,2) {$\times$};

\end{scope}

\begin{scope}[shift={(10,0)}]
	  \fill[black!5] (0,0) rectangle (8,8); 		\draw[dashed, gray] (0,0) rectangle (8,8);
\coordinate[bvert] (b1) at (2,7);
			\coordinate[bvert] (b2) at (2,5);
			\coordinate[bvert] (b3) at (5,6);
			\coordinate[bvert] (b4) at (7,6);
			\coordinate[bvert] (b5) at (2,3);
			\coordinate[bvert] (b6) at (2,1);
			\coordinate[bvert] (b7) at (5,2);
			\coordinate[bvert] (b8) at (7,2);
			\coordinate[wvert] (w1) at (1,6);
			\coordinate[wvert] (w2) at (3,6);
			\coordinate[wvert] (w3) at (6,7);
			\coordinate[wvert] (w4) at (6,5);
			\coordinate[wvert] (w5) at (2,4);
			\coordinate[wvert] (w6) at (2,0);
			\coordinate[wvert] (w7) at (4,2);
			\coordinate[wvert] (w8) at (8,2);
			\coordinate[wvert] (w9) at (0,2);
			\coordinate[wvert] (w10) at (2,8);

\draw[] 
(w9)--(b5)--(w7)--(b6)--(w9)
(b6)--(w6)
(b5)--(w5)
(w4)--(b8)--(w8)
(b7)--(w7)

(w5)--(b2)--(w2)--(b3)--(w4)--(b7)
(w4)--(b4)--(8,6)
(b4)--(w3)--(b3)
(0,6)--(w1)--(b2)
(w1)--(b1)--(w2)
(b1)--(w10)
(6.33,8)--(w3)--(5.67,8)
(6.33,0)--(b8)
(5.67,0)--(b7)
;
\draw [MidnightBlue,thick] plot [smooth, tension=0.5] coordinates {(1.5,8) (2,7.5)(2.5,6.5) (2.5,5.5) (2,4.5) (1.5,4) (2,3.5) (3,2.5) (3,1.5) (2,0.5) (1.5,0)};

\draw [BlueViolet,thick] plot [smooth, tension=0.5] coordinates {(5.5,0)(5.5,8)};
\draw [Green,thick] plot [smooth, tension=0.5] coordinates {(6.5,0)(6.5,8)};

\draw [Dandelion,thick] plot [smooth, tension=0.5] coordinates {(0,1.5)(1,1.5)(3,1.5) (4.5,2)(5.5,3.5)(6.5,3.5)(7.5,2)(8,1.5)};
\draw [red,thick] plot [smooth, tension=0.5] coordinates {(0,2.5)(1,2.5)(3,2.5) (4.5,2)(5.5,.5)(6.5,.5)(7.5,2)(8,2.5)};

\draw [red,thick] plot [smooth, tension=0.5] coordinates {(0,6) (1.5,6.5) (2.5,6.5) (4,6)(5.5,5.5) (6.5,5.5) (8,6)};
\draw [Dandelion,thick] plot [smooth, tension=0.5] coordinates {(0,6) (1.5,5.5) (2.5,5.5) (4,6)(5.5,6.5) (6.5,6.5) (8,6)};

\draw [Green,thick] plot [smooth, tension=0.5] coordinates {(2.5,8) (2,7.5)(1.5,6.5) (1.5,5.5) (2,4.5) (2.5,4) (2,3.5) (1,2.5) (1,1.5) (2,0.5) (2.5,0)};

\node[] (no) at (3,4) {$\times$};
\node[] (no) at (2,-0.5) {$\times$};
\node[] (no) at (2,8.5) {$\times$};
\end{scope}
\begin{scope}[shift={(20,0)}]
	  \fill[black!5] (0,0) rectangle (8,8); 		\draw[dashed, gray] (0,0) rectangle (8,8);
\coordinate[bvert] (b1) at (1,8);
			\coordinate[bvert] (b2) at (3,8);
			\coordinate[bvert] (b3) at (5,6);
			\coordinate[bvert] (b4) at (7,6);
			\coordinate[bvert] (b5) at (1,4);
			\coordinate[bvert] (b6) at (3,4);
			\coordinate[bvert] (b7) at (5,2);
			\coordinate[bvert] (b8) at (7,2);
				\coordinate[bvert] (b9) at (1,0);
			\coordinate[bvert] (b10) at (3,0);
			\coordinate[wvert] (w1) at (0,6);
			\coordinate[wvert] (w2) at (4,6);
			\coordinate[wvert] (w3) at (6,7);
			\coordinate[wvert] (w4) at (6,5);
			\coordinate[wvert] (w5) at (2,4);
			\coordinate[wvert] (w6) at (2,0);
			\coordinate[wvert] (w7) at (4,2);
			\coordinate[wvert] (w8) at (8,2);
			\coordinate[wvert] (w9) at (0,2);
			\coordinate[wvert] (w10) at (2,8);
			\coordinate[wvert] (w11) at (8,6);

\draw[] 
(w1)--(b1)--(w10)--(b2)--(w2)--(b6)--(w5)--(b5)--(w1)
(b5)--(w9)--(b9)--(w6)--(b10)--(w7)--(b6)

(w4)--(b8)--(w8)
(b7)--(w7)

(w2)--(b3)--(w4)--(b7)
(w4)--(b4)--(w11)
(b4)--(w3)--(b3)

(6.33,8)--(w3)--(5.67,8)
(6.33,0)--(b8)
(5.67,0)--(b7)
;
\draw [MidnightBlue,thick] plot [smooth, tension=0.5] coordinates {(3.5,0) (3.5,8)};

\draw [BlueViolet,thick] plot [smooth, tension=0.5] coordinates {(5.5,0)(5.5,8)};
\draw [Green,thick] plot [smooth, tension=0.5] coordinates {(6.5,0)(6.5,8)};

\draw [Dandelion,thick] plot [smooth, tension=0.5] coordinates {(0,1.5) (0.5,1) (1.5,0)};
\draw [Dandelion,thick] plot [smooth, tension=0.5] coordinates { (2.5,0) (3.5,1) (4.5,2)(5.5,3.5)(6.5,3.5)(7.5,2)(8,1.5)};
\draw [red,thick] plot [smooth, tension=0.5] coordinates {(0,2.5) (0.5,3) (1.5,4) (2,4.5) (2.5,4) (3.5,3)(4.5,2)(5.5,.5)(6.5,.5)(7.5,2)(8,2.5)};

\draw [red,thick] plot [smooth, tension=0.5] coordinates {(0,6.5) (1.5,8)};
\draw [red,thick] plot [smooth, tension=0.5] coordinates {(2.5,0) (2,0.5) (1.5,0)};
\draw [red,thick] plot [smooth, tension=0.5] coordinates {
(2.5,8) (3.5,7) (4.5,6)(5.5,5.5) (6.5,5.5)(7.5,6) (8,6.5)};
\draw [Dandelion,thick] plot [smooth, tension=0.5] coordinates {(0,5.5) (.5,5) (1.5,4) (2,3.5)(2.5,4) (3.5,5) (4.5,6)(5.5,6.5) (6.5,6.5) (7.5,6) (8,5.5)};
\draw [Dandelion,thick] plot [smooth, tension=0.5] coordinates {(2.5,8) (2,8-0.5) (1.5,8)};
\draw [Green,thick] plot [smooth, tension=0.5] coordinates {(0.5,0)(0.5,8)};
\node[] (no) at (4.5,0) {$\times$};
\node[] (no) at (4.5,8) {$\times$};
\node[] (no) at (4.5,4) {$\times$};

\end{scope}

	\begin{scope}[shift={(0,-10)}]
	  \fill[black!5] (0,0) rectangle (8,8); 		\draw[dashed, gray] (0,0) rectangle (8,8);
\coordinate[bvert] (b1) at (2,7);
			\coordinate[bvert] (b2) at (2,5);
			\coordinate[bvert] (b3) at (5,6);
			\coordinate[bvert] (b4) at (7,6);
			\coordinate[bvert] (b5) at (1,2);
			\coordinate[bvert] (b6) at (3,2);
			\coordinate[bvert] (b7) at (6,3);
			\coordinate[bvert] (b8) at (6,1);
			\coordinate[wvert] (w1) at (1,6);
			\coordinate[wvert] (w2) at (3,6);
			\coordinate[wvert] (w3) at (6,7);
			\coordinate[wvert] (w4) at (6,5);
			\coordinate[wvert] (w5) at (2,3);
			\coordinate[wvert] (w6) at (2,1);
			\coordinate[wvert] (w7) at (5,2);
			\coordinate[wvert] (w8) at (7,2);

\draw[] (0,2)--(b5)--(w5)--(b6)--(w7)--(b7)--(w8)--(8,2)
(b5)--(w6)--(b6)
(w7)--(b8)--(w8)
(w6)--(2,0)
(b8)--(6,0)
(w5)--(b2)--(w2)--(b3)--(w4)--(b7)
(w4)--(b4)--(8,6)
(b4)--(w3)--(b3)
(0,6)--(w1)--(b2)
(w1)--(b1)--(w2)
(b1)--(2,8)
(w3)--(6,8)
;
\draw[Dandelion,thick] (0,6)to[out=-27,in=207](4,6)to[out=27,in=180-27](8,6);
\draw[Green,thick] (2,8)to[out=-90-27,in=117](2,4)to[out=-90+27,in=90-27](2,0);
\draw[BlueViolet,thick] (2,8)to[out=-90+27,in=90-27](2,4)to[out=-90-27,in=90+27](2,0);
\draw[red,thick] (0,6)to[out=27,in=180-27](4,6)to[out=-27,in=180+27](8,6);

\draw[red,thick] (0,2)to[out=-27,in=207](4,2)to[out=27,in=180-27](8,2);
\draw[MidnightBlue,thick] (6,8)to[out=-90-27,in=117](6,4)to[out=-90+27,in=90-27](6,0);
\draw[Green,thick] (6,8)to[out=-90+27,in=90-27](6,4)to[out=-90-27,in=90+27](6,0);
\draw[Dandelion,thick] (0,2)to[out=27,in=180-27](4,2)to[out=-27,in=180+27](8,2);

\end{scope}

\begin{scope}[shift={(10,-10)}]
	  \fill[black!5] (0,0) rectangle (8,8); 		\draw[dashed, gray] (0,0) rectangle (8,8);
\coordinate[bvert] (b1) at (2,7);
			\coordinate[bvert] (b2) at (2,5);
			\coordinate[bvert] (b3) at (5,6);
			\coordinate[bvert] (b4) at (7,6);
			\coordinate[bvert] (b5) at (2,3);
			\coordinate[bvert] (b6) at (2,1);
			\coordinate[bvert] (b7) at (5,2);
			\coordinate[bvert] (b8) at (7,2);
			\coordinate[wvert] (w1) at (1,6);
			\coordinate[wvert] (w2) at (3,6);
			\coordinate[wvert] (w3) at (6,7);
			\coordinate[wvert] (w4) at (6,5);
			\coordinate[wvert] (w5) at (2,4);
			\coordinate[wvert] (w6) at (2,0);
			\coordinate[wvert] (w7) at (4,2);
			\coordinate[wvert] (w8) at (8,2);
			\coordinate[wvert] (w9) at (0,2);
			\coordinate[wvert] (w10) at (2,8);

\draw[] 
(w9)--(b5)--(w7)--(b6)--(w9)
(b6)--(w6)
(b5)--(w5)
(w4)--(b8)--(w8)
(b7)--(w7)

(w5)--(b2)--(w2)--(b3)--(w4)--(b7)
(w4)--(b4)--(8,6)
(b4)--(w3)--(b3)
(0,6)--(w1)--(b2)
(w1)--(b1)--(w2)
(b1)--(w10)
(6.33,8)--(w3)--(5.67,8)
(6.33,0)--(b8)
(5.67,0)--(b7)
;
\draw [BlueViolet,thick] plot [smooth, tension=0.5] coordinates {(1.5,8) (2,7.5)(2.5,6.5) (2.5,5.5) (2,4.5) (1.5,4) (2,3.5) (3,2.5) (3,1.5) (2,0.5) (1.5,0)};

\draw [MidnightBlue,thick] plot [smooth, tension=0.5] coordinates {(5.5,0)(5.5,8)};
\draw [Green,thick] plot [smooth, tension=0.5] coordinates {(6.5,0)(6.5,8)};

\draw [Dandelion,thick] plot [smooth, tension=0.5] coordinates {(0,1.5)(1,1.5)(3,1.5) (4.5,2)(5.5,3.5)(6.5,3.5)(7.5,2)(8,1.5)};
\draw [red,thick] plot [smooth, tension=0.5] coordinates {(0,2.5)(1,2.5)(3,2.5) (4.5,2)(5.5,.5)(6.5,.5)(7.5,2)(8,2.5)};

\draw [red,thick] plot [smooth, tension=0.5] coordinates {(0,6) (1.5,6.5) (2.5,6.5) (4,6)(5.5,5.5) (6.5,5.5) (8,6)};
\draw [Dandelion,thick] plot [smooth, tension=0.5] coordinates {(0,6) (1.5,5.5) (2.5,5.5) (4,6)(5.5,6.5) (6.5,6.5) (8,6)};

\draw [Green,thick] plot [smooth, tension=0.5] coordinates {(2.5,8) (2,7.5)(1.5,6.5) (1.5,5.5) (2,4.5) (2.5,4) (2,3.5) (1,2.5) (1,1.5) (2,0.5) (2.5,0)};
\node[] (no) at (2,2) {$\times$};
\node[] (no) at (6,2) {$\times$};

\end{scope}
\begin{scope}[shift={(20,-10)}]
	  \fill[black!5] (0,0) rectangle (8,8); 		\draw[dashed, gray] (0,0) rectangle (8,8);
\coordinate[bvert] (b1) at (1,8);
			\coordinate[bvert] (b2) at (3,8);
			\coordinate[bvert] (b3) at (5,6);
			\coordinate[bvert] (b4) at (7,6);
			\coordinate[bvert] (b5) at (1,4);
			\coordinate[bvert] (b6) at (3,4);
			\coordinate[bvert] (b7) at (5,2);
			\coordinate[bvert] (b8) at (7,2);
				\coordinate[bvert] (b9) at (1,0);
			\coordinate[bvert] (b10) at (3,0);
			\coordinate[wvert] (w1) at (0,6);
			\coordinate[wvert] (w2) at (4,6);
			\coordinate[wvert] (w3) at (6,7);
			\coordinate[wvert] (w4) at (6,5);
			\coordinate[wvert] (w5) at (2,4);
			\coordinate[wvert] (w6) at (2,0);
			\coordinate[wvert] (w7) at (4,2);
			\coordinate[wvert] (w8) at (8,2);
			\coordinate[wvert] (w9) at (0,2);
			\coordinate[wvert] (w10) at (2,8);
			\coordinate[wvert] (w11) at (8,6);

\draw[] 
(w1)--(b1)--(w10)--(b2)--(w2)--(b6)--(w5)--(b5)--(w1)
(b5)--(w9)--(b9)--(w6)--(b10)--(w7)--(b6)

(w4)--(b8)--(w8)
(b7)--(w7)

(w2)--(b3)--(w4)--(b7)
(w4)--(b4)--(w11)
(b4)--(w3)--(b3)

(6.33,8)--(w3)--(5.67,8)
(6.33,0)--(b8)
(5.67,0)--(b7)
;
\draw [BlueViolet,thick] plot [smooth, tension=0.5] coordinates {(3.5,0) (3.5,8)};

\draw [MidnightBlue,thick] plot [smooth, tension=0.5] coordinates {(5.5,0)(5.5,8)};
\draw [Green,thick] plot [smooth, tension=0.5] coordinates {(6.5,0)(6.5,8)};

\draw [Dandelion,thick] plot [smooth, tension=0.5] coordinates {(0,1.5) (0.5,1) (1.5,0)};
\draw [Dandelion,thick] plot [smooth, tension=0.5] coordinates { (2.5,0) (3.5,1) (4.5,2)(5.5,3.5)(6.5,3.5)(7.5,2)(8,1.5)};
\draw [red,thick] plot [smooth, tension=0.5] coordinates {(0,2.5) (0.5,3) (1.5,4) (2,4.5) (2.5,4) (3.5,3)(4.5,2)(5.5,.5)(6.5,.5)(7.5,2)(8,2.5)};

\draw [red,thick] plot [smooth, tension=0.5] coordinates {(0,6.5) (1.5,8)};
\draw [red,thick] plot [smooth, tension=0.5] coordinates {(2.5,0) (2,0.5) (1.5,0)};
\draw [Dandelion,thick] plot [smooth, tension=0.5] coordinates {(2.5,8) (2,8-0.5) (1.5,8)};
\draw [red,thick] plot [smooth, tension=0.5] coordinates {
(2.5,8) (3.5,7) (4.5,6)(5.5,5.5) (6.5,5.5)(7.5,6) (8,6.5)};
\draw [Dandelion,thick] plot [smooth, tension=0.5] coordinates {(0,5.5) (.5,5) (1.5,4) (2,3.5)(2.5,4) (3.5,5) (4.5,6)(5.5,6.5) (6.5,6.5) (7.5,6) (8,5.5)};

\draw [Green,thick] plot [smooth, tension=0.5] coordinates {(0.5,0)(0.5,8)};
\node[] (no) at (2,8.5) {$\times$};
\node[] (no) at (2,-0.5) {$\times$};
\node[] (no) at (2,5) {$\times$};

\end{scope}

\draw[->] (8.5,4)--(9.5,4);
\draw[->] (18.5,4)--(19.5,4);
\draw[<-] (8.5,-6)--(9.5,-6);
\draw[<-] (18.5,-6)--(19.5,-6);
\draw[->] (24,-0.5)--(24,-1.5);

\end{tikzpicture}
}\\
(a)\\
  \resizebox{\twd}{!}{
\begin{tikzpicture}[scale=0.5]
	\begin{scope}
	  \fill[black!5] (0,0) rectangle (8,8); 		\draw[dashed, gray] (0,0) rectangle (8,8);
\coordinate[bvert] (b1) at (2,7);
			\coordinate[bvert] (b2) at (2,5);
			\coordinate[bvert] (b3) at (5,6);
			\coordinate[bvert] (b4) at (7,6);
			\coordinate[bvert] (b5) at (1,2);
			\coordinate[bvert] (b6) at (3,2);
			\coordinate[bvert] (b7) at (6,3);
			\coordinate[bvert] (b8) at (6,1);
			\coordinate[wvert] (w1) at (1,6);
			\coordinate[wvert] (w2) at (3,6);
			\coordinate[wvert] (w3) at (6,7);
			\coordinate[wvert] (w4) at (6,5);
			\coordinate[wvert] (w5) at (2,3);
			\coordinate[wvert] (w6) at (2,1);
			\coordinate[wvert] (w7) at (5,2);
			\coordinate[wvert] (w8) at (7,2);

\draw[] (0,2)--(b5)--(w5)--(b6)--(w7)--(b7)--(w8)--(8,2)
(b5)--(w6)--(b6)
(w7)--(b8)--(w8)
(w6)--(2,0)
(b8)--(6,0)
(w5)--(b2)--(w2)--(b3)--(w4)--(b7)
(w4)--(b4)--(8,6)
(b4)--(w3)--(b3)
(0,6)--(w1)--(b2)
(w1)--(b1)--(w2)
(b1)--(2,8)
(w3)--(6,8)
;
\draw[Dandelion,thick] (0,6)to[out=-27,in=207](4,6)to[out=27,in=180-27](8,6);
\draw[Green,thick] (2,8)to[out=-90-27,in=117](2,4)to[out=-90+27,in=90-27](2,0);
\draw[MidnightBlue,thick] (2,8)to[out=-90+27,in=90-27](2,4)to[out=-90-27,in=90+27](2,0);
\draw[red,thick] (0,6)to[out=27,in=180-27](4,6)to[out=-27,in=180+27](8,6);

\draw[red,thick] (0,2)to[out=-27,in=207](4,2)to[out=27,in=180-27](8,2);
\draw[MidnightBlue,thick] (6,8)to[out=-90-27,in=117](6,4)to[out=-90+27,in=90-27](6,0);
\draw[Green,thick] (6,8)to[out=-90+27,in=90-27](6,4)to[out=-90-27,in=90+27](6,0);
\draw[Dandelion,thick] (0,2)to[out=27,in=180-27](4,2)to[out=-27,in=180+27](8,2);

\node[] (no) at (2,2) {$\times$};
\node[] (no) at (6,6) {$\times$};

\end{scope}

\begin{scope}[shift={(10,0)}]
	  \fill[black!5] (0,0) rectangle (8,8); 		\draw[dashed, gray] (0,0) rectangle (8,8);
   
\coordinate[bvert] (b1) at (2,7);
			\coordinate[bvert] (b2) at (2,5);
			\coordinate[bvert] (b3) at (6,7);
			\coordinate[bvert] (b4) at (6,5);
   
			\coordinate[bvert] (b5) at (2,3);
			\coordinate[bvert] (b6) at (2,1);
   
			\coordinate[bvert] (b7) at (6,3);
			\coordinate[bvert] (b8) at (6,1);
   
			\coordinate[wvert] (w1) at (0,6);
			\coordinate[wvert] (w2) at (4,6);
			\coordinate[wvert] (w3) at (8,6);

   \coordinate[wvert] (w4) at (0,2);
			\coordinate[wvert] (w5) at (4,2);
			\coordinate[wvert] (w6) at (8,2);

  \coordinate[wvert] (w7) at (2,8);
			\coordinate[wvert] (w8) at (6,8);

     \coordinate[wvert] (w9) at (2,0);
			\coordinate[wvert] (w10) at (6,0);
        \coordinate[wvert] (w11) at (2,4);
			\coordinate[wvert] (w12) at (6,4);
   \draw[-] (w9)--(b6)
   (w10)--(b8)
   (b1)--(w7)
   (b3)--(w8) 
   (b2)--(w11)--(b5)
   (b7)--(w12)--(b4)
   (w1)--(b2)--(w2)--(b1)--(w1)
   (w2)--(b4)--(w3)--(b3)--(w2)
   (w4)--(b6)--(w5)--(b5)--(w4)
   (w5)--(b8)--(w6)--(b7)--(w5)
   ;

\draw [MidnightBlue,thick] plot [smooth, tension=0.5] coordinates {(1.5,8) (2,7.5)(3,6.5) (3,5.5) (2,4.5) (1.5,4) (2,3.5) (3,2.5) (3,1.5) (2,0.5) (1.5,0)};

\draw [Green,thick] plot [smooth, tension=0.5] coordinates {(1.5+4,8) (2+4,7.5)(3+4,6.5) (3+4,5.5) (2+4,4.5) (1.5+4,4) (2+4,3.5) (3+4,2.5) (3+4,1.5) (2+4,0.5) (1.5+4,0)};
\draw [MidnightBlue,thick] plot [smooth, tension=0.5] coordinates {(4+2.5,8) (4+2,7.5)(4+1,6.5) (4+1,5.5) (4+2,4.5) (4+2.5,4) (4+2,3.5) (4+1,2.5) (4+1,1.5) (4+2,0.5) (4+2.5,0)};

\draw [Dandelion,thick] plot [smooth, tension=0.5] coordinates {(0,1.5)(8,1.5)};
\draw [red,thick] plot [smooth, tension=0.5] coordinates {(0,2.5)(8,2.5)};

\draw [red,thick] plot [smooth, tension=0.5] coordinates {(0,6.5)(8,6.5)};
\draw [Dandelion,thick] plot [smooth, tension=0.5] coordinates {(0,5.5)(8,5.5)};

\draw [Green,thick] plot [smooth, tension=0.5] coordinates {(2.5,8) (2,7.5)(1,6.5) (1,5.5) (2,4.5) (2.5,4) (2,3.5) (1,2.5) (1,1.5) (2,0.5) (2.5,0)};
\node[] (no) at (3,4) {$\times$};
\node[] (no) at (2,-0.5) {$\times$};
\node[] (no) at (2,8.5) {$\times$};

\node[] (no) at (7,4) {$\times$};
\node[] (no) at (6,-0.5) {$\times$};
\node[] (no) at (6,8.5) {$\times$};
\end{scope}

\begin{scope}[shift={(20,0)}]
	  \fill[black!5] (0,0) rectangle (8,8); 		\draw[dashed, gray] (0,0) rectangle (8,8);
   
\coordinate[bvert] (b1) at (2,8);
			\coordinate[bvert] (b2) at (2,4);
			\coordinate[bvert] (b3) at (6,8);
			\coordinate[bvert] (b4) at (6,4);
   
			\coordinate[] (b5) at (2,4);
			\coordinate[bvert] (b6) at (2,0);
   
			\coordinate[] (b7) at (6,4);
			\coordinate[bvert] (b8) at (6,0);
   
			\coordinate[wvert] (w1) at (0,6);
			\coordinate[wvert] (w2) at (4,6);
			\coordinate[wvert] (w3) at (8,6);

   \coordinate[wvert] (w4) at (0,2);
			\coordinate[wvert] (w5) at (4,2);
			\coordinate[wvert] (w6) at (8,2);

   \draw[-] 
   (w1)--(b2)--(w2)--(b1)--(w1)
   (w2)--(b4)--(w3)--(b3)--(w2)
   (w4)--(b6)--(w5)--(b5)--(w4)
   (w5)--(b8)--(w6)--(b7)--(w5)
   ;

\draw [MidnightBlue,thick] plot [smooth, tension=0.5] coordinates {(3,0)(3,8)};

\draw [Green,thick] plot [smooth, tension=0.5] coordinates {(1,0)(1,8)};
\draw [MidnightBlue,thick] plot [smooth, tension=0.5] coordinates {(7,0)(7,8)};

\draw [Dandelion,thick] plot [smooth, tension=0.5] coordinates {(0,1)(8,1)};
\draw [red,thick] plot [smooth, tension=0.5] coordinates {(0,3)(8,3)};

\draw [red,thick] plot [smooth, tension=0.5] coordinates {(0,7)(8,7)};
\draw [Dandelion,thick] plot [smooth, tension=0.5] coordinates {(0,5)(8,5)};

\draw [Green,thick] plot [smooth, tension=0.5] coordinates {(5,0)(5,8)};

\end{scope}

	\begin{scope}[shift={(0,-10)}]
	  \fill[black!5] (0,0) rectangle (8,8); 		\draw[dashed, gray] (0,0) rectangle (8,8);
\coordinate[bvert] (b1) at (2,7);
			\coordinate[bvert] (b2) at (2,5);
			\coordinate[bvert] (b3) at (5,6);
			\coordinate[bvert] (b4) at (7,6);
			\coordinate[bvert] (b5) at (1,2);
			\coordinate[bvert] (b6) at (3,2);
			\coordinate[bvert] (b7) at (6,3);
			\coordinate[bvert] (b8) at (6,1);
			\coordinate[wvert] (w1) at (1,6);
			\coordinate[wvert] (w2) at (3,6);
			\coordinate[wvert] (w3) at (6,7);
			\coordinate[wvert] (w4) at (6,5);
			\coordinate[wvert] (w5) at (2,3);
			\coordinate[wvert] (w6) at (2,1);
			\coordinate[wvert] (w7) at (5,2);
			\coordinate[wvert] (w8) at (7,2);

\draw[] (0,2)--(b5)--(w5)--(b6)--(w7)--(b7)--(w8)--(8,2)
(b5)--(w6)--(b6)
(w7)--(b8)--(w8)
(w6)--(2,0)
(b8)--(6,0)
(w5)--(b2)--(w2)--(b3)--(w4)--(b7)
(w4)--(b4)--(8,6)
(b4)--(w3)--(b3)
(0,6)--(w1)--(b2)
(w1)--(b1)--(w2)
(b1)--(2,8)
(w3)--(6,8)
;
\draw[Dandelion,thick] (0,6)to[out=-27,in=207](4,6)to[out=27,in=180-27](8,6);
\draw[Green,thick] (2,8)to[out=-90-27,in=117](2,4)to[out=-90+27,in=90-27](2,0);
\draw[MidnightBlue,thick] (2,8)to[out=-90+27,in=90-27](2,4)to[out=-90-27,in=90+27](2,0);
\draw[red,thick] (0,6)to[out=27,in=180-27](4,6)to[out=-27,in=180+27](8,6);

\draw[red,thick] (0,2)to[out=-27,in=207](4,2)to[out=27,in=180-27](8,2);
\draw[MidnightBlue,thick] (6,8)to[out=-90-27,in=117](6,4)to[out=-90+27,in=90-27](6,0);
\draw[Green,thick] (6,8)to[out=-90+27,in=90-27](6,4)to[out=-90-27,in=90+27](6,0);
\draw[Dandelion,thick] (0,2)to[out=27,in=180-27](4,2)to[out=-27,in=180+27](8,2);

\end{scope}
\begin{scope}[shift={(10,-10)}]

	  \fill[black!5] (0,0) rectangle (8,8); 		\draw[dashed, gray] (0,0) rectangle (8,8);
   
\coordinate[wvert] (b1) at (2,7);
			\coordinate[wvert] (b2) at (2,5);
			\coordinate[wvert] (b3) at (6,7);
			\coordinate[wvert] (b4) at (6,5);
   
			\coordinate[wvert] (b5) at (2,3);
			\coordinate[wvert] (b6) at (2,1);
   
			\coordinate[wvert] (b7) at (6,3);
			\coordinate[wvert] (b8) at (6,1);
   
			\coordinate[bvert] (w1) at (0,6);
			\coordinate[bvert] (w2) at (4,6);
			\coordinate[bvert] (w3) at (8,6);

   \coordinate[bvert] (w4) at (0,2);
			\coordinate[bvert] (w5) at (4,2);
			\coordinate[bvert] (w6) at (8,2);

  \coordinate[bvert] (w7) at (2,8);
			\coordinate[bvert] (w8) at (6,8);

     \coordinate[bvert] (w9) at (2,0);
			\coordinate[bvert] (w10) at (6,0);
        \coordinate[bvert] (w11) at (2,4);
			\coordinate[bvert] (w12) at (6,4);
   \draw[-] (w9)--(b6)
   (w10)--(b8)
   (b1)--(w7)
   (b3)--(w8) 
   (b2)--(w11)--(b5)
   (b7)--(w12)--(b4)
   (w1)--(b2)--(w2)--(b1)--(w1)
   (w2)--(b4)--(w3)--(b3)--(w2)
   (w4)--(b6)--(w5)--(b5)--(w4)
   (w5)--(b8)--(w6)--(b7)--(w5)
   ;

\draw [Green,thick] plot [smooth, tension=0.5] coordinates {(1.5,8) (2,7.5)(3,6.5) (3,5.5) (2,4.5) (1.5,4) (2,3.5) (3,2.5) (3,1.5) (2,0.5) (1.5,0)};

\draw [Green,thick] plot [smooth, tension=0.5] coordinates {(1.5+4,8) (2+4,7.5)(3+4,6.5) (3+4,5.5) (2+4,4.5) (1.5+4,4) (2+4,3.5) (3+4,2.5) (3+4,1.5) (2+4,0.5) (1.5+4,0)};
\draw [MidnightBlue,thick] plot [smooth, tension=0.5] coordinates {(4+2.5,8) (4+2,7.5)(4+1,6.5) (4+1,5.5) (4+2,4.5) (4+2.5,4) (4+2,3.5) (4+1,2.5) (4+1,1.5) (4+2,0.5) (4+2.5,0)};

\draw [red,thick] plot [smooth, tension=0.5] coordinates {(0,1.5)(8,1.5)};
\draw [Dandelion,thick] plot [smooth, tension=0.5] coordinates {(0,2.5)(8,2.5)};

\draw [Dandelion,thick] plot [smooth, tension=0.5] coordinates {(0,6.5)(8,6.5)};
\draw [red,thick] plot [smooth, tension=0.5] coordinates {(0,5.5)(8,5.5)};

\draw [MidnightBlue,thick] plot [smooth, tension=0.5] coordinates {(2.5,8) (2,7.5)(1,6.5) (1,5.5) (2,4.5) (2.5,4) (2,3.5) (1,2.5) (1,1.5) (2,0.5) (2.5,0)};
\node[] (no) at (2,6) {$\times$};
\node[] (no) at (6,2) {$\times$};

\end{scope}

\begin{scope}[shift={(20,-10)}]
	  \fill[black!5] (0,0) rectangle (8,8); 		\draw[dashed, gray] (0,0) rectangle (8,8);
   
\coordinate[wvert] (b1) at (2,8);
			\coordinate[wvert] (b2) at (2,4);
			\coordinate[wvert] (b3) at (6,8);
			\coordinate[wvert] (b4) at (6,4);
   		\coordinate[wvert] (b6) at (2,0);
			\coordinate[wvert] (b8) at (6,0);
   
			\coordinate[bvert] (w1) at (0,6);
			\coordinate[bvert] (w2) at (4,6);
			\coordinate[bvert] (w3) at (8,6);

   \coordinate[bvert] (w4) at (0,2);
			\coordinate[bvert] (w5) at (4,2);
			\coordinate[bvert] (w6) at (8,2);

   \draw[-] 
   (w1)--(b2)--(w2)--(b1)--(w1)
   (w2)--(b4)--(w3)--(b3)--(w2)
   (w4)--(b6)--(w5)--(b2)--(w4)
   (w5)--(b8)--(w6)--(b4)--(w5)
   ;

\draw [MidnightBlue,thick] plot [smooth, tension=0.5] coordinates {(5,0)(5,8)};

\draw [Green,thick] plot [smooth, tension=0.5] coordinates {(7,0)(7,8)};
\draw [MidnightBlue,thick] plot [smooth, tension=0.5] coordinates {(1,0)(1,8)};

\draw [Dandelion,thick] plot [smooth, tension=0.5] coordinates {(0,7)(8,7)};
\draw [red,thick] plot [smooth, tension=0.5] coordinates {(0,5)(8,5)};

\draw [red,thick] plot [smooth, tension=0.5] coordinates {(0,1)(8,1)};
\draw [Dandelion,thick] plot [smooth, tension=0.5] coordinates {(0,3)(8,3)};

\draw [Green,thick] plot [smooth, tension=0.5] coordinates {(3,0)(3,8)};

\node[] (no) at (2.5,4) {$\times$};
\node[] (no) at (2,-0.5) {$\times$};
\node[] (no) at (2,8.5) {$\times$};

\node[] (no) at (6.5,4) {$\times$};
\node[] (no) at (6,-0.5) {$\times$};
\node[] (no) at (6,8.5) {$\times$};
\end{scope}
\draw[->] (8.5,4)--(9.5,4);
\draw[->] (18.5,4)--(19.5,4);
\draw[<-] (8.5,-6)--(9.5,-6);
\draw[<-] (18.5,-6)--(19.5,-6);
\draw[->] (24,-0.5)--(24,-1.5);
\end{tikzpicture}
  }\\
  (b)
\end{tabular}
\caption{Two generalized cluster transformations for the square-octagon graph, along with the corresponding homotopies of zig-zag paths. The faces/white vertices/chain of hexagons at which we are doing transformations are marked by $\times$'s. (a) We have colored the two blue zig-zag paths with different shades of blue to distinguish them. The result of the homotopy is that the two blue zig-zag paths are swapped, while all other zig-zag paths return to their initial positions. (b) The third arrow is a translation by $(\frac 1 4, \frac 1 4)$. As a result of the homotopy induced by the whole sequence, in the universal cover, blue zig-zag paths are translated by $(\frac 1 2,0)$ and red zig-zag paths by $(0,\frac 1 2)$, whereas green and yellow zig-zag paths return to their initial positions.}\label{fig:gen}
\end{figure}

We wish to extend this result to \emph{generalized cluster modular transformations based at $\Gamma$}, which are sequences of isotopies, elementary transformations and geometric $R$-matrix transformations $\Gamma \rightsquigarrow \Gamma$. Suppose that $\Gamma$ is a bipartite torus graph that contains a cyclic chain of hexagons as shown in Figure \ref{fig:Rmatrixsequence}. The geometric $R$-matrix transformation is a semi-local transformation $\Gamma \rightsquigarrow \Gamma$, which induces a birational map of weights $\mathcal X_\Gamma \ra \mathcal X_\Gamma$ \cite{ILP1} (cf. Theorem \ref{thm:ILP}). In terms of zig-zag paths, we view the geometric $R$-matrix transformation as a homotopy that swaps the two parallel zig-zag paths that bound the cyclic chain of hexagons, while leaving all other zig-zag paths unchanged (Figure \ref{fig:Rmatrixsequence}). Examples of generalized cluster transformations for the square-octagon graph in Figure \ref{fig:sqoct} are shown in Figure \ref{fig:gen}. A generalized cluster transformation is \emph{trivial} if the induced birational map of weights is the identity, and we define the \emph{generalized cluster modular group $\mathcal G_\Gamma$ based at $\Gamma$} to be the group of generalized cluster transformations based at $\Gamma$ modulo the trivial ones. Its isomorphism class depends only on the Newton polygon $N$ (cf. Section \ref{isom:type}). 

As in the case of cluster transformations, a generalized cluster transformation $\Gamma \stackrel{\phi}{\rightsquigarrow} \Gamma$ gives rise to an $H_1(\T,\Z)$-periodic sequence of isotopies, elementary transformations and geometric $R$-matrix transformations  $\widetilde \Gamma \stackrel{\widetilde \phi}{\rightsquigarrow} \widetilde \Gamma$, which we view as a homotopy such that each zig-zag path moves to the initial position of a parallel zig-zag path. However, unlike the case of cluster transformations, a generalized cluster transformation can permute parallel zig-zag paths, and therefore to encode the positions of zig-zag paths in $\widetilde Z_\rho$, we replace $f(\rho) \in \Z$ with an extended affine permutation, which we now define.

An \emph{extended affine permutation with period $k$} is a bijection $w:\Z \ra \Z$ such that $w(i+k)=w(i)+k$. Let $\widehat S_k$ denote the group of extended affine permutations with period $k$. Suppose that the generalized cluster transformation $\widetilde \phi$ {moves the zig-zag path $\widetilde \alpha_i \in \widetilde Z_\rho$ to the initial location of the zig-zag path $\widetilde \alpha_j$.} Then we define the extended affine permutation  $w_\rho \in \widehat S_{|E_\rho|}$ by $w_\rho(i)=j$. This defines a group homomorphism
\begin{align*}
\lambda:\{\text{generalized cluster transformations}\} &\ra \prod_{\rho \in \Sigma(1)} \widehat S_{|E_\rho|}\\
 \phi &\mapsto (w_\rho)_{\rho \in \Sigma(1)}.
\end{align*}
Since two parallel zig-zag paths move in opposite directions during a geometric $R$-matrix transformation, any $(w_\rho)_{\rho \in \Sigma(1)}$ in the image of $\lambda$ satisfies the equation 
\be \la{eq:w}
\sum_{\rho \in \Sigma(1)} \frac{1}{|E_\rho|} \sum_{i=1}^{|E_\rho|} (w_\rho(i)-i)=0,
\ee
which generalizes \eqref{eq:f}. Let $L_N$ denote the subgroup of $\prod_{\rho \in \Sigma(1)} \widehat S_{|E_\rho|}$ satisfying \eqref{eq:w}. We denote by $\mathcal H_N$ the quotient of $L_N$ by translations in $H_1(\T,\Z)$, and by $\overline \lambda$ the induced map from generalized cluster transformations to $\mathcal H_N$. The main theorem of the paper is:
\begin{theorem} \la{thm::main}
If $N$ contains at least one interior lattice point, then the homomorphism $\overline \lambda$ induces an isomorphism of the generalized cluster modular group $\mathcal G_\Gamma$ based at $\Gamma$ with $\mathcal H_N$. Otherwise, $\mathcal G_\Gamma$ is isomorphic to the finite group $\prod_{\rho \in \Sigma(1)}S_{|E_\rho|}$.
\end{theorem}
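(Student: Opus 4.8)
\medskip

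We sketch the intended strategy. Throughout we use that the isomorphism type of $\mathcal G_\Gamma$ depends only on $N$ and that, by \cite{GK13}, any two minimal bipartite torus graphs with Newton polygon $N$ are joined by a sequence of elementary transformations and isotopies; hence we may replace $\Gamma$ by any convenient representative and conjugate by such sequences. One half of the well-definedness of $\overline\lambda$ is immediate: if $\mu^\phi=\mathrm{id}$ then $\phi$ fixes the generically pairwise distinct Laurent monomials on $\mathcal X_\Gamma$ attached to the zig-zag paths, which forces each $w_\rho$ to be the identity up to the ambiguity in the periodic lift, i.e.\ up to $H_1(\T,\Z)$. So it remains to prove, when $N$ has an interior lattice point, that $\overline\lambda$ is surjective and that $\lambda(\phi)\in H_1(\T,\Z)$ forces $\mu^\phi=\mathrm{id}$; the polygons without an interior lattice point are handled separately at the end.

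\medskip

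\noindent\emph{Surjectivity.} Writing each $w\in\widehat S_{k}$ as its level-zero part times the global shift $i\mapsto i+\frac1k\sum_{j=1}^{k}(w(j)-j)$, one checks that $L_N$ is generated by (i) the tuples of simultaneous shifts $\bigl(i\mapsto i+f(\rho)\bigr)_{\rho}$ with $\sum_\rho f(\rho)=0$, together with (ii) the product $\prod_\rho\widetilde S_{|E_\rho|}$ of the affine symmetric groups (the level-zero subgroups of the $\widehat S_{|E_\rho|}$). Type (i) elements are realized by purely local cluster transformations --- this is the surjectivity in \eqref{eq:GI} of \cite{Gi}, and local transformations preserve the cyclic order of parallel zig-zag paths, so they contribute exactly simultaneous shifts. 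For (ii), since $\widetilde S_{|E_\rho|}$ is generated by its affine simple transpositions $s_0,\dots,s_{|E_\rho|-1}$, it is enough to realize the swap of any two cyclically consecutive zig-zag paths of $Z_\rho$. Here we establish a lemma: by a sequence of elementary transformations $\Gamma$ can be brought to a graph in which the cylinder of faces caught between those two parallel zig-zag paths is a cyclic chain of hexagons as in Figure \ref{fig:Rmatrixsequence} --- no further parallel zig-zag path enters the cylinder, so repeated spider and contraction/uncontraction moves shrink it to its minimal form. Performing the geometric $R$-matrix transformation on that chain (Theorem \ref{thm:ILP}, \cite{ILP1}) and transforming back realizes the desired transposition. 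Composing, $\lambda$ maps onto $L_N$ and $\overline\lambda$ onto $\mathcal H_N$.

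\medskip

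\noindent\emph{Injectivity.} Suppose $\lambda(\phi)\in H_1(\T,\Z)$. Composing with a translation (a trivial transformation) we may assume $\lambda(\phi)$ is the identity tuple, so the lift $\widetilde\phi$ moves every zig-zag path back to its starting position. We view $\widetilde\phi$ as a based loop in the space $\mathcal A_N$ of biperiodic arrangements of bi-infinite curves in $\R^2$ with the asymptotic directions prescribed by the sides of $N$, elementary and $R$-matrix transformations corresponding to the local modifications of such arrangements. The plan is to show that this loop is null-homotopic in $\mathcal A_N$ --- the vanishing of $\lambda(\phi)$ says exactly that it has no monodromy --- and that a transverse null-homotopy rewrites $\phi$, as a word in elementary and $R$-matrix transformations, using only: (R1) a transformation followed by its inverse; (R2) commutation of transformations with disjoint support; (R3) the square-move and pentagon relations among elementary transformations; (R4) the braid/Yang--Baxter relations among $R$-matrix transformations at nested or overlapping cyclic chains of hexagons; (R5) the relations expressing how an $R$-matrix transformation is conjugated by an elementary transformation of the ambient graph. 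Each of (R1)--(R5) induces the identity birational map --- (R1)--(R3) are the standard cluster relations (\cite{Gi}), and (R4)--(R5) follow from \cite{ILP1} (cf.\ Theorem \ref{thm:ILP}) together with the Yang--Baxter identity for the geometric $R$-matrix --- so $\mu^\phi=\mathrm{id}$. Combined with surjectivity this gives $\mathcal G_\Gamma\cong\mathcal H_N$.

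\medskip

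\noindent\emph{Polygons with no interior lattice point, and the main obstacle.} Such an $N$ is, up to unimodular equivalence and translation, either $\mathrm{conv}\{(0,0),(2,0),(0,2)\}$ or a trapezoid $\mathrm{conv}\{(0,0),(p,0),(q,1),(0,1)\}$. In every such case there are only finitely many minimal graphs with Newton polygon $N$, the local modular group $G_\Gamma$ is the finite abelian group furnished by \eqref{eq:GI}, and one checks directly that every generalized cluster transformation whose zig-zag datum lies in $\prod_\rho\widetilde S_{|E_\rho|}$ but not in $\prod_\rho S_{|E_\rho|}$ is in fact trivial; hence $\mathcal G_\Gamma\cong\prod_\rho S_{|E_\rho|}$, generated by $R$-matrix transformations alone. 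The main obstacle in all of this is the structural statement used in the injectivity step --- that a loop in $\mathcal A_N$ with trivial monodromy bounds a disk assembled from the relations (R1)--(R5), equivalently that the $2$-complex with vertices the minimal graphs, edges the elementary and $R$-matrix transformations, and $2$-cells (R1)--(R5) has fundamental group $\mathcal G_\Gamma$. The elementary-only case is the technical core of \cite{Gi}; the new content is controlling the $R$-matrix transformations, for which the crucial points are that the locus of graphs admitting a cyclic chain of hexagons in a fixed direction is connected and simply connected modulo elementary transformations, and that the Yang--Baxter relation for the geometric $R$-matrix supplies precisely the $2$-cells needed to cancel nested and overlapping $R$-matrix transformations.
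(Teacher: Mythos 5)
Your surjectivity argument follows the paper's: decompose $L_N$ into global shifts (realized by elementary transformations as in \cite{Gi}) and adjacent transpositions $s_{\rho,i}$ (realized by building a cyclic chain of hexagons between two consecutive parallel zig-zag paths and applying the geometric $R$-matrix transformation). Be aware, though, that the reduction to a chain of hexagons is not a matter of ``repeated spider and contraction/uncontraction moves'' on an empty cylinder: the region between two adjacent parallel zig-zag paths is in general crossed by many zig-zag paths of other homology classes, and clearing them out is the content of Theorem \ref{thm::rmat}, whose proof needs the Postnikov--Thurston theory of triple-crossing diagrams (Theorem \ref{thm:thur} and Proposition \ref{prop:parallel}) to make the relevant strands boundary-parallel.

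The injectivity half is where your proposal genuinely diverges and where the gap lies. You propose to show that every word in the moves with trivial monodromy is a consequence of relations (R1)--(R5), i.e., that a certain $2$-complex of graphs and moves is simply connected; you correctly identify this as the main obstacle, but you do not prove it, and neither (R4) nor (R5), nor the claim that the locus of graphs admitting a hexagon chain in a fixed direction is ``connected and simply connected modulo elementary transformations,'' is established anywhere. This is not a routine verification --- it is essentially the whole difficulty of the theorem. The paper avoids it by a different mechanism: the spectral transform $\kappa_{\Gamma,\w}$ is birational (Fock), the action of $\widehat{\mathcal G}_\Gamma$ on the spectral data $(\mathcal C,S,\nu)$ is computed explicitly (the curve is fixed, the divisor is translated by $\dd-\dd^\phi$ in $\mathrm{Cl}(\mathcal C)$, and $\nu$ is permuted), and this action visibly factors through $L_N$ (Lemma \ref{lem:factor}); hence $\overline\lambda(\phi)=0$ forces $\mu^\phi=\mathrm{id}$ with no presentation of the groupoid required. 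Your converse inclusion $\mathcal K_\Gamma\subseteq\ker\overline\lambda$ is also short of a proof: distinctness of the Casimirs only yields $w_\rho(i)\equiv i\ (\mathrm{mod}\ |E_\rho|)$, i.e., that each zig-zag path returns to a lift of itself, not that all lifts are shifted by a common amount coming from $jH_1(\T,\Z)$; the paper closes this via Proposition \ref{prop:gcmtspec} together with the elementary-only results of \cite{Gi}. Finally, for $N$ without interior lattice points your statement that the relevant transformations ``are in fact trivial'' is asserted rather than proved, whereas the paper derives it from $g=0$ (empty spectral divisor) and the same spectral-transform machinery.
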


See Example~\ref{ex:sqoct} for a computation of the generalized cluster modular group associated with the Newton polygon shown in Figure~\ref{fig:sqoct}.

Comparing Theorem~\ref{thm::main} with \eqref{eq:GI}, we observe that the generalized cluster modular group is usually non-abelian, hence has a richer structure than the abelian cluster modular group.

In the course of proving Theorem \ref{thm::main}, we prove some results that are of independent interest. Firstly, we show that if we have two parallel zig-zag paths that are adjacent to each other (i.e., are consecutive in the cyclic order in $Z_\rho$), then they can be swapped by a generalized cluster transformation: we can create a cyclic chain of hexagons bounded by these two zig-zag paths using elementary transformations and isotopies, and then swap them using the geometric $R$-matrix transformation.

Secondly, we study the action of generalized cluster transformations on spectral data, generalizing results of \cite{ILP1} for hexagonal lattices. A \emph{spectral data} associated with $N$ is a triple $(\mathcal C,S,\nu)$ where:
\begin{enumerate}
    \item $\mathcal C$ is an algebraic curve with Newton polygon contained in $N$ called the \emph{spectral curve}.
    \item $S$ is a degree $g$ effective divisor in $\mathcal C$ called the \emph{spectral divisor}, where $g$ is the genus of $\mathcal C$.
    \item Let $D_\rho$ be the toric divisor at infinity associated with the ray $\rho$. If $\mathcal C$ is generic, then it has $|E_\rho|$ points of intersection with $D_\rho$. $\nu = (\nu_\rho)_{\rho \in \Sigma(1)}$ is a collection of bijections $\nu_\rho:Z_\rho \xrightarrow[]{\sim} \mathcal C \cap D_\rho$ between zig-zag paths in $\Gamma$ and points at toric infinity of $\mathcal C$. 
\end{enumerate}
Let $\mathcal S_N$ denote the moduli space parameterizing spectral data associated with $N$. The space of degree $g$ effective divisors is birational to the Jacobian variety of $\mathcal C$, which is a $g$-dimensional complex torus. If we fix a bipartite graph $\Gamma$ and a white vertex $\w$ in $\Gamma$, there is a rational map $\kappa_{\Gamma,\w}:\mathcal X_N \ra \mathcal S_N$ called the \emph{spectral transform}, that was defined by Kenyon and Okounkov in \cite{KO} and further studied in \cites{GK13,BCdT,GGK} (see Section \ref{sec:st} for the construction of the spectral transform). Fock \cite{Fock} proved that:
\begin{enumerate}
    \item The spectral transform is birational.
    \item Cluster transformations act on the spectral data as follows: the spectral curve is invariant, the spectral divisor is translated in the Jacobian variety, and the bijections $\nu$ are cyclically permuted as dictated by the homotopy of zig-zag paths induced by the cluster transformation. 
\end{enumerate}

We show that generalized cluster transformations act on the spectral data in almost the same way, except that the action on $\nu$ is by the whole symmetric group $\prod_{\rho \in \Sigma(1)} S_{|E_\rho|}$ rather than $\prod_{\rho \in \Sigma(1)}\Z/|E_\rho|\Z$. This result is used to identify trivial generalized cluster transformations. Moreover, this shows that the discrete integrable systems that arise from generalized cluster transformations, including cross-ratio dynamics and polygon recutting, are integrable in the algebro-geometric sense of being linearized on finite covers (corresponding to the bijections $\nu$) of Jacobians of spectral curves.

In the spectral data of integrable systems, we usually have a spectral curve and a spectral divisor, but the data of the bijections $\nu$ are less natural: 
\begin{enumerate}
\item The cluster integrable system for the honeycomb lattice was shown to be isomorphic to a finite cover of the Beauville integrable system in \cite{GI2}: the Beauville integrable system does not account for $\nu$. 
    \item The data $\nu$ do not appear in the categorification of the spectral transform in \cite{TWZ}.
\end{enumerate}
Therefore, we define the quotient of $\mathcal S_N$ by the product of symmetric groups $\prod_{\rho \in \Sigma(1)} S_{|E_\rho|}$ acting on $\mathcal S_N$ by changing the bijections $\nu$, and consider the corresponding action on $\mathcal X_N$. Let $\Gamma$ and $\w$ be as in the definition of the spectral transform. Let $\rho \in \Sigma(1)$, and let $\alpha_1,\dots,\alpha_{|E_\rho|}$ denote the corresponding set of zig-zag paths, and suppose that they are labeled such that $\w$ is contained in the region between $\alpha_{|E_\rho|}$ and $\alpha_1$. Then, $S_{|E_\rho|}$ acts on $\mathcal X_N$ by generalized cluster transformations permuting (not cyclically) the zig-zag paths $\alpha_1,\dots,\alpha_{|E_\rho|}$. Thus, considering the phase space of the cluster integrable system to be $\mathcal X_N/\prod_{\rho \in \Sigma(1)} S_{|E_\rho|}$ instead of $\mathcal X_N$, we get a more natural definition of the spectral transform which is an isomorphism with the Beauville integrable system. However, the quotient depends on the choice of $\Gamma$ and $\w$.

\subsection*{Organization of the paper}
We provide background on the cluster integrable system from the perspective of the bipartite dimer model on the torus in Section~\ref{sec:dimer}. In Section~\ref{sec:modular} we define the generalized cluster modular group $\mathcal G_\Gamma$ and the quotient $\mathcal H_N$ of products of extended affine symmetric groups. The map $\overline\lambda$ from $\mathcal G_\Gamma$ to $\mathcal H_N$ is shown to be surjective in Section~\ref{sec:surj} and bijective in Section~\ref{sec:proof}. This last part of the proof uses the description of the action of geometric $R$-matrix transformations on the spectral data, provided in Section~\ref{sec:transform}.

\subsection*{Acknowledgements}
We thank Richard Kenyon for suggesting investigating the generalized cluster modular group, and Niklas Affolter, Tomas Berggren, Sunita Chepuri and Richard Kenyon for fruitful discussions. We are also grateful to both referees for several comments which helped clarify the exposition. SR was partially supported by the Agence Nationale de la Recherche, Grant Number ANR-18-CE40-0033 (ANR DIMERS).

\section{The dimer cluster integrable system}
\la{sec:dimer}

{In this section we provide some background on the dimer cluster integrable system, following \cites{KO,GK13,Fock}.}

\subsection{Weighted bipartite graphs in the torus}
Let $\Gamma=(B \sqcup W, E)$ denote a bipartite graph embedded in the torus $\T$ such that each face of $\Gamma$, i.e., each connected component of $\T - \Gamma$, is a topological disk. Here $B$, $W$ and $E$ denote the sets of black vertices, white vertices and edges of $\Gamma$ respectively. Let $i:\Gamma \hookrightarrow \T$ denote the embedding. We have an induced homomorphism $i_*:H_1(\Gamma,\Z) \ra H_1(\T,\Z)$ by functoriality of homology.

An \emph{edge weight} on $\Gamma$ is a function $\wt: E \ra \C^\times$. Two edge weights $\wt_1$ and $\wt_2$ are said to be \emph{gauge equivalent} if there exist functions $f:B \ra \C^\times,g:W \ra \C^\times$ such that for every edge $e = \bw \w$ in $E$, we have $\wt_1(e)=f(\bw)^{-1} \wt_2(e) g(\w)$. Let $\mathcal X_\Gamma$ denote the space of edge weights modulo gauge equivalence. We denote the gauge equivalence class of $\wt$ by $[\wt]$. Equivalently, $\mathcal X_\Gamma=H^1(\Gamma,\C^\times)$: indeed, an edge weight is a cellular $1$-cocycle and two edge weights are gauge equivalent if they differ by a cellular $1$-coboundary. Since $H^1(\Gamma,\C^\times)=\Hom_\Z(H_1(\Gamma,\Z),\C^\times) \cong (\C^\times)^{\# E-\#B-\#W+1}$ is an algebraic torus with group of characters $H_1(\Gamma,\Z)$, the coordinate ring of $\mathcal X_\Gamma$ is the group algebra
\be \la{coord:ring}
\C[H_1(\Gamma,\Z)]:=\left\{\sum_{[L] \in H_1(\Gamma,\Z)} a_{[L]} \chi^{[L]}: a_{[L]} = 0 \text{ for all but finitely many }[L] \right\}.
\ee
Explicitly, $\chi^{[L]}$ is the regular function, called \emph{monodromy}, defined as 
\[\chi^{[L]}([\wt])= \prod_{i=1}^n \frac{\wt(e_{2i})}{\wt(e_{2i-1})}, 
\]
where $L$ is a $1$-cycle ${\rm w}_1 \xrightarrow[]{e_1} {\rm b}_1 \xrightarrow[]{e_2} {\rm w}_2 \xrightarrow[]{e_3} {\rm b}_2 \xrightarrow[]{e_4} \cdots \xrightarrow[]{e_{2n-2}} {\rm w}_n \xrightarrow[]{e_{2n-1}} {\rm b}_n \xrightarrow[]{e_{2n}} {\rm w}_1 $ representing the homology class ${[L]}$. For a face $f$ of $\Gamma$, we denote by $\partial f$ the counterclockwise oriented boundary of $f$, and let $X_f:=\chi^{[\partial f]}$.

The space $\mathcal X_\Gamma$ carries a Poisson structure $\{ \cdot,\cdot \}_\Gamma$ \cite{GK13}*{Section 1.1.1}, which is a variation of the canonical cluster Poisson structure of \cite{GSV03}, and whose construction we do not recall here since we will not require it.

\subsection{Zig-zag paths and Newton polygon}

A \emph{zig-zag path} in $\Gamma$ is an oriented path in $\Gamma$ that turns maximally left at white vertices and maximally right at black vertices. Let $Z(\Gamma)$ denote the set of zig-zag paths in $\Gamma$. The graph $\Gamma$ is said to be \emph{minimal} if the lifts of any zig-zag path to the universal cover of $\T$, i.e., the plane $\R^2$, have no self-intersections, and the lifts of any two zig-zag paths to the universal cover do not form parallel bigons (pairs of zig-zag paths oriented the same way intersecting twice).

A convex polygon in $H_1(\T,\Z) \otimes_\Z \R$ is called \emph{integral} if its vertices are contained in $H_1(\T,\Z)$. We call a vector $v$ in $H_1(\T,\Z)$ \emph{primitive} if there is no vector $w \in H_1(\T,\Z)$ such that $v=n w$ for some integer $n \geq 2$. Each zig-zag path $\alpha$ in $Z(\Gamma)$ determines a homology class $[\alpha] \in H_1(\T,\Z)$. 
Since each edge is contained in exactly two zig-zag paths that traverse the edge in opposite directions, we have $\sum_{\alpha \in Z(\Gamma)}[\alpha]=0$ in $H_1(\T,\Z)$. This implies that there is a convex integral polygon $N(\Gamma)$ in $H_1(\T,\Z) \otimes_\Z \R$, unique up to translations, such that the set of primitive vectors in the counterclockwise-oriented boundary of $N(\Gamma)$ is  $\{[\alpha]:\alpha \in Z(\Gamma)\}$. We call $N(\Gamma)$ the \emph{Newton polygon} of $\Gamma$.

For a zig-zag path $\alpha$, we define $C_\alpha:=\chi^{[\alpha]}$. These functions generate the center of the Poisson algebra \eqref{coord:ring}: $\{C_\alpha,\cdot\}_\Gamma=0$. Such functions are called \emph{Casimirs}.

\subsection{Elementary transformations and the cluster Poisson variety \texorpdfstring{$\mathcal X_N$}{X}} \la{sec:ett}

Recall from Section~\ref{sec:intro} the definition of elementary transformations. An elementary transformation {$(\Gamma_1 \stackrel{s}{\rightsquigarrow} \Gamma_2,\mathcal X_{\Gamma_1} \xrightarrow[]{\mu^s}\mathcal X_{\Gamma_2})$}  has the following properties:
\begin{enumerate}
    \item There is an induced isomorphism $s_*:H_1(\Gamma_1,\Z) \ra H_1(\Gamma_2,\Z)$ such that 
    \be \la{eq:homology}
	\begin{tikzcd}[row sep=large, column sep=large]
H_1(\Gamma_1,\Z) \arrow[rr,"s_*"] \arrow[dr,"(i_1)_*"']
& & H_1(\Gamma_2,\Z) \arrow[dl,"(i_2)_*"] \\ 
&H_1(\T,\Z)&
\end{tikzcd}
\ee
commutes, where $i_1$ (resp., $i_2$) denotes the embedding of $\Gamma_1$ (resp., $\Gamma_2$) into $\T$.  
    \item There is an induced bijection $s:Z(\Gamma_1)\ra Z(\Gamma_2)$ that preserves homology classes, i.e., such that $s_*([\alpha])=[s(\alpha)] \in H_1(\Gamma_2,\Z)$ for every $\alpha$. This bijection is constructed as follows. If $\alpha \in Z(\Gamma_1)$ is not one of the zig-zag paths that intersect the disk in which the elementary transformation takes place, then since $\Gamma_2$ coincides with $\Gamma_1$ outside the disk, $\alpha$ is also a zig-zag path of $\Gamma_2$, which is $s(\alpha)$ in this case. If $\alpha \in Z(\Gamma_1)$ is one of the zig-zag paths that intersects the disk, then $s(\alpha) \in Z(\Gamma_2)$ is the zig-zag path that coincides with $\alpha$ outside the disk, but is modified inside the disk as in Figure~\ref{et} ($\alpha$ and $s(\alpha)$ have the same color). We view the elementary transformations as homotopies of zig-zag paths as shown in Figure \ref{et}. Using \eqref{eq:homology}, we get $[\alpha]=[s(\alpha)] \in H_1(\T,\Z)$, so that $N(\Gamma_1)=N(\Gamma_2)$.
    \item The birational map $\mu^s:(\mathcal X_{\Gamma_1},\{\cdot,\cdot\}_{\Gamma_1}) \dashrightarrow (\mathcal X_{\Gamma_2},\{\cdot,\cdot\}_{\Gamma_2})$ is Poisson, and explicit formulas can be found in \cite{GK13}*{Section 4.1}. Moreover, $\mu^s$ has the property that monodromies around zig-zag paths are preserved: $(\mu^{s})^* \chi^{s_*[\alpha]}=\chi^{[\alpha]}$ for every $\alpha \in Z(\Gamma_1)$. 
\end{enumerate}

Goncharov and Kenyon proved that:
\begin{theorem}\cite{GK13}*{Theorem 2.5}\la{thm:gk2.5}
For any convex integral polygon $N$ in the plane $H_1(\T,\Z) \otimes_\Z \R \cong \R^2$, there exists a non-empty family of minimal bipartite graphs with Newton polygon $N$. Any two minimal bipartite graphs with Newton polygon $N$ are related by a sequence of elementary transformations.
\end{theorem}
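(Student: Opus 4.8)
\emph{Proof proposal.} I would establish the two assertions --- existence, and connectivity under elementary transformations --- separately, in both cases working in the medial picture, in which a bipartite torus graph $\Gamma$ is recorded by its collection $Z(\Gamma)$ of zig-zag paths drawn as transverse oriented closed curves on $\T$. In this picture the elementary transformations become local moves on such a curve collection: the contraction--uncontraction move creates or destroys a small bigon bounded by two oppositely oriented strands, while the spider move slides a four-strand ``square'' region of the arrangement across, pushing strands past crossings. Isotopies of $\T$ act trivially on everything and are used freely throughout.

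\emph{Existence.} Write $v_1,\dots,v_n \in H_1(\T,\Z)$ for the primitive counterclockwise edge vectors of $N$, listed with multiplicity so that parallel sides of $N$ contribute several equal vectors; then $\sum_i v_i=0$. The aim is to build a collection $\mathcal A=\{\gamma_1,\dots,\gamma_n\}$ of oriented closed curves on $\T$ with $[\gamma_i]=v_i$ which is in \emph{minimal position} --- each $\gamma_i$ lifts to an embedded line in $\R^2$, each pair $\gamma_i,\gamma_j$ meets transversally in the minimal number $|\det(v_i,v_j)|$ of points, and there are no triple points --- and, crucially, whose complement $\T\smallsetminus\mathcal A$ admits a $2$-coloring of its faces in which one color class plays the role of the faces of a bipartite graph and the other splits consistently into black and white vertices. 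The minimal-position part alone is easy (take straight loops for a flat structure on $\T$, with distinct parallel offsets for parallel classes, and perturb generically to remove triple points), but not every such arrangement comes from a bipartite graph, so $\mathcal A$ must be chosen with care: concretely I would start from a unimodular triangulation of $N$ --- which exists for any lattice polygon in the plane --- and assemble $\Gamma$ from one hexagonal cell per triangle glued along shared edges, equivalently the honeycomb-type graph whose zig-zags are dual to the edges of the triangulation. For such a $\Gamma$ the zig-zag paths are exactly the $\gamma_i$, so $N(\Gamma)=N$, and minimality of $\mathcal A$ is precisely minimality of $\Gamma$.

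\emph{Connectivity.} Let $\Gamma,\Gamma'$ be minimal with $N(\Gamma)=N(\Gamma')=N$, and let $\mathcal A=Z(\Gamma)$, $\mathcal A'=Z(\Gamma')$ be the associated curve collections; both lie in minimal position with the same multiset of homology classes. On the torus any two such collections are ambient-isotopic --- each primitive class has an essentially unique simple representative, and a minimal-position collection is rigid once the classes are fixed --- so there is a path $\mathcal A_t$ ($t\in[0,1]$) of immersed oriented curve collections from $\mathcal A$ to $\mathcal A'$. Choosing this path generically and keeping it in minimal position away from finitely many exceptional times, the only events that occur are triple points, and antiparallel bigons appearing or disappearing --- that is, spider moves and contraction--uncontraction moves --- while between consecutive events the underlying bipartite graph changes only by an isotopy. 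Reading the sequence of moves back as operations on weighted bipartite graphs (and checking that the black/white coloring transforms correctly, which it does because the moves are local and orientation-compatible) exhibits $\Gamma$ and $\Gamma'$ as connected by a finite sequence of elementary transformations.

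\emph{The main obstacle.} The real work in the connectivity argument hides in the clause ``keeping it in minimal position away from finitely many exceptional times'': an arbitrary generic isotopy between two minimal configurations could also pass through a self-tangency (producing a forbidden self-crossing) or through a bigon between two \emph{identically} oriented strands, and neither of these is an elementary transformation. The fix is not to use an arbitrary isotopy, but to reduce both $\Gamma$ and $\Gamma'$, by a controlled sequence of moves, to the single graph $\Gamma_N$ built in the existence step, and then concatenate the two reductions; arranging that this reduction procedure terminates is the toric analogue of Postnikov's move-connectivity theorem for reduced plabic graphs in the disk, and is the heart of the matter. A secondary but genuine difficulty, already present in the existence half, is that at the level of curves one sees only Reidemeister-type moves together with a face arrangement, so throughout one must carry along the $2$-coloring of faces and its black/white refinement in order for the curve-level argument to lift to a statement about bipartite graphs (and, in the weighted setting, about the birational maps $\mu^s$).
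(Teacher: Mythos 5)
First, note that the paper does not prove this statement at all: it is quoted verbatim from Goncharov--Kenyon (\cite{GK13}, Theorem 2.5) and used as a black box, so there is no internal proof to compare against. Judged on its own terms, your proposal has the right general picture (work with the zig-zag curve system on $\T$ and translate elementary transformations into local moves on curves), which is indeed the strategy of \cite{GK13}, but both halves contain genuine gaps.

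In the existence half, your notion of ``minimal position'' is wrong for this setting: you require each pair $\gamma_i,\gamma_j$ to meet in $|\det(v_i,v_j)|$ points, but minimality of a bipartite torus graph (as defined in Section 2.2 of the paper) forbids only self-intersections and \emph{parallel} bigons of lifts; antiparallel bigons are allowed and in fact unavoidable, since every edge of $\Gamma$ lies on exactly two zig-zag paths traversing it in opposite directions. For two sides of $N$ with $v_j=-v_i$ your condition forces the corresponding curves to be disjoint, which is incompatible with the complement of the arrangement being a union of disks carrying the black/white/face coloring, so the construction as stated cannot produce a bipartite graph. (The correct route, which is Goncharov--Kenyon's, is to take geodesic representatives of the $v_i$, perturb to a generic arrangement with only double points and no parallel bigons, and then observe that \emph{every} such ``filling'' arrangement yields a minimal bipartite graph by placing black/white vertices in the coherently oriented complementary regions --- your worry that ``not every such arrangement comes from a bipartite graph'' is misplaced.) The alternative you offer, gluing one hexagonal cell per triangle of a unimodular triangulation of $N$, is only asserted: it is not explained how the cells are glued into a torus graph, nor why the resulting zig-zag homology classes are the primitive edge vectors of $N$ rather than data of the triangulation.

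In the connectivity half you correctly identify the crux --- a generic isotopy between two minimal curve systems can pass through self-tangencies and parallel bigons, neither of which is an elementary transformation --- but you then leave it unresolved, remarking that arranging a terminating reduction to a normal form ``is the heart of the matter.'' That heart is precisely the content of the theorem: one needs the fact that any two minimal-position representatives of the same curve system on a surface are connected through minimal-position configurations by triple-point moves and (anti)parallel bigon moves only (a minimal-isotopy statement in the spirit of Hass--Scott or de Graaf--Schrijver, which is what \cite{GK13} supplies). Without that input, the proposal does not prove the second assertion.
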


Given a convex integral polygon $N$, gluing the Poisson varieties $(\mathcal X_\Gamma,\{\cdot,\cdot\}_\Gamma)$ for minimal bipartite graphs $\Gamma$ with Newton polygon $N$ using the Poisson birational maps $\mu^s$, we get a Poisson space $(\mathcal X_N,\{\cdot,\cdot \})$, called the \emph{dimer cluster Poisson variety}. It is a cluster $\mathcal X$ variety as defined by Fock and Goncharov \cite{FG03}.

Let us mention that an equivalent description of cluster integrable systems can be given via the directed networks of \cite{GSTV}, as was shown in \cite{Izosimov1}.

\subsection{The spectral transform}\la{sec:st}
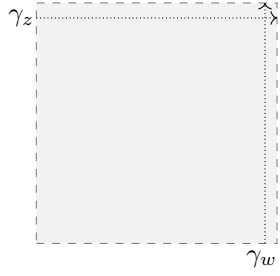
\begin{figure}
\centering
	\begin{tikzpicture}[scale=0.8,baseline={([yshift=-.7ex]current bounding box.center)}]  
	\fill[black!5] (0,0) rectangle (4,4);
		\draw[dashed, gray] (0,0) rectangle (4,4);

	\draw[->,densely dotted] (3.8,0) -- (3.8,4);
		\draw[->,densely dotted] (0,3.75) -- (4,3.75);
		\node (no) at (-0.25,3.75) {$\gamma_z$};
		\node (no) at (3.75,-0.25) {$\gamma_w$};

	\end{tikzpicture}
\caption{The fundamental rectangle $R$, along with the cycles $\gamma_z,\gamma_w$.}\label{figgzgw}
\end{figure}

Let $R$ be a fundamental rectangle for $\T$, so that $\T$ is obtained by gluing the opposite sides of $R$. Let $\gamma_z,\gamma_w$ denote the oriented sides of $R$ generating $H_1(\T,\Z)$ as shown in Figure \ref{figgzgw}. Let $\langle \cdot ,\cdot \rangle:H_1(\T,\Z) \wedge H_1(\T,\Z) \ra \Z$ be the intersection form on $\T$. For each edge $e$ of $\Gamma$, we define 
\[
\phi(e):=z^{\langle e,\gamma_w\rangle } w^{\langle e,-\gamma_z \rangle }, 
\]
where $\langle e,\cdot \rangle$ denotes the intersection index with $e=\bw \w$, defined as follows: Let $l_e$ be an oriented path contained in $R$ from ${\rm w}$ to ${\rm b}$. Concatenating $l_e$ with $e$, which we consider an oriented path from $\rm b$ to $\rm w$, we get an oriented cycle {$l_e \cdot e$} in $\T$. Then $\langle e,\cdot \rangle := \langle {l_e\cdot e},\cdot \rangle$.

A \emph{Kasteleyn sign} is a cohomology class $[\kappa] \in H^1(\Gamma,\C^\times)$ such that for any loop $L$ in $\Gamma$, $\chi^{[L]}([\kappa])$ is $\pm 1$, and for a face $f$, $X_f([\kappa])$ is $-1$ (respectively $1$) if the number of edges in $\partial f$ is $0$ mod $4$ (respectively $2$ mod $4$).
 
Given edge weights $\wt$ and $\kappa$ representing $[\wt]$ and $[\kappa]$ respectively, the {\it Kasteleyn matrix} $K=K(z,w)$ is the map of free $\C[z^{\pm 1}, w^{\pm 1}]$-modules defined by
\begin{align}\label{Kastdet}
    K(z,w)_{\w,\bw}&=\sum_{e \in E \text{ incident to } \bw,\w} \wt(e) \kappa(e)\phi(e).
\end{align}
The Laurent polynomial $P(z,w):=\det K(z,w)$ is known as the characteristic polynomial, and its vanishing locus ${\mathcal C}^\circ:=\{(z,w) \in (\C^\times)^2:P(z,w)=0\}$ is called the \emph{open spectral curve}. 

Recall from Section~\ref{sec:intro} the definitions of the set $\Sigma(1)$ of rays $\rho$ associated with $N(\Gamma)$, of the side $E_\rho$ normal to $\rho$, of the set $Z_\rho(\Gamma)$ of zig-zag paths associated with $\rho$ and of the integral length of a side $E_\rho$.

The convex integral polygon $N$ defines a projective toric surface $X_N$ compactifying $(\C^\times)^2$. The complement of the torus $X_N - (\C^\times)^2$ is a union of projective lines $D_\rho$ indexed by the rays of $\Sigma$, called \emph{lines at infinity}. The lines at infinity intersect according to the combinatorics of $N$. The Zariski closure $\mathcal C$ of ${\mathcal C}^\circ$ in $X_N$ is called the \emph{spectral curve}. For generic $[\wt] \in \mathcal X_N$, the spectral curve $\mathcal C$ has Newton polygon $N$, and has the following properties:
\begin{enumerate}
    \item The genus $g$ of ${\mathcal C}$ is the number of interior lattice points in $N$.
    \item ${\mathcal C}$ intersects the line at infinity $D_\rho$ at $|E_\rho|$ points, counted with multiplicity.
\end{enumerate}

A \emph{parameterization of the points at infinity by zig-zag paths} is defined as a {choice} of bijections $\nu=\{\nu_\rho\}_{\rho \in \Sigma(1)}$, where 
\[
\nu_\rho : Z_\rho \xrightarrow[]{\sim} {\mathcal C}\cap D_\rho.
\]
Note that these sets have the same cardinality by Property 2 of the spectral curve, but there is no canonical bijection between them. Recall from Section \ref{sec:intro} the moduli space $\mathcal S_N$ parameterizing spectral data.

Fix a {\it distinguished white vertex} ${\w}$ of $\Gamma$. The \emph{spectral transform}, 
\begin{align} 
    \kappa_{\Gamma, {\w}}:  
      \mathcal X_N \dashrightarrow \mathcal S_N, \la{SM}
\end{align}
 defined by Kenyon and Okounkov \cite{KO}, is the rational map defined on the dense open subset $H^1(\Gamma,\C^\times)$ of $\mathcal X_N$ by $[\wt] \mapsto ({\mathcal C},S_{\w},\nu)$ as follows:
\begin{enumerate}
    \item ${\mathcal C}$ is the spectral curve.
    \item For generic $[\wt]$, ${\mathcal C}$ is a smooth curve and $\text{coker } K$ is the pushforward of a line bundle on ${\mathcal C}^\circ$. Let $s_{\w}$ be the section of $\text{coker } K$ given by the ${\w}$-entry of the cokernel map. $S_{\w}$ is defined to be the divisor of this section. It is a degree $g$ effective divisor in ${\mathcal C}^\circ$, and is called the \emph{spectral divisor}. This is the only component of the spectral data that depends on the distinguished white vertex $\w$, and we write $S_{\w}$ to emphasize the dependence of the divisor on $\w$.  
    \item The parameterization of points at infinity by zig-zag paths $\nu$ is defined so that a certain coordinate of the point at infinity is given by $C_\alpha$ (cf. \cite{GGK}*{Section 2.7}). We call $\nu_\rho(\alpha)$ the \emph{point at infinity associated to} $\alpha$. 
    \end{enumerate}     
 Since $\rho$ is determined by $\alpha$, we will hereafter use the simpler notation $\nu(\alpha){:=\nu_{\rho}(\alpha)}$. The importance of the spectral transform for us stems from the following result of Fock:
 \begin{theorem}[\cite{Fock}]
 The spectral transform is birational.
 \end{theorem}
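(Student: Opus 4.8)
The plan is to show that the spectral transform $\kappa_{\Gamma,\w}\colon \mathcal X_N \dashrightarrow \mathcal S_N$ is birational by constructing an inverse rational map $\mathcal S_N \dashrightarrow \mathcal X_N$ and checking that the two compositions are the identity on dense open sets. The key geometric input is the classical dictionary, going back to Krichever and made precise in the dimer setting by Kenyon--Okounkov, between line bundles on the spectral curve and solutions of the Kasteleyn linear system: a generic point $(\mathcal C, S_\w, \nu) \in \mathcal S_N$ determines a line bundle $\mathcal L := \mathcal O_{\mathcal C}(S_\w)$ of the appropriate degree, with prescribed behavior at the points at infinity $\mathcal C \cap D_\rho$ (the values $C_\alpha$ recorded by $\nu$ fix the monodromies, i.e.\ the Casimirs of $\mathcal X_N$). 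First I would fix $\Gamma$, a fundamental domain, the functions $\phi(e)$ and a Kasteleyn sign $[\kappa]$ as in the construction of the spectral transform, so that both spaces carry compatible coordinates; the dimension count (the genus $g$ equals the number of interior lattice points of $N$, which is exactly the dimension of the generic fiber of the Casimir map on $\mathcal X_N$) shows $\dim \mathcal X_N = \dim \mathcal S_N$, so birationality is plausible.

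Next I would construct the inverse. Given generic $(\mathcal C, S_\w, \nu)$, one recovers the cokernel data: the Kasteleyn matrix $K(z,w)$ should be reconstructed, up to gauge, as the matrix whose cokernel is the pushforward under the projection $\mathcal C \to \P^1$ (or $(\C^\times)^2$) of the line bundle $\mathcal L(-S_\w)$ suitably twisted, normalized using the distinguished white vertex $\w$ — this is where $s_\w$ having divisor exactly $S_\w$ pins down the section. Concretely, for each black vertex $\bw$ one obtains a section of a line bundle on $\mathcal C$ whose expansion in $z,w$ reads off the entries $K(z,w)_{\w',\bw}$; the zig-zag-path parameterization $\nu$ tells us which point at infinity corresponds to which zig-zag path, hence fixes the "slopes" and the leading coefficients, which is precisely the information needed to place the reconstructed edge weights on the correct graph $\Gamma$. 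Dividing out by the gauge group $\Hom(B \sqcup W, \C^\times)/\C^\times$ and by the choice of trivialization of $\mathcal L$ (the Jacobian torsor) yields a well-defined point $[\wt] \in H^1(\Gamma, \C^\times) \subset \mathcal X_N$. I would then verify $\kappa_{\Gamma,\w} \circ (\text{inverse}) = \mathrm{id}$ and $(\text{inverse}) \circ \kappa_{\Gamma,\w} = \mathrm{id}$ by direct computation: the first composition is essentially the statement that the characteristic polynomial and the cokernel recover the curve and divisor they were built from, and the second that the reconstruction of $K$ from its own spectral data returns the same gauge class.

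The main obstacle — and the reason I would lean heavily on the cited work of Fock, and of Kenyon--Okounkov and \cite{GGK} for the normalizations — is proving that the reconstruction is genuinely inverse to the spectral transform rather than merely a map between spaces of the right dimension. Two points require care. First, one must show the reconstructed data actually lives on the \emph{fixed} graph $\Gamma$ (not some elementary-transformation-equivalent graph): this is controlled by the combinatorial structure of $P(z,w)$'s Newton polygon subdivision versus the bipartite structure of $\Gamma$, and by the role of $\w$ in rigidifying the cokernel section. Second, one must handle the genericity hypotheses uniformly: smoothness of $\mathcal C$, simplicity of the intersections $\mathcal C \cap D_\rho$, effectivity and reducedness of $S_\w$, and the vanishing of obstruction groups ($H^1(\mathcal C, \mathcal L(-S_\w)) = 0$) so that the linear-algebra reconstruction of $K$ has the expected rank — all of these hold on a common dense open subset, but identifying that open subset and checking the two compositions agree there is the technical heart of the argument. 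Since Fock's theorem is quoted directly in the excerpt, in practice the paper will simply cite \cite{Fock} and move on; the sketch above indicates the structure of the proof that reference supplies.
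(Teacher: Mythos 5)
This statement is imported verbatim from \cite{Fock}; the paper gives no proof of it, so there is nothing internal to compare your argument against line by line. Your outline does match the broad strategy of Fock's actual proof: fix $\Gamma$, $\w$, and a Kasteleyn sign; observe that $\dim\mathcal X_N=\dim\mathcal S_N$; construct an explicit inverse $\mathcal S_N\dashrightarrow\mathcal X_N$ from the line bundle $\mathcal O_{\mathcal C}(S_\w)$ and the parameterization $\nu$ of the points at infinity; and check both compositions on a common dense open set where the genericity hypotheses hold.

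The one substantive gap is that you leave the actual inverse as a black box (``the Kasteleyn matrix should be reconstructed, up to gauge, as the matrix whose cokernel is the pushforward\dots''). In Fock's construction the inverse is completely explicit: the weight of an edge $e=\bw\w'$ is written as a ratio of theta functions evaluated at the Abel--Jacobi image of $S_\w$ shifted by the discrete Abel map $\dd$ at the endpoints of $e$, together with prime-form factors attached via $\nu$ to the two zig-zag paths through $e$. This is the same map $\dd$ that the present paper introduces in Section 5 (and whose transformation rule under elementary and geometric $R$-matrix moves drives Propositions \ref{prop:cmtspec} and \ref{prop:gcmtspec}); it is precisely the device that resolves your ``main obstacle'' of landing on the fixed graph $\Gamma$ with a well-defined gauge class, since gauge transformations correspond exactly to the ambiguity in normalizing $\dd$ and the theta characteristic. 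Without that formula, the claim that the reconstruction inverts $\kappa_{\Gamma,\w}$ (rather than merely mapping between spaces of equal dimension) is asserted but not established. So: right skeleton, same route as the cited reference, but the technical heart you correctly identify as the hard part is exactly the part not supplied.
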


\section{The generalized cluster modular group}
\la{sec:modular}

\subsection{Geometric \texorpdfstring{$R$}{R}-matrix transformations} \la{sec:grmat}

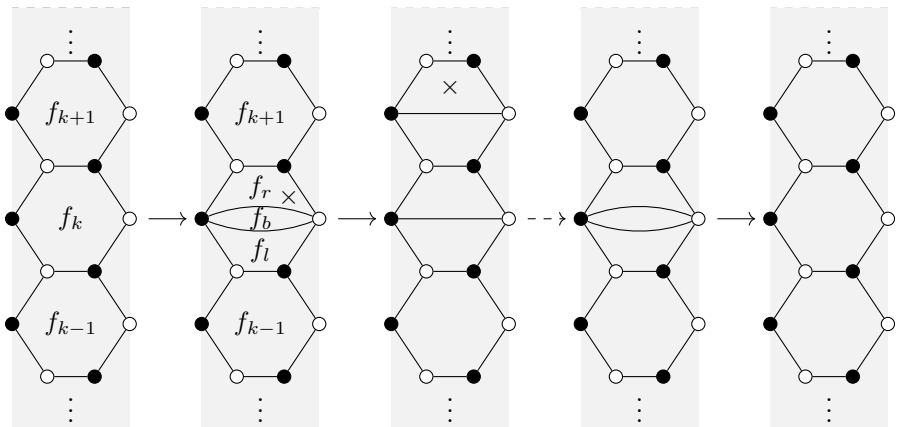
\begin{figure}[htpb]
\centering
		\begin{tikzpicture}[scale=0.7]

\begin{scope}[xscale=1.8,shift={(-3,0)},rotate=90]
\draw[->](4,-0.2)--(4,-0.6);
\draw[gray,dashed,-] (0,0) -- (0,1.24); \draw[gray,dashed,-] (8,0) -- (8,1.24);
\draw[gray,dashed,-] (0,0) -- (0,1.24);
\draw[gray,dashed,-] (8,0) -- (8,1.24);

\fill[black!5] (0,0) rectangle (8,1.24);
\node[](no) at (0.5,0.62) {$\vdots$};
\node[](no) at (7.5,0.62) {$\vdots$};

		\node[wvert] (w1) at (1,0.87) {};
		\node[wvert] (w2) at (3,0.87) {};
		\node[wvert] (w3) at (5,0.87) {};
		\node[wvert] (w4) at (7,0.87) {};
		\node[wvert] (w5) at (2,0) {};
		\node[wvert] (w6) at (4,0) {};
		\node[wvert] (w7) at (6,0) {};
		\node[bvert] (b1) at (2,1.24) {};
		\node[bvert] (b2) at (4,1.24) {};
        \node[bvert] (bb3) at (6,1.24) {};
		\node[bvert] (b4) at (1,0.37) {};
		\node[bvert] (b5) at (3,0.37) {};
		\node[bvert] (b6) at (5,0.37) {};
		\node[bvert] (b7) at (7,0.37) {};
		\draw[-] (w1) -- (b1) -- (w2) -- (b2) -- (w3) -- (bb3)--(w4);
		\draw[-] (b4) -- (w5) -- (b5) -- (w6) -- (b6)--(w7) -- (b7);
		\draw[-] (w1) -- (b4)
		(w2) -- (b5)
		(w3) --(b6)
		(w4) -- (b7);

		 \node[](no) at (4,0.62) {$f_k$};
		  \node[](no) at (2,0.62) {$f_{k-1}$};
		   \node[](no) at (6,0.62) {$f_{k+1}$};
		 \end{scope}
		 
\begin{scope}[xscale=1.8,shift={(-1.,0)},rotate=90]
\draw[->](4,-0.2)--(4,-0.6);
\draw[gray,dashed,-] (0,0) -- (0,1.24); \draw[gray,dashed,-] (8,0) -- (8,1.24); \fill[black!5] (0,0) rectangle (8,1.24);
\node[](no) at (0.5,0.62) {$\vdots$};
\node[](no) at (7.5,0.62) {$\vdots$};

		\node[wvert] (w1) at (1,0.87) {};
		\node[wvert] (w2) at (3,0.87) {};
		\node[wvert] (w3) at (5,0.87) {};
		\node[wvert] (w4) at (7,0.87) {};
		\node[wvert] (w5) at (2,0) {};
		\node[wvert] (w6) at (4,0) {};
		\node[wvert] (w7) at (6,0) {};
		\node[bvert] (b1) at (2,1.24) {};
		\node[bvert] (b2) at (4,1.24) {};
        \node[bvert] (bb3) at (6,1.24) {};
		\node[bvert] (b4) at (1,0.37) {};
		\node[bvert] (b5) at (3,0.37) {};
		\node[bvert] (b6) at (5,0.37) {};
		\node[bvert] (b7) at (7,0.37) {};
		\draw[-] (w1) -- (b1) -- (w2) -- (b2) -- (w3) -- (bb3)--(w4);
		\draw[-] (b4) -- (w5) -- (b5) -- (w6) -- (b6)--(w7) -- (b7);
		\draw[-] (w1) -- (b4)
		(w2) -- (b5)
		(w3) --(b6)
		(w4) -- (b7);
		\draw[-] (b2)edge [bend left=35] (w6) edge [bend right=35] (w6);
		
		\node[] (no) at (4.4,0.32) {$\times$};		 
		 \node[](no) at (4,0.62) {$f_b$};
		 \node[](no) at (4.6,0.62) {$f_r$};
		 \node[](no) at (3.4,0.62) {$f_l$};
		  \node[](no) at (2,0.62) {$f_{k-1}$};
		   \node[](no) at (6,0.62) {$f_{k+1}$};
		 \end{scope}
		 \begin{scope}[xscale=1.8,shift={(1.,0)},rotate=90]
\draw[gray,dashed,-] (0,0) -- (0,1.24); \draw[gray,dashed,-] (8,0) -- (8,1.24); \fill[black!5] (0,0) rectangle (8,1.24);
\draw[->,dashed](4,-0.2)--(4,-0.6);
\node[](no) at (0.5,0.62) {$\vdots$};
\node[](no) at (7.5,0.62) {$\vdots$};

		\node[wvert] (w1) at (1,0.87) {};
		\node[wvert] (w2) at (3,0.87) {};
		\node[wvert] (w3) at (5,0.87) {};
		\node[wvert] (w4) at (7,0.87) {};
		\node[wvert] (w5) at (2,0) {};
		\node[wvert] (w6) at (4,0) {};
		\node[wvert] (w7) at (6,0) {};
		\node[bvert] (b1) at (2,1.24) {};
		\node[bvert] (b2) at (4,1.24) {};
        \node[bvert] (bb3) at (6,1.24) {};
		\node[bvert] (b4) at (1,0.37) {};
		\node[bvert] (b5) at (3,0.37) {};
		\node[bvert] (b6) at (5,0.37) {};
		\node[bvert] (b7) at (7,0.37) {};
		\draw[-] (w1) -- (b1) -- (w2) -- (b2) -- (w3) -- (bb3)--(w4);
		\draw[-] (b4) -- (w5) -- (b5) -- (w6) -- (b6)--(w7) -- (b7);
		\draw[-] (w1) -- (b4)
		(w2) -- (b5)
		(w3) --(b6)
		(w4) -- (b7);
		\draw[-] (b2)-- (w6);
			\draw[-] (bb3)-- (w7);
		\node[] (no) at (6.5,0.62) {$\times$};

		 \end{scope}
		 
		 		 \begin{scope}[xscale=1.8,shift={(3.,0)},rotate=90]
\draw[gray,dashed,-] (0,0) -- (0,1.24); \draw[gray,dashed,-] (8,0) -- (8,1.24); \fill[black!5] (0,0) rectangle (8,1.24);
\draw[->](4,-0.2)--(4,-0.6);
\node[](no) at (0.5,0.62) {$\vdots$};
\node[](no) at (7.5,0.62) {$\vdots$};

		\node[wvert] (w1) at (1,0.87) {};
		\node[wvert] (w2) at (3,0.87) {};
		\node[wvert] (w3) at (5,0.87) {};
		\node[wvert] (w4) at (7,0.87) {};
		\node[wvert] (w5) at (2,0) {};
		\node[wvert] (w6) at (4,0) {};
		\node[wvert] (w7) at (6,0) {};
		\node[bvert] (b1) at (2,1.24) {};
		\node[bvert] (b2) at (4,1.24) {};
        \node[bvert] (bb3) at (6,1.24) {};
		\node[bvert] (b4) at (1,0.37) {};
		\node[bvert] (b5) at (3,0.37) {};
		\node[bvert] (b6) at (5,0.37) {};
		\node[bvert] (b7) at (7,0.37) {};
		\draw[-] (w1) -- (b1) -- (w2) -- (b2) -- (w3) -- (bb3)--(w4);
		\draw[-] (b4) -- (w5) -- (b5) -- (w6) -- (b6)--(w7) -- (b7);
		\draw[-] (w1) -- (b4)
		(w2) -- (b5)
		(w3) --(b6)
		(w4) -- (b7);
		\draw[-] (b2)edge [bend left=35] (w6) edge [bend right=35] (w6);

		 \end{scope}
		 
		 \begin{scope}[xscale=1.8,shift={(5,0)},rotate=90]
\draw[gray,dashed,-] (0,0) -- (0,1.24); \draw[gray,dashed,-] (8,0) -- (8,1.24); \fill[black!5] (0,0) rectangle (8,1.24);

\node[](no) at (0.5,0.62) {$\vdots$};
\node[](no) at (7.5,0.62) {$\vdots$};

		\node[wvert] (w1) at (1,0.87) {};
		\node[wvert] (w2) at (3,0.87) {};
		\node[wvert] (w3) at (5,0.87) {};
		\node[wvert] (w4) at (7,0.87) {};
		\node[wvert] (w5) at (2,0) {};
		\node[wvert] (w6) at (4,0) {};
		\node[wvert] (w7) at (6,0) {};
		\node[bvert] (b1) at (2,1.24) {};
		\node[bvert] (b2) at (4,1.24) {};
        \node[bvert] (bb3) at (6,1.24) {};
		\node[bvert] (b4) at (1,0.37) {};
		\node[bvert] (b5) at (3,0.37) {};
		\node[bvert] (b6) at (5,0.37) {};
		\node[bvert] (b7) at (7,0.37) {};
		\draw[-] (w1) -- (b1) -- (w2) -- (b2) -- (w3) -- (bb3)--(w4);
		\draw[-] (b4) -- (w5) -- (b5) -- (w6) -- (b6)--(w7) -- (b7);
		\draw[-] (w1) -- (b4)
		(w2) -- (b5)
		(w3) --(b6)
		(w4) -- (b7);

		 \end{scope}

	\end{tikzpicture}
\caption{The sequence of moves in the geometric $R$-matrix transformation. The faces where we perform spider moves are marked by $\times$'s. The dashed arrow indicates a sequence of $n-1$ spider moves.}
\label{fig:Rmatrixsequence1}
\end{figure}

We first recall the construction of \cite{ILP1}*{Section 11}. Suppose the minimal bipartite graph $\Gamma$ contains a cyclic chain of $n$ hexagons. Since $\Gamma$ is minimal, its zig-zag paths cannot be contractible on the torus thus the cyclic chain of hexagons must wind around the torus. We label the hexagons $f_1,\dots,f_n$ in cyclic order, as illustrated in the first (from left) graph in Figure \ref{fig:Rmatrixsequence1}. Let $k \in [n]$. We insert a bigon between the two vertices of the hexagon $f_k$ as shown in the second graph in Figure \ref{fig:Rmatrixsequence1}, dividing the face $f_k$ into three new faces $f_l,f_b,f_r$, and assign opposite weights $\pm x$ to the two newly created edges, where $x \in \C^\times$ is to be determined.  We then perform a sequence of $n$ spider moves, starting at the face $f_r$, followed by $f_{k+1},f_{k+2},\dots, f_{k-1}$. At the end of this sequence of spider moves, the $k$th hexagon again has a bigon. In order to delete the bigon, we need the weights of the two edges of the bigon to be equal and opposite (equivalently, that the monodromy around the bigon is $-1$), which is a linear condition in $x$. We finally delete the bigon, obtaining again a cyclic chain of $n$ hexagons. The sequence of moves defined above is called a \emph{geometric $R$-matrix transformation}. Just like the elementary transformations, it is an involution.

Let $\phi$ be a geometric $R$-matrix transformation, and let $\mu^\phi:\mathcal X_\Gamma \dashrightarrow \mathcal X_\Gamma$ denote the induced birational map of weights. The following result appears in \cite{AGR}, based on constructions in \cite{ILP2}.

\begin{proposition}{\cites{ILP2,AGR}} \la{poiss:rmat}
The map $\mu^\phi$ is Poisson.
\end{proposition}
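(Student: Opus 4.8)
\emph{Proof strategy.} The plan is to unwind $\mu^\phi$ into the three pieces out of which a geometric $R$-matrix transformation is built (Figure~\ref{fig:Rmatrixsequence1}): insertion of a bigon at $f_k$ with edge weights $\pm x$, a composite $\sigma$ of $n$ spider moves, and deletion of a bigon. The middle piece is harmless: each spider move is an elementary transformation, so by the third property of elementary transformations recalled in Section~\ref{sec:ett} (from \cite{GK13}) the induced map $\sigma$ is a Poisson birational isomorphism. All the content is in showing that bigon insertion and deletion, which change the graph and so are not elementary transformations, are compatible with the Poisson structure in the appropriate sense.

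Concretely, let $\Gamma'$ be $\Gamma$ with the bigon inserted at $f_k$, replacing the hexagon $f_k$ by three faces $f_l,f_b,f_r$. Since the two new edges carry opposite weights $\pm x$ one has the constant $X_{f_b}=-1$ and the relation $X_{f_l}X_{f_r}=-X_{f_k}$, so bigon insertion identifies $\mathcal X_\Gamma\times\mathbb G_m$ (extra coordinate $X_{f_l}$) with the hypersurface $V':=\{X_{f_b}=-1\}\subset\mathcal X_{\Gamma'}$, and bigon deletion is the projection to $\mathcal X_\Gamma$. The point I would establish, using only the Goncharov--Kenyon bracket $\{X_f,X_{f'}\}=\varepsilon_{f,f'}X_fX_{f'}$ and the fact that the $2$-gon $f_b$ is adjacent precisely to $f_l$ and $f_r$, is that $V'$ is coisotropic, that its characteristic foliation is the orbit foliation of the $\mathbb G_m$-action rescaling $X_{f_l}$ and $X_{f_r}$ inversely (and fixing all other face variables and both homology monodromies), and that the associated coisotropic reduction of $V'$ is exactly $(\mathcal X_\Gamma,\{\cdot,\cdot\}_\Gamma)$, realized by bigon deletion. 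The same holds for the graph $\Gamma''$ obtained after the spider moves, with its bigon at $f_k$, giving a coisotropic $V''\subset\mathcal X_{\Gamma''}$ reducing to $\mathcal X_\Gamma$.

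With this in place, $\mu^\phi$ is the following composite: insert the bigon (landing on a $\mathbb G_m$-orbit in $V'$, parametrized by $x$), apply $\sigma$, then impose that the resulting bigon of $\Gamma''$ be deletable --- which is the stated linear-in-$x$ condition, picking out the unique point of $\sigma(V')\cap V''$ on that orbit --- and finally delete the bigon via the reduction $V''\to\mathcal X_\Gamma$. Since $\sigma$ is Poisson it carries the coisotropic datum of $V'$ to that of $\sigma(V')$, so $\mu^\phi=\bigl(\text{reduction of }V''\bigr)|_{\sigma(V')\cap V''}\circ\bigl(\text{reduction of }\sigma(V')\bigr)|_{\sigma(V')\cap V''}^{-1}$, and the whole thing is Poisson provided the intersection $\sigma(V')\cap V''$ is a common transversal along which both reductions are Poisson isomorphisms. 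This last point is the crux, and I expect it to be the main obstacle: it is equivalent to a Poisson-commutativity (equivalently, invariance) statement relating the bigon face variable of $\Gamma''$ to the $\sigma$-transported one of $\Gamma'$, and it is not formal --- it requires tracking how the $n$ spider moves transport the relevant $\mathbb G_m$-action, and it genuinely uses that the cyclic chain of hexagons winds around the torus (so that the zig-zag paths bounding it are noncontractible). Carrying it out is exactly the explicit computation with the formulas of \cite{ILP1}*{Section 11} and \cite{GK13}*{Section 4.1} performed in \cite{ILP2} and \cite{AGR}; a less conceptual alternative is to bypass the reduction picture and verify directly that the explicit birational formula for $\mu^\phi$ preserves the brackets of a spanning family of face variables together with the two homology monodromies.
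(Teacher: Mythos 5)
The first thing to note is that the paper contains no proof of this proposition: it is stated with the citation to \cite{ILP2} and \cite{AGR}, and the surrounding text only says that the result ``appears in \cite{AGR}, based on constructions in \cite{ILP2}''. So there is no internal argument to measure yours against; any proof you supply necessarily goes beyond what the paper does.

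Judged on its own terms, your reduction framework is sensible and the local facts you assert do check out against the Goncharov--Kenyon bracket: the bigon face satisfies $X_{f_b}=-1$ and $X_{f_l}X_{f_r}=-X_{f_k}$; the hypersurface $V'=\{X_{f_b}=-1\}$ is coisotropic, and since $f_b$ shares exactly one edge with each of $f_l$ and $f_r$ and none with any other face, its characteristic flow is the Hamiltonian flow of $X_{f_b}$, which rescales $X_{f_l}$ and $X_{f_r}$ inversely; and the reduction recovers $(\mathcal X_\Gamma,\{\cdot,\cdot\}_\Gamma)$ because the $\varepsilon$-entries of $f_l,f_b,f_r$ against any other face sum to those of $f_k$. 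The spider-move piece is Poisson by the properties recalled in Section~\ref{sec:ett}. The genuine gap is that the one step which actually makes $\mu^\phi$ Poisson --- the compatibility of the two reductions of $\sigma(V')$ and $V''$ along their intersection, which comes down to a nonvanishing and invariance statement for the bracket of the $\sigma$-transported bigon variable of $\Gamma'$ with the bigon variable of $\Gamma''$ --- is exactly the step you defer to \cite{ILP2} and \cite{AGR}. As written, your proposal therefore bottoms out in the same external references that the paper cites and does not constitute an independent proof. To close it you would need either to carry out that transversality computation by tracking the $\mathbb{G}_m$-action through the $n$ spider moves of Figure~\ref{fig:Rmatrixsequence1}, or to do the direct verification on a spanning family of face variables and homology monodromies; for the latter, note that the formula in Theorem~\ref{thm:ILP} covers only the hexagons of the chain, and (as the remark following it points out) you would also need the evolution of the neighboring faces from \cite{ILP1}, which this paper deliberately omits.
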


Formulas (11.1) and (11.2) of \cite{ILP1} give the following explicit formula for $(\mu^\phi)^*X_{f_i}$: 

\begin{theorem}[\cite{ILP1}]
\label{thm:ILP}
The rational map $\mu^\phi$ transforms the face weights of the hexagons in the cyclic chain according to the following formula:
\begin{equation}
\label{eq:Rmatrix}
(\mu^\phi)^* X_{f_i}=\frac{\sum\limits_{t=0}^{n-1}\prod\limits_{s=0}^{t-1}X_{f_{i+s}}}{\sum\limits_{t=1}^{n}\prod\limits_{s=1}^{t}X_{f_{i+s}}},
\end{equation}
for $i \in [n]$ and indices are taken modulo $n$. In particular, $\mu^\phi$ does not depend on $k$.
\end{theorem}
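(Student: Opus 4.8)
We outline the argument of \cite{ILP1}, which combines their formulas (11.1) and (11.2). The plan is to realize $\mu^\phi$ as the composition of the birational maps attached to the individual moves that make up the geometric $R$-matrix transformation (Figure~\ref{fig:Rmatrixsequence1}), and to pin down the inserted parameter $x$ from the closing condition.

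The first ingredient is the effect of a single spider move on face weights: it is the cluster $\mathcal X$-mutation at the quadrilateral face being moved \cite{GK13}*{Section 4.1}, which inverts that face weight $X_f$, multiplies the weights of the four edge-adjacent faces by $1+X_f$ or by $(1+X_f^{-1})^{-1}$ according to the bipartite coloring, and leaves all other face weights unchanged. Inserting the bigon at $f_k$ replaces $\Gamma$ by a bipartite graph with one additional parameter — the weight $x$ of the two new edges, assigned opposite signs — and splits the hexagon $f_k$ into the three faces $f_l,f_b,f_r$. One then applies the $n$ spider moves in the order $f_r,f_{k+1},\dots,f_{k-1}$ and composes the corresponding mutation formulas.

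The core step is to carry out this composition while tracking only the hexagon face weights. Because the cyclic chain is a closed path in which each hexagon borders exactly its two neighbors, at each of the $n$ successive spider moves only a controlled subset of hexagon weights gets rescaled, and the product of the rescaling factors telescopes; after the last move the weight at the $i$-th hexagon is an explicit rational function of $X_{f_1},\dots,X_{f_n}$ and $x$. At that point there is again a $2$-gon at $f_k$, and — as recalled in Section~\ref{sec:grmat} — the condition that its monodromy be $-1$, needed to delete it, is a single linear equation in $x$, with solution an explicit rational function of the $X_{f_j}$. Substituting this value back into the telescoped expressions and simplifying yields \eqref{eq:Rmatrix}. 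Since the right-hand side of \eqref{eq:Rmatrix} involves no $k$, the action on the hexagon face weights is $k$-independent; running the same telescoping for the remaining face weights, together with the manifestly $k$-independent action on the Casimirs $C_\alpha$ (which are merely permuted according to the homotopy of zig-zag paths), shows that $\mu^\phi$ itself is independent of $k$.

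I expect the main obstacle to be the bookkeeping in the core step: one must track exactly which neighboring faces — including the $2$-gons created and destroyed along the way — are rescaled at each of the $n$ spider moves, and then verify that the resulting product genuinely collapses to the closed form on the right of \eqref{eq:Rmatrix}. A less computational alternative would be to posit \eqref{eq:Rmatrix} and check directly that it defines an involution compatible with the combinatorics of the cyclic chain and with the Casimirs; but fixing the overall normalization of the formula still seems to require the $2$-gon-closing computation.
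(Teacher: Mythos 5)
This theorem is imported: the paper offers no proof of it, stating only that it follows from formulas (11.1) and (11.2) of \cite{ILP1} (after adjusting for an opposite convention in defining face weights as alternating products of edge weights). Your outline correctly identifies the strategy behind those formulas --- realize $\mu^\phi$ as the composition of the bigon insertion, the $n$ spider moves, the linear closing condition determining $x$, and the bigon deletion, and then compose the known mutation formulas for the spider move from \cite{GK13}*{Section 4.1}. So the approach is the right one and is consistent with how the transformation is defined in Section~\ref{sec:grmat}.

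However, as a proof the proposal has a genuine gap: the entire content of the theorem is the explicit closed-form expression \eqref{eq:Rmatrix}, and you never derive it. The ``core step'' --- tracking which faces (including the transient $2$-gons) are rescaled at each of the $n$ spider moves, solving the linear condition for $x$, substituting back, and verifying that the product actually collapses to $\bigl(\sum_{t=0}^{n-1}\prod_{s=0}^{t-1}X_{f_{i+s}}\bigr)/\bigl(\sum_{t=1}^{n}\prod_{s=1}^{t}X_{f_{i+s}}\bigr)$ --- is exactly the computation that constitutes the proof, and you only assert that it telescopes. Two further points are glossed. First, the claimed $k$-independence of $\mu^\phi$ requires the analogous (unwritten) formulas for the faces adjacent to the chain on either side (cf.\ Remark~3.4); knowing the hexagon weights and the Casimirs alone does not determine a point of $\mathcal X_\Gamma$, since $\mathcal X_\Gamma$ has dimension one more than the number of faces. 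Second, since the paper explicitly warns that its formula differs from that of \cite{ILP1} by a convention, any argument that quotes (11.1)--(11.2) must also carry out the translation between conventions; this is a small but necessary verification that the proposal omits. Your suggested ``less computational alternative'' (checking that \eqref{eq:Rmatrix} is an involution compatible with the Casimirs) would not suffice either, as you note, since it does not pin down the formula uniquely.
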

Note that the formula stated above slightly differs from the one in \cite{ILP1}, due to opposite conventions for computing face weights as alternating products of edge weights.
\begin{remark}
When doing a geometric $R$-matrix transformation, there are three types of faces that change their $X$ variables: the strip of hexagons, the neighboring faces to their left and the neighboring faces to their right. Here we recall only the evolution formula for the strip. \cite{ILP1} also gives the evolution for the neighbors.
\end{remark}

\subsection{Universal covering of \texorpdfstring{$\T$}{T}} \la{sec:univcov}

Let $p:\R^2 \ra \T$ denote the universal covering map of the torus. Let $\widetilde \Gamma=(\widetilde B \sqcup \widetilde W,\widetilde E)$ denote the biperiodic graph $p^{-1}(\Gamma)$ in $\R^2$ whose quotient is $\Gamma$. Let $\alpha_1,\dots,\alpha_{|E_\rho|}$ denote the zig-zag paths in $Z_\rho(\Gamma)$ in cyclic order from right to left (or more invariantly, ordered so that they are increasing in the direction of $\rho$). Then their lifts to the universal cover form a collection of bi-infinite parallel zig-zag paths $\widetilde \alpha_i, i\in \Z,$ in $\widetilde \Gamma$ labeled in order from right to left, such that $p(\widetilde \alpha_i)=\alpha_{j}$, where $1\leq j\leq |E_\rho|$ and $j \equiv i $ mod $|E_\rho|$. 

\subsection{Generalized cluster modular transformations}\la{isom:type}

We say that the minimal bipartite graphs $\Gamma_1$ and $\Gamma_2$ are \emph{isotopic} if there is an isotopy in the torus $\T$ from $\Gamma_1$ to $\Gamma_2$. A \emph{generalized cluster transformation based at $\Gamma$} is a sequence
\[
\phi=\left(\Gamma=\Gamma_0 \stackrel{t_1}{\rightsquigarrow} \Gamma_1 \stackrel{t_2}{\rightsquigarrow} \cdots \stackrel{t_{n-1}}{\rightsquigarrow} \Gamma_{n-1} \stackrel{t_n}{\rightsquigarrow} \Gamma_n=\Gamma \right),
\]
where each $t_i, i\in [n]$, is either an isotopy, an elementary transformation or a geometric $R$-matrix transformation. A generalized cluster transformation $\phi$ gives rise to an induced birational map of weights 
\[
\mu^\phi:\mathcal X_\Gamma \dashrightarrow \mathcal X_\Gamma,
\]
defined as the composition $\mu^{t_n} \circ \cdots \circ \mu^{t_1}$. Proposition \ref{poiss:rmat} implies that like cluster transformations, generalized cluster transformations are Poisson. A generalized cluster transformation is called \emph{trivial} if $\mu^\phi$ is the identity rational map. 

Let $\widehat{\mathcal G}_\Gamma$ denote the group of generalized cluster transformations based at $\Gamma$, and let $\mathcal K_\Gamma$ denote the normal subgroup of $\widehat{\mathcal G}_\Gamma$ consisting of trivial generalized cluster transformations. We call the quotient $\mathcal G_\Gamma := \widehat{\mathcal G}_\Gamma/\mathcal K_{\Gamma}$ the \emph{generalized cluster modular group based at $\Gamma$}. We wish to understand the structure of this group.

A category in which every morphism is an isomorphism is called a \emph{groupoid}. Consider the groupoid $\mathcal A_N$ whose objects are minimal bipartite graphs with Newton polygon $N$, and whose morphisms are given by sequences of isotopies, elementary transformations and geometric $R$-matrix transformations, modulo trivial generalized cluster transformations. Then the generalized cluster modular group $\mathcal G_\Gamma$ based at $\Gamma$ is the group of automorphisms of the object $\Gamma$ of $\mathcal A_N$. If $\Gamma_1$ and $\Gamma_2$ are two graphs with Newton polygon $N$, and $\Gamma_1 \stackrel{\phi}{\rightsquigarrow}\Gamma_2$ is a sequence of isotopies, elementary transformations and geometric $R$-matrix transformations, then the groups $\mathcal G_{\Gamma_1}$ and $\mathcal G_{\Gamma_2}$ are isomorphic, with isomorphism given by conjugation by $\phi$. Therefore, the isomorphism class of the generalized cluster modular group only depends on $N$.

\subsection{The extended affine symmetric group}

A bijection $w : \Z \ra \Z$ is called an \emph{extended affine permutation with period $k$} if  $w(i+k)=w(i)+k$ for all $i \in \Z$. The \emph{extended affine symmetric group} $\widehat S_k$ is the group of all extended affine permutations with period $k$, with group operation given by composition of functions. An extended affine permutation $w$ is conveniently represented in \emph{window notation} $[w(1), \dots, w(k)]$.

Letting $\tau:=[2,3,\ldots ,k,k+1]$,  $s_i:= [1,2,\ldots ,i-1,i+1,i,i+2,\ldots ,k]$ for $1\leq i\leq k-1$ and $s_0=s_k:= [0,2,3,\ldots ,k-2,k-1,k+1]$, we get a presentation of $\widehat S_k$ as the group generated by $\tau,s_0,\ldots,s_{k-1}$ with relations
\be \la{affsym}
s_i^2=1,  \quad s_i s_{i+1} s_i = s_{i+1} s_i s_{i+1}, \quad s_i s_j=s_j s_i \quad \text{if $|i-j|>1$} ,\quad \tau s_{i}\tau ^{-1}=s_{i+1}.
\ee
where the relations are modulo $k$. Consider the map $\theta_k: \widehat S_k \ra \Z$ given by $w \mapsto \frac{1}{k}\sum_{i=1}^k (w(i)-i)$. Reducing modulo $k$, we see that $(\overline{w(1)},\ldots,\overline{w(k)})$ is a permutation of $(\overline{1},\ldots,\overline{k})$, thus $\sum_{i=1}^k w(i)=\sum_{i=1}^k i$ modulo $k$, so $\theta_k(w)$ is always an integer. An elementary computation shows that $\theta_k$ is a group homomorphism. Alternatively, $\theta_k$ is characterized as the unique group homomorphism that maps $\tau$ to $1$ and every $s_i$ to $0$, and therefore, is the ``net translation".

\subsection{The group \texorpdfstring{$\mathcal H_N$}{HN}}

Let $|E_\rho|$ denote the \emph{integral length} of the edge $E_\rho$ of $N$, i.e., the number of primitive vectors in $E_\rho$. Consider the product of extended affine symmetric groups
\[
\prod_{\rho \in \Sigma(1)} \widehat S_{|E_\rho|}.
\]
We denote the generators of $\widehat S_{|E_\rho|}$ by $\tau_\rho,s_{\rho,0},\ldots,s_{\rho,|E_\rho|-1}$. Taking the product of the group homomorphisms $\theta_{|E_\rho|}$, we get a group homomorphism
\[
\theta_N:=\prod_{\rho \in \Sigma(1)} \theta_{|E_\rho|}:\prod_{\rho \in \Sigma(1)} \widehat S_{|E_\rho|} \ra \Z^{\Sigma(1)}.
\]
Let $L_N$ denote the kernel of $\begin{bmatrix}1 \cdots 1
\end{bmatrix} \circ
\theta_N$, where $\begin{bmatrix}1 \cdots 1
\end{bmatrix}:\Z^{\Sigma(1)}\ra \Z$ is the group homomorphism $f \mapsto \sum_{\rho \in \Sigma(1)}f(\rho)$.

Recall that, if $\alpha_1,\ldots,\alpha_{|E_\rho|}$ denote the zig-zag paths associated with the side $E_\rho$ of $N$, then $E_\rho=[\alpha_1]+\cdots+[\alpha_{|E_\rho|}]$ in $H_1(\T,\Z)$. {Recall also that $\langle \cdot ,\cdot \rangle$ denotes the intersection form on $\T$.}
We have an injective group homomorphism
\begin{align*}
j:H_1(\T,\Z) &\cong \Z^2 \hookrightarrow L_N\\
m &\mapsto \left(\tau_\rho^{\left\langle E_\rho,m \right\rangle} \right)_{\rho \in \Sigma(1)}.
\end{align*}
Since $[1 \cdots 1] \circ \theta_N (\tau_\rho)=1$ for all $\rho \in \Sigma(1)$, we have
\[
[1 \cdots 1] \circ \theta_N \circ j (m) =\sum_{\rho \in \Sigma(1)  } \left\langle E_\rho,m \right\rangle = \sum_{\alpha \in Z} \left\langle [\alpha], m \right\rangle= \left\langle \sum_{\alpha \in Z}[\alpha], m \right\rangle=0 \text{ for all }m \in H_1(\T,\Z), 
\]
because the sum of homology classes of all zig-zag paths vanishes in $H_1(\T,\Z)$. The homomorphism $j$ is therefore well-defined. 

\begin{remark}
The embedding $j$ arises from the translation action of $H_1(\T,\Z) \cong \Z^2$ on zig-zag paths in the biperiodic graph $\widetilde \Gamma$. Let $\widetilde \alpha_i, i \in \Z$, denote the bi-infinite collection of parallel zig-zag paths corresponding to a side $\rho$ of $N$ ordered to be increasing in the direction of $\rho$ as in Section~\ref{sec:univcov}. If $m \in H_1(\T,\Z)$, then 
\[
\widetilde \alpha_i + m = \widetilde \alpha_{i + \langle E_\rho,m \rangle} = \widetilde \alpha_{\tau_\rho^{\left\langle E_\rho,m \right\rangle}(i)} .
\]
\end{remark}

\begin{lemma}
$jH_1(\T,\Z)$ is a normal subgroup of $L_N$.
\end{lemma}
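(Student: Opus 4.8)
The plan is to prove something slightly stronger, namely that $jH_1(\T,\Z)$ is contained in the center of the ambient group $\prod_{\rho \in \Sigma(1)} \widehat S_{|E_\rho|}$. Since $L_N$ is a subgroup of this ambient group and we have already checked that $jH_1(\T,\Z) \subseteq L_N$, it follows that $jH_1(\T,\Z)$ lies in the center of $L_N$, and any subgroup of the center of a group is normal; this gives the lemma at once.

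The key observation is that, for every $k\geq 1$, the element $\tau^k \in \widehat S_k$ — which acts as $i \mapsto i+k$ — is central in $\widehat S_k$: if $w \in \widehat S_k$, then by the defining periodicity $w(\tau^k(i)) = w(i+k) = w(i)+k = \tau^k(w(i))$, so $w$ commutes with $\tau^k$. Next I would use that all the zig-zag paths $\alpha_1,\dots,\alpha_{|E_\rho|}$ associated with the side $E_\rho$ have the same homology class, the primitive edge vector $v_\rho$ of that side, so that $E_\rho=[\alpha_1]+\cdots+[\alpha_{|E_\rho|}]=|E_\rho|\,v_\rho$ in $H_1(\T,\Z)$. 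Hence for every $m\in H_1(\T,\Z)$ the integer $\langle E_\rho,m\rangle=|E_\rho|\langle v_\rho,m\rangle$ is divisible by $|E_\rho|$, and therefore
\[
j(m)_\rho=\tau_\rho^{\langle E_\rho,m\rangle}=\bigl(\tau_\rho^{|E_\rho|}\bigr)^{\langle v_\rho,m\rangle}
\]
is a power of the central element $\tau_\rho^{|E_\rho|}\in\widehat S_{|E_\rho|}$. Thus each component of $j(m)$ is central in its factor, so $j(m)$ is central in $\prod_{\rho\in\Sigma(1)}\widehat S_{|E_\rho|}$, and the argument is complete.

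There is essentially no obstacle here: the only point that deserves a line of justification is the identity $E_\rho=|E_\rho|\,v_\rho$, which is just the statement that the integral length of a polygon edge counts the number of copies of its primitive vector along it. One could instead verify normality by a direct conjugation computation using the braid/translation relations \eqref{affsym} (in particular $\tau s_i\tau^{-1}=s_{i+1}$), but the centrality argument is shorter and has the bonus of recording that $jH_1(\T,\Z)$ is central, which is convenient when forming the quotient $\mathcal H_N$.
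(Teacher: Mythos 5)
Your proof is correct and takes essentially the same approach as the paper: both arguments reduce the claim to the two facts that $\tau_\rho^{|E_\rho|}$ is central in $\widehat S_{|E_\rho|}$ and that $\langle E_\rho,m\rangle$ is a multiple of $|E_\rho|$, so that $j(m)$ lies in the center. The only cosmetic difference is that you check centrality directly from the periodicity $w(i+k)=w(i)+k$ while the paper derives it from the relation $\tau s_i\tau^{-1}=s_{i+1}$; you also spell out the divisibility via $E_\rho=|E_\rho|\,v_\rho$, which the paper asserts without comment.
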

\begin{proof}
 Using the relation $\tau s_i \tau^{-1}=s_{i+1}$ in $\widehat S_n$ $n$ times, we have $\tau^{n} s_i \tau^{-n}=s_i$, thus $s_i \tau^n s_i^{-1}=\tau^n$. Since $\langle E_\rho,m \rangle$ is always a multiple of $|E_\rho|$, the above calculation shows that $j H_1(\T,\Z)$ is a normal subgroup of $L_N$.
\end{proof}
Since translations by $H_1(\T,\Z)$ are trivial generalized cluster transformations, we consider the quotient group $\mathcal H_N:= L_N/j H_1(\T,\Z)$. As in \cite{FM16}, we construct a group homomorphism 
\[
\overline\lambda:\widehat{\mathcal G}_\Gamma \ra \mathcal H_N
\]
as follows: During a generalized cluster transformation $\phi$, each zig-zag path in the biperiodic graph $\widetilde \Gamma$ of $\Gamma$ is translated so that after $\phi$, it lies over a parallel zig-zag path.  Suppose that after the generalized cluster transformation $\phi$, the zig-zag path $\widetilde \alpha_i$ lies over {the initial location of} $\widetilde \alpha_j$. Then we define the extended affine permutation  $w_\rho \in \widehat S_{|E_\rho|}$ by $w_\rho(i)=j$. This defines a group homomorphism
\[
\lambda:\widehat{\mathcal G}_\Gamma \ra \prod_{\rho \in \Sigma(1)} \widehat S_{|E_\rho|}.
\]
The argument in \cite{FM16}*{Section 7.3} shows that the image of $\widehat{\mathcal G}_\Gamma$ is contained in $L_N$ (see \cite{Gi}*{Section 2.3} for more details). Define $\overline\lambda$ to be the composition
\[
\widehat{\mathcal G}_\Gamma \xrightarrow[]{\lambda} L_N \ra \mathcal H_N.
\]
Rephrasing Theorem~\ref{thm::main}, the main result of the paper is:
\begin{theorem}\la{thm::main1}Suppose $N$ contains at least one interior lattice point. Then $\overline\lambda$ is surjective with kernel the subgroup of trivial generalized cluster transformations $\mathcal K_{\Gamma}$. Therefore, by the first isomorphism theorem, $\overline\lambda$ induces an isomorphism of the generalized cluster modular group $\mathcal G_\Gamma$ based at $\Gamma$ with $\mathcal H_N$.
If $N$ has no interior lattice points, then $\mathcal G_\Gamma \cong \prod_{\rho \in \Sigma(1)} S_{|E_\rho|}$ is a finite group.
\end{theorem}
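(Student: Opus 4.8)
The plan is to prove Theorem~\ref{thm::main1} by separating a surjectivity step (Section~\ref{sec:surj}) from an injectivity step (Section~\ref{sec:proof}), the latter resting on the action of generalized cluster transformations on spectral data (Section~\ref{sec:transform}); the two cases (with and without an interior lattice point) differ only in which target group the faithful action lands in.

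For surjectivity, I would first observe that $L_N$ is generated by two families of elements: the ``translation part'' $T=\{(\tau_\rho^{a_\rho})_{\rho\in\Sigma(1)}:\sum_\rho a_\rho=0\}$, and the Coxeter generators $s_{\rho,i}$, which lie in $L_N$ because $\theta_{|E_\rho|}(s_{\rho,i})=0$; indeed, writing each $w_\rho=\tau_\rho^{\theta_{|E_\rho|}(w_\rho)}v_\rho$ with $v_\rho$ in the affine symmetric group $\langle s_{\rho,0},\dots,s_{\rho,|E_\rho|-1}\rangle$ exhibits every element of $L_N$ as a product of one element of $T$ and one product of $s_{\rho,i}$'s. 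Elements of $T$ are realized by ordinary cluster transformations: under $\lambda$ a cluster transformation with zig-zag function $f\in\Z^{\Sigma(1)}_0$ maps to $(\tau_\rho^{f(\rho)})_\rho$, and by \cite{Gi} every such $f$ occurs when $N$ has an interior lattice point. For the generators $s_{\rho,i}$ one needs the geometric lemma advertised in the introduction: given two parallel zig-zag paths adjacent in the cyclic order of $Z_\rho$, using isotopies and elementary transformations (and the flexibility provided by Theorem~\ref{thm:gk2.5}) one reshapes $\Gamma$ so that the annular region between them becomes a cyclic chain of hexagons, and then applies the geometric $R$-matrix transformation of Figure~\ref{fig:Rmatrixsequence} to interchange them, realizing $s_{\rho,i}$. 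Hence $\lambda(\widehat{\mathcal G}_\Gamma)=L_N$ and $\overline\lambda$ is onto $\mathcal H_N$.

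For injectivity I would establish, extending Fock's analysis in Section~\ref{sec:transform}, the full description of the action of generalized cluster transformations on $\mathcal S_N$: the spectral curve is fixed, the spectral divisor is translated in the Jacobian of $\mathcal C$, and the parameterizations $\nu$ are acted on through the image of $\lambda(\phi)$ in $\prod_{\rho\in\Sigma(1)}S_{|E_\rho|}$ obtained by reducing each $w_\rho$ modulo $|E_\rho|$; the genuinely new point is that a geometric $R$-matrix transformation contributes exactly the transposition of the two points at infinity indexed by the two interchanged zig-zag paths, and translates the spectral divisor trivially. Two consequences follow. First, this action depends only on $\overline\lambda(\phi)\in\mathcal H_N$ (lattice translations act trivially on $\mathcal S_N$ since they induce the identity on $\mathcal X_N$), giving a homomorphism $\Phi:\mathcal H_N\to\operatorname{Aut}(\mathcal S_N)$ with $\Phi\circ\overline\lambda$ equal to the action. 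Second, $\Phi$ is injective: on the quotient $\mathcal H_N\twoheadrightarrow\prod_{\rho}S_{|E_\rho|}$ this is immediate because distinct symmetric-group elements permute the $\nu$ differently, while on the complementary ``pure translation'' subgroup of $\mathcal H_N$ (the kernel of that quotient) it amounts to injectivity of the Jacobian-translation map, which I would deduce from the corresponding statement for ordinary cluster transformations in \cite{Gi},\cite{Fock} together with the fact that $R$-matrix transformations do not move the divisor. Now if $\overline\lambda(\phi)=1$ then $\phi$ acts on $\mathcal S_N$ as $\Phi(1)=\operatorname{id}$, so by birationality and equivariance of the spectral transform $\kappa_{\Gamma,\w}$ (\cite{Fock}) we get $\mu^\phi=\operatorname{id}$, i.e.\ $\phi\in\mathcal K_\Gamma$; conversely if $\mu^\phi=\operatorname{id}$ then $\phi$ acts trivially on $\mathcal S_N$, so $\Phi(\overline\lambda(\phi))=\operatorname{id}$ and hence $\overline\lambda(\phi)=1$. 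Thus $\ker\overline\lambda=\mathcal K_\Gamma$, and with surjectivity the first isomorphism theorem yields $\mathcal G_\Gamma\cong\mathcal H_N$.

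When $N$ has no interior lattice point, $g=0$, there is no spectral divisor, and $\mathcal S_N$ records only the spectral curve and the parameterizations $\nu$; the action of $\mathcal G_\Gamma$ on $\mathcal S_N$ reduces to the $\prod_{\rho}S_{|E_\rho|}$-action on $\nu$, which is faithful because $\kappa_{\Gamma,\w}$ is still birational, so $\mathcal G_\Gamma\hookrightarrow\prod_{\rho}S_{|E_\rho|}$; surjectivity onto this finite group is again supplied by the adjacent-swap lemma, giving $\mathcal G_\Gamma\cong\prod_{\rho\in\Sigma(1)}S_{|E_\rho|}$. I expect the main obstacle to be precisely the spectral-data computation for the geometric $R$-matrix transformation in Section~\ref{sec:transform}: one must show in one stroke that it realizes the expected transposition of points at toric infinity and that it leaves the class of the spectral divisor in the Jacobian unchanged, since the whole injectivity argument --- and the enlargement from cyclic to full symmetric groups --- depends on this. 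A secondary, more combinatorial, difficulty is the construction of a cyclic chain of hexagons bounded by two prescribed adjacent parallel zig-zag paths.
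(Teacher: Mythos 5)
Your proposal is correct and follows essentially the same route as the paper: surjectivity is obtained by realizing the translation part of $L_N$ via ordinary cluster transformations from \cite{Gi} and the Coxeter generators $s_{\rho,i}$ via the adjacent-swap construction (the paper's Theorem~\ref{thm::rmat}), while the kernel is identified with $\mathcal K_\Gamma$ by showing that the action on spectral data $(\mathcal C, S, \nu)$ factors through $L_N$ and is faithful modulo $jH_1(\T,\Z)$, using birationality of the spectral transform and the fact that the geometric $R$-matrix fixes $\mathcal C$ and the divisor class while transposing two points at infinity (Propositions~\ref{rmatrixspec}, \ref{prop:cmtspec}, \ref{prop:gcmtspec}). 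The only cosmetic difference is that you package the injectivity step as injectivity of a map $\mathcal H_N\to\mathrm{Bir}(\mathcal S_N)$ split along $\mathcal H_N\twoheadrightarrow\prod_\rho S_{|E_\rho|}$, whereas the paper handles the translation part by replacing a $\nu$-preserving $\phi$ with an equivalent sequence of elementary transformations and invoking \cite{Gi}*{Theorem 4.8}; both defer the same Jacobian-translation injectivity to \cite{Gi}.
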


\begin{remark}
The rank of the Poisson structure on $\mathcal X_N$ is $2g$, where $g$ is the number of interior lattice points in $N$, so the case when $g=0$ is uninteresting from the point of view of integrable systems.
\end{remark}

\begin{example}
\label{ex:sqoct}
The Newton polygon $N$ in Figure~\ref{fig:sqoct} corresponds to the $n=2$ case of cross-ratio dynamics \cite{AGR}. Let $\alpha,\beta,\gamma,$ and $\delta$ denote the blue, yellow, green and red rays in $\Sigma(1)$ respectively. Then $jH_1(\T,\Z) = \langle j(1,0)=\tau_\alpha^{-2}  \tau_{\gamma}^{2}, j(0,1)=\tau_\beta^{-2} \tau_{\delta}^{2}  \rangle$,  so $\mathcal H_N$ is generated by 
\[
\{s_{\rho,i}\mid\rho \in \{\alpha,\beta,\gamma,\delta\}, i \in \{0,1\}\}\cup\{\tau_{\sigma}^{-1} \tau_\rho\mid \sigma \neq \rho \in \{\alpha,\beta,\gamma,\delta\}\}
\]
with relations
\begin{align*}
s_{\rho,i}^2&=1,  & s_{\rho,i} s_{\rho,i+1} s_{\rho,i} &= s_{\rho,i+1} s_{\rho,i} s_{\rho,i+1}, \\
 (\tau_{\sigma}^{-1}\tau_\rho) s_{\rho,i}(\tau_{\sigma}^{-1}\tau_\rho)^{-1}&=s_{\rho,i+1}, & (\tau_{\sigma}^{-1} \tau_\rho) s_{\eta,i}&=s_{\eta,i} (\tau_{\sigma}^{-1} \tau_\rho), \\(\tau_{\sigma}^{-1} \tau_\rho) (\tau_{\rho}^{-1} \tau_\sigma)&=1, &
 (\tau_{\sigma}^{-1} \tau_\rho) (\tau_{\rho}^{-1} \tau_\eta) &= \tau_{\sigma}^{-1} \tau_\eta \\ (\tau_{\alpha}^{-1} \tau_{\gamma})^2&=1, &(\tau_{\beta}^{-1} \tau_{\delta})^2&=1,
\end{align*}
where $\sigma$, $\rho$ and $\eta$ are three distinct elements of $\{\alpha,\beta,\gamma,\delta\}$. Since all other $\tau_\sigma^{-1} \tau_\rho$ are generated by the three below, the generalized cluster modular group $\mathcal G_\Gamma$ is generated by the image under $\overline\lambda^{-1}$ of the following elements of $\mathcal H_N$:
\begin{enumerate}
\item $s_{\rho,i}$: Figure~\ref{fig:gen}(a) shows a generalized cluster transformation for $\overline \lambda^{-1}(s_{\alpha,1})$, and the other choices of $(\rho,i)$ are obtained by symmetry;
\item $\tau_{\alpha}^{-1} \tau_{\delta}$: Figure~\ref{fig:gen}(b);
\item $\tau_\alpha^{-1}\tau_\gamma$: Translation by $(\frac 1 2,0)$.
\item $\tau_\beta^{-1} \tau_{\delta}$: Translation by $(0,\frac 1 2)$.
\end{enumerate}
\end{example}

\section{Existence of geometric \texorpdfstring{$R$}{R}-matrix transformations for general minimal bipartite graphs}
\la{sec:surj}

In this section, we prove that $\overline{\lambda}$ is surjective. We begin by investigating what a geometric $R$-matrix transformation does to zig-zag paths.
\begin{proposition} \la{prop:interchange}
Suppose $\alpha_1,\alpha_2$ are the two zig-zag paths that bound the cyclic chain of hexagons as shown on the left-hand side of Figure \ref{fig:Rmatrixsequence}. Then the geometric $R$-matrix transformation interchanges the monodromies around these two zig-zag paths, while the monodromies around all other zig-zag paths are invariant.
\end{proposition}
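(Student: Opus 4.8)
The plan is to decompose the geometric $R$-matrix transformation $\phi$ into its elementary pieces — one bigon insertion, then $n$ spider moves, then one bigon deletion — and to follow every zig-zag path, together with the monodromy $C_\beta=\chi^{[\beta]}$ around it, through the whole sequence. Three facts drive the argument. (i) A spider move $s$ preserves the monodromy around the zig-zag path one is tracking: $(\mu^s)^*C_{s(\alpha)}=(\mu^s)^*\chi^{s_*[\alpha]}=\chi^{[\alpha]}=C_\alpha$, since elementary transformations preserve monodromies around zig-zag paths (Section~\ref{sec:ett}). (ii) Inserting (or deleting) a bigon $f_b$ with edge-weights $\pm x$ between two vertices of the hexagon $f_k$ affects only the two zig-zag paths that turn a corner of $f_k$ at those two vertices (one black, one white); by the construction recalled in Section~\ref{sec:grmat} these are exactly the two chain-bounding paths $\alpha_1,\alpha_2$. (iii) The net homotopy of zig-zag paths induced by $\phi$ is the transposition $(\alpha_1\,\alpha_2)$ and fixes every other zig-zag path, as recalled in Section~\ref{sec:modular} and pictured in Figure~\ref{fig:Rmatrixsequence}.

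To make (ii) quantitative I would carry out the zig-zag surgery locally at the two endpoints $\bw,\w$ of the bigon. Every zig-zag path disjoint from $f_k$ is literally unchanged, and among the paths meeting $f_k$ the transverse ones turn a corner at a vertex where no new edge was attached, so they are unchanged as well. Each of the remaining two paths, $\alpha_1$ turning at $\bw$ and $\alpha_2$ turning at $\w$, acquires a detour that runs out along one new edge and back along the other; as a $1$-cycle in the new graph this detour equals $\pm[\partial f_b]$. Hence, writing $\alpha_i'$ for the detoured path, the bigon-insertion map pulls $C_{\alpha_i'}$ back to $C_{\alpha_i}\cdot X_{f_b}^{\pm1}$ and pulls every other monodromy back to itself; since the two edges of $f_b$ carry opposite weights, $X_{f_b}=-1$, so $C_{\alpha_i'}$ pulls back to $-C_{\alpha_i}$. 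The same computation applied to the bigon deletion at the end — legitimate precisely because $x$ was chosen so that the monodromy around the final bigon is $-1$ — contributes another factor $X_{f_b}^{\pm1}=-1$ to the two paths through that bigon and the identity to all others.

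Assembling the bookkeeping: for a zig-zag path $\beta$ of $\Gamma$, track its successive images $\beta_0=\beta,\beta_1,\ldots,\beta_m$ along $\Gamma=\Gamma_0\rightsquigarrow\cdots\rightsquigarrow\Gamma_m=\Gamma$, with $\beta_i\in Z(\Gamma_i)$. Composing the pullbacks and using (i) on the spider steps and (ii) on the two bigon steps, $(\mu^\phi)^*C_{\beta_m}=\varepsilon\,C_\beta$, where $\varepsilon$ is the product of the signs contributed by the insertion and the deletion. If $\beta\notin\{\alpha_1,\alpha_2\}$ then none of the $\beta_i$ passes through either bigon, so $\varepsilon=1$, while $\beta_m=\beta$ by (iii); thus $(\mu^\phi)^*C_\beta=C_\beta$. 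If $\beta=\alpha_1$ then $\beta_0$ passes through the first bigon and $\beta_{m-1}$ through the last, so $\varepsilon=(-1)(-1)=1$, while $\beta_m=\alpha_2$ by (iii); thus $(\mu^\phi)^*C_{\alpha_2}=C_{\alpha_1}$, and the same argument with $\alpha_1,\alpha_2$ exchanged gives $(\mu^\phi)^*C_{\alpha_1}=C_{\alpha_2}$ — the asserted interchange. As a consistency check on the ratio part one can instead use Theorem~\ref{thm:ILP}: the hexagon chain sweeps out an annulus bounded by $\alpha_1$ and $\alpha_2$, so $C_{\alpha_1}/C_{\alpha_2}=\prod_{i=1}^n X_{f_i}$, and the identity $B_i=X_{f_{i+1}}A_{i+1}$ (where $A_i$ and $B_i$ denote the numerator and denominator of \eqref{eq:Rmatrix} at index $i$) makes $\prod_i (\mu^\phi)^*X_{f_i}$ telescope to $\big(\prod_i X_{f_i}\big)^{-1}=C_{\alpha_2}/C_{\alpha_1}$.

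The part I expect to be the real work is (ii): one must do the local zig-zag combinatorics at $\bw$ and $\w$ carefully enough to be sure that (a) exactly $\alpha_1$ and $\alpha_2$ get rerouted, with no transverse path affected, no spurious contractible two-gon zig-zag loop around $f_b$ created, and $\alpha_1,\alpha_2$ remaining distinct; and (b) the rerouting cycle is $[\partial f_b]$ up to sign, so that the only arithmetic effect is multiplication by $X_{f_b}=-1$. Once this local picture is pinned down, the invariance of monodromies under spider moves and the global transposition (iii) are already in hand, and the rest is routine sign bookkeeping.
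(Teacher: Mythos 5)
Your strategy is genuinely different from the paper's. The paper argues globally: bigon insertion/deletion leaves the Kasteleyn matrix unchanged (the two weights cancel) and spider moves preserve the spectral curve, so the spectral curve is invariant under the whole transformation; since for minimal graphs the monodromies around zig-zag paths are determined by the spectral curve, and the paths in $Z_\rho\setminus\{\alpha_1,\alpha_2\}$ are untouched, the pair $\{C_{\alpha_1},C_{\alpha_2}\}$ is either fixed pointwise or swapped; the telescoping product $\prod_i(\mu^\phi)^*X_{f_i}=\bigl(\prod_i X_{f_i}\bigr)^{-1}$ extracted from \eqref{eq:Rmatrix} then forces the swap. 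You instead track each monodromy locally through every move. That buys independence from the nontrivial fact that Casimirs are read off the spectral curve, and from Theorem~\ref{thm:ILP} (which you use only as a check); the price is the local bigon combinatorics of your step (ii), which you correctly flag as the real work and which does come out as you describe (exactly the two corner-cutting paths at the bigon's endpoints are rerouted, each by $\pm[\partial f_b]$ with $X_{f_b}=-1$).

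The one point to be careful about is your input (iii), that the composed combinatorial bijection on zig-zag paths is the transposition $(\alpha_1\,\alpha_2)$. The paper never establishes this independently: it is asserted in the caption of Figure~\ref{fig:Rmatrixsequence}, and its rigorous justification at the level of the spectral data (Proposition~\ref{rmatrixspec}) is derived \emph{from} Proposition~\ref{prop:interchange}, so citing it here risks circularity. It is a true, purely combinatorial statement, but to use it you must actually track the lens of crossings around the cylinder through all $n$ spider moves --- or sidestep it. Your own bookkeeping already shows that the bijection fixes everything outside $\{\alpha_1,\alpha_2\}$ and that $(\mu^\phi)^*C_{\beta_m}=C_{\alpha_1}$ for some $\beta_m\in\{\alpha_1,\alpha_2\}$; your ``consistency check'' that the ratio $C_{\alpha_1}/C_{\alpha_2}$ inverts is then exactly what is needed to rule out $\beta_m=\alpha_1$, and should be promoted from an aside to the decisive step (this is precisely what the paper does). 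A minor correction there: the identity is $B_i=X_{f_{i+1}}A_{i+2}$, not $X_{f_{i+1}}A_{i+1}$; the cyclic product telescopes to $\bigl(\prod_iX_{f_i}\bigr)^{-1}$ either way.
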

\begin{proof}
Suppose $\alpha_1,\alpha_2 \in Z_\rho$, $\rho \in \Sigma(1)$. Since the weights of the two edges in the bigon sum to zero, the Kasteleyn matrix and therefore the spectral curve are invariant under the process of adding/deleting a bigon. Since elementary transformations also preserve the spectral curve, the spectral curve is invariant under geometric $R$-matrix transformations. Note that the graphs before and after a geometric $R$-matrix transformation are minimal, even though the intermediate steps after adding a bigon are not minimal. As a consequence, since the monodromies around zig-zag paths are determined by the spectral curve for minimal graphs, and since the zig-zag paths in $Z_\rho - \{\alpha_1,\alpha_2\}$ are unaffected by the geometric $R$-matrix transformation $\phi$, we must have:
\begin{enumerate}
    \item either $(\mu^\phi)^* C_{\alpha_1}=C_{\alpha_1}$ and $(\mu^\phi)^* C_{\alpha_2}=C_{\alpha_2}$, 
    \item or $(\mu^\phi)^* C_{\alpha_1}=C_{\alpha_2}$ and $(\mu^\phi)^* C_{\alpha_2}=C_{\alpha_1}$.
\end{enumerate}
With the notation of Section \ref{sec:grmat}, we have $\frac{C_{\alpha_1}}{C_{\alpha_2}}= \prod_{i=1}^n X_{f_i}$. Note that \eqref{eq:Rmatrix} can be rewritten as 
\be \la{ee::ee}
(\mu^\phi)^* X_{f_i}=\frac{1}{X_{f_{i+1}}}\frac{\sum\limits_{t=0}^{n-1}\prod\limits_{s=0}^{t-1}X_{f_{i+s}}}{\sum\limits_{t=0}^{n-1}\prod\limits_{s=0}^{t-1}X_{f_{i+s+2}}}.
\ee
Using \eqref{ee::ee}, we compute
\begin{align*}
    (\mu^\phi)^*\left(\frac{C_{\alpha_1}}{C_{\alpha_2}}\right)=\prod_{i=1}^n(\mu^\phi)^* X_{f_i}=\frac{1}{\prod_{i=1}^n X_{f_i}}=\frac{C_{\alpha_2}}{C_{\alpha_1}},
\end{align*}
so we must have $(\mu^\phi)^* C_{\alpha_1}=C_{\alpha_2}$ and $(\mu^\phi)^* C_{\alpha_2}=C_{\alpha_1}$. Finally, the monodromy around any zig-zag path in $Z_\sigma, \sigma \in \Sigma(1)-\{\rho\}$, must be invariant, since such a zig-zag path is affected only by the spider moves occurring during the geometric $R$-matrix transformation, which do not change its monodromy (cf. Section \ref{sec:ett}).
\end{proof}

The rest of this section is mostly devoted to prove the following result.

\begin{theorem}\la{thm::rmat}
Suppose $\rho \in \Sigma(1)$ such that $k:=|E_\rho|=\# Z_\rho \geq 2$. Let $Z_\rho = \{\alpha_1,\dots,\alpha_k\}$, where $\alpha_1,\dots,\alpha_k$ are labeled in cyclic order around $\T$ such that $\alpha_{i+1}$ is to the right of $\alpha_i$, and the indices are defined modulo $k$ (so $\alpha_{k+1}=\alpha_1$). For every $i \in [k]$, there is a sequence of elementary transformations and geometric $R$-matrix transformations such that the positions of $\alpha_i$ and $\alpha_{i+1}$ are interchanged, while the positions of all other zig-zag paths are unchanged.
\end{theorem}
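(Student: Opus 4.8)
The plan is to realize the desired swap by conjugating a single geometric $R$-matrix transformation. It suffices to produce a minimal bipartite graph $\Gamma'$ with Newton polygon $N$ together with a sequence $\psi$ of elementary transformations and isotopies $\Gamma \rightsquigarrow \Gamma'$ such that, tracking zig-zag paths through $\psi$, the paths $\alpha_i$ and $\alpha_{i+1}$ become the two zig-zag paths bounding a cyclic chain of hexagons in $\Gamma'$. Granting this, let $\phi_R:\Gamma'\rightsquigarrow\Gamma'$ be the geometric $R$-matrix transformation of Section~\ref{sec:grmat} performed on that chain, and set $\phi:=\psi^{-1}\circ\phi_R\circ\psi$, a generalized cluster transformation based at $\Gamma$ built only from elementary transformations, isotopies, and one geometric $R$-matrix transformation. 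Elementary transformations and isotopies induce homology-class-preserving bijections of zig-zag paths which never allow two parallel zig-zag paths to cross, hence respect the cyclic order within each $Z_\sigma$; and by Proposition~\ref{prop:interchange}, together with the homotopy description of $\phi_R$ in Figure~\ref{fig:Rmatrixsequence}, the transformation $\phi_R$ interchanges the positions of the two zig-zag paths bounding the chain while fixing the positions of all others. Composing, $\phi$ interchanges the positions of $\alpha_i$ and $\alpha_{i+1}$ and leaves the positions of all other zig-zag paths unchanged. Since $i\in[k]$ is arbitrary, this proves the theorem.

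\textbf{The annular strip between $\alpha_i$ and $\alpha_{i+1}$.} To construct $\Gamma'$ I would work in the region of $\T$ separating $\alpha_i$ from $\alpha_{i+1}$. The class $[\alpha_i]\in H_1(\T,\Z)$ is primitive (it is a primitive edge vector of $E_\rho$), so the two disjoint parallel simple closed curves $\alpha_i,\alpha_{i+1}$ cut $\T$ into two open annuli; let $A$ be the one which, because $\alpha_i$ and $\alpha_{i+1}$ are consecutive in the cyclic order on $Z_\rho$, meets none of the other zig-zag paths of $Z_\rho$. Minimality of $\Gamma$ is what makes $A$ tractable: the zig-zag paths of the other rays $\sigma\neq\rho$ are in minimal position with respect to $\alpha_i$, to $\alpha_{i+1}$, and to one another, so after an isotopy their intersections with $A$ form a taut transversal strand system running from one boundary circle of $\overline A$ to the other and forming no bigons. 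The part of $\Gamma$ lying in the closed strip bounded by $\alpha_i$ and $\alpha_{i+1}$, equipped with this strand system, is thus a \emph{minimal weighted bipartite graph on the annulus with prescribed boundary strand data}, and the object we are after — a cyclic chain of $n$ hexagons having $\alpha_i,\alpha_{i+1}$ as its two boundary zig-zag paths, where $n=\sum_{\beta\in Z\setminus Z_\rho}\lvert\langle[\alpha_i],[\beta]\rangle\rvert$ — is precisely the ``standard model'' of such an annular graph.

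\textbf{Reducing the strip to a hexagonal chain; the main obstacle.} The heart of the argument is to show that this annular graph can be brought to the standard hexagonal-chain model by a finite sequence of spider moves, contraction/uncontraction moves and isotopies all \emph{supported in $A$}; since each such move happens in a disk contained in $A$, it extends to an elementary transformation (or isotopy) of $\Gamma$ that is the identity outside $A$, and the concatenation is the required $\psi$. I would prove this by induction on a complexity statistic of the annular graph, for instance its number of interior faces, with the hexagonal chain as the base case; in the inductive step one uses minimality (no bigons, taut strands) to locate a face at which a spider move, or a move strictly decreasing the statistic, can be performed. Equivalently, one may invoke the annular analogue of Theorem~\ref{thm:gk2.5} — that minimal weighted bipartite graphs on the annulus with a fixed transversal strand system are connected through spider moves — and observe that the hexagonal chain is one such graph. (An alternative route, avoiding an annular reduction theorem: construct a single minimal torus graph $\Gamma_*$ with Newton polygon $N$ in which, for every ray, every pair of consecutive zig-zag paths bounds a cyclic chain of hexagons; connect $\Gamma$ to $\Gamma_*$ by Theorem~\ref{thm:gk2.5}; and transport the pair $\{\alpha_i,\alpha_{i+1}\}$ along that sequence, using that it preserves the cyclic order within $Z_\rho$.) The main obstacle is exactly this normal-form statement inside the annulus: one must verify that the reduction moves can always be chosen to remain within $A$, that they terminate at the hexagonal chain, and — automatically, since its boundary zig-zag paths are $\alpha_i$ and $\alpha_{i+1}$ — that the chain so produced winds nontrivially around $\T$, so that the geometric $R$-matrix transformation of Section~\ref{sec:grmat} genuinely applies to it.
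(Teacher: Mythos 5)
Your global strategy (conjugate one geometric $R$-matrix transformation by a sequence of elementary transformations that creates a cyclic chain of hexagons bounded by $\alpha_i$ and $\alpha_{i+1}$) is exactly the paper's, and the first paragraph of your argument is fine. The gap is in your description of the annulus $A$ and, consequently, in the reduction you propose to carry out inside it. You claim that because $\alpha_i$ and $\alpha_{i+1}$ are consecutive in $Z_\rho$, the annulus $A$ between them meets only zig-zag paths transversal to $\rho$, so that $\Gamma\cap \overline A$ is an annular graph with a taut system of strands running from one boundary circle to the other. This is false in general: zig-zag paths $\beta$ with $[\beta]=-[\alpha_i]$ belong to the ray $-\rho$, not to $Z_\rho$, have $\langle[\alpha_i],[\beta]\rangle=0$, and are closed curves isotopic to $\alpha_i$; such a $\beta$ may lie entirely inside $A$. (This already happens in the paper's running example: in the square-octagon graph of Figure~\ref{fig:sqoct}, each annulus between the two blue zig-zag paths contains a green, antiparallel one, and the sequence in Figure~\ref{fig:gen}(a) visibly has to deform the green path before a hexagonal chain appears.) Such a closed strand inside $A$ cannot be expelled by moves supported in $A$ — it has to be slid across $\alpha_i$ or $\alpha_{i+1}$ — so your normal-form claim "reduce to the hexagonal chain by moves supported in $A$" fails, and your count $n=\sum_{\beta\in Z\setminus Z_\rho}|\langle[\alpha_i],[\beta]\rangle|$ ignores these paths. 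This is precisely the case excluded from Chepuri's cylinder result, as the paper points out in Remark~\ref{remark::sunita}; the "annular analogue of Theorem~\ref{thm:gk2.5}" you invoke is not available in the form you need.

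For contrast: the paper does not localize to the annulus. It uses Theorem~\ref{thm:gk2.5} to pass to a specific graph obtained from a taut geodesic strand configuration in a fundamental rectangle $R$ with sides parallel to $[\alpha_i]$, arranges the boundary data so that the endpoints of $\alpha_i$ and $\alpha_{i+1}$ are adjacent, and then applies the Postnikov--Thurston theory of triple-crossing diagrams in the disk (Theorem~\ref{thm:thur} and Proposition~\ref{prop:parallel}) to make first $\alpha_i$ and then $\alpha_{i+1}$ boundary-parallel; the $2$--$2$ moves involved sweep these strands past everything else in $R$, including any antiparallel paths, and only at the end does the hexagonal chain sit between them. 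Your alternative route via a single well-chosen graph $\Gamma_*$ is closer to what the paper actually does, but constructing $\Gamma_*$ and verifying the chain exists there is the substance of the paper's Steps 1--2, which you have not supplied. To repair your argument you must either prove the annular normal-form statement allowing closed antiparallel strands inside $A$ (which requires moves crossing $\partial A$), or abandon the localization and argue in the disk as the paper does.
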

\begin{remark}
When $k=2$, there are two such sequences: one where $\alpha_1$ moves to the left ($i=2$) and one where $\alpha_1$ moves to the right ($i=1$).
\end{remark}

\begin{remark} \la{remark::sunita}
In \cite{Chepuri}*{Section 4}, Chepuri proves Theorem~\ref{thm::rmat} for graphs in a cylinder such that there are no zig-zag paths that wind around the cylinder in the opposite direction as the $\alpha_i$. Step 2 in the proof of Theorem~\ref{thm::rmat} can be replaced with her result. When we have three consecutive zig-zag paths $(\alpha_1, \beta, \alpha_2)$ with $[\alpha_1]=[\alpha_2]=-[\beta]$, Step 1 in the proof of Theorem~\ref{thm::rmat} uses Item 2 of Theorem~\ref{thm:thur} to show that we can use $2-2$ moves to swap $\beta$ and $\alpha_2$ to get $(\alpha_1, \alpha_2, \beta$).
\end{remark}

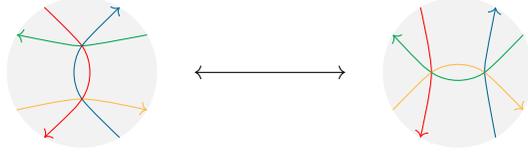
\begin{figure}
    \centering
    \begin{tikzpicture}[scale=0.5]

\begin{scope}[shift={(5,0)},rotate=45]
\def\r{2};
\fill[black!5] (0,0) circle (\r cm);

\coordinate[] (b1) at (-0.5,0.5);
\coordinate[] (b2) at (0.5,-0.5);

\coordinate[] (t1) at (15:\r);
\coordinate[] (t2) at (120-45:\r);
\coordinate[] (t3) at (150-45:\r);
\coordinate[] (t4) at (210-45:\r);
\coordinate[] (t5) at (240-45:\r);
\coordinate[] (t6) at (300-45:\r);
\coordinate[] (t7) at (330-45:\r);
\coordinate[] (t8) at (30-45:\r);

\draw [->,red] plot [smooth, tension=0.5] coordinates {(t2) (b1) (t5)};
\draw [<-,MidnightBlue] plot [smooth, tension=.25] coordinates {(t1) (b2) (t6)};

\draw [->,Dandelion] plot [smooth, tension=0.5] coordinates {(t4) (b1) (0.15,0.15)  (b2)  (t7)};
\draw [<-,Green] plot [smooth, tension=0.5] coordinates {(t3)(b1)  (-0.15,-0.15) (b2)(t8)};

\end{scope}

\begin{scope}[shift={(-5,0)},rotate=45+90]
\def\r{2};
\fill[black!5] (0,0) circle (\r cm);

\coordinate[] (b1) at (-0.5,0.5);
\coordinate[] (b2) at (0.5,-0.5);

\coordinate[] (t1) at (15:\r);
\coordinate[] (t2) at (120-45:\r);
\coordinate[] (t3) at (150-45:\r);
\coordinate[] (t4) at (210-45:\r);
\coordinate[] (t5) at (240-45:\r);
\coordinate[] (t6) at (300-45:\r);
\coordinate[] (t7) at (330-45:\r);
\coordinate[] (t8) at (30-45:\r);

\draw [->,Dandelion] plot [smooth, tension=0.5] coordinates {(t2) (b1) (t5)};
\draw [<-,Green] plot [smooth, tension=.25] coordinates {(t1) (b2) (t6)};

\draw [->,MidnightBlue] plot [smooth, tension=0.5] coordinates {(t4) (b1) (0.15,0.15)  (b2)  (t7)};
\draw [<-,red] plot [smooth, tension=0.5] coordinates {(t3)(b1)  (-0.15,-0.15) (b2)(t8)};

\end{scope}
\draw[<->] (-2,0) -- (2,0);
\end{tikzpicture}
\caption{The $2-2$ move.}\label{2-2}
\end{figure}

To prove Theorem \ref{thm::rmat}, we need the notion of triple-crossing diagrams, introduced independently by Postnikov \cite{Post} and Thurston \cite{Thur}. A \emph{triple-crossing diagram} is a collection of oriented curves in a disk, defined modulo isotopy, such that:
\begin{enumerate}
    \item Three strands meet at each intersection point.
    \item The endpoints of strands are distinct points on the boundary of the disk.
    \item The orientations of the strands induce consistent orientations on each region in the complement of the strands in the disk.
\end{enumerate}
Due to Property 2, if there are $n$ strands, then there are $2n$ points on the boundary of the disk. Property 3 implies that the orientations of the strands alternate ``in" and ``out" along the boundary of the disk. A triple-crossing diagram is said to be \emph{minimal} if strands have no bigons or self-intersections. There is a local move for triple-crossing diagrams called the $2-2$ move (Figure \ref{2-2}).

\begin{theorem}[\cites{Post,Thur}]\la{thm:thur}
Suppose we have a disk
with $2n$ points on its boundary alternately labeled “in” and “out”.
\begin{enumerate}
    \item For any of the $n!$ matchings of ``in" and ``out" boundary points, there is a triple-crossing diagram that realizes the matching.
    \item Any two minimal triple-crossing diagrams with the same boundary matching of “in” and
“out” points are related by $2-2$ moves.
\end{enumerate}
\end{theorem}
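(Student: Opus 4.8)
The plan is to treat the two assertions separately, with the bulk of the work going into the second (uniqueness up to $2-2$ moves), which is where the topology of the disk and the minimality hypothesis interact. Throughout, the organizing structural fact is that Property~3 forces every complementary region to carry a well-defined orientation sign (clockwise versus counterclockwise), and since a strand separating two regions is traversed oppositely from the two sides, this sign is a proper $2$-colouring of the regions — a checkerboard whose restriction to the boundary is dictated exactly by the alternating ``in''/``out'' pattern. Every step below is carried out so as to respect this checkerboard.

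For \textbf{Part 1 (realizability)}, I would argue by induction on the number $n$ of strands, building the diagram up one strand at a time. Given a matching, fix one in--out pair, realize the remaining $n-1$ strands inside a sub-disk by the inductive hypothesis, and then route the final strand from its in-point to its out-point. Each time this strand must pass another, it is made to do so at a triple point — either threading an existing one or creating a new one, with auxiliary triple points (hence extra crossings) inserted wherever needed to keep the orientation checkerboard globally consistent. The essential point is that Part~1 does \emph{not} demand minimality, so self-crossings of the routed strand and repeated crossings between strands are permitted; this is exactly the freedom that completes the routing. It is also what is forced in small cases: for $n=2$ the ``crossed'' matching admits no minimal realization (a single double point is not a triple point, and two strands cannot form a triple point at all), yet it is realized by a non-minimal triple-point picture obtained by letting one strand wind through a crossing.

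For \textbf{Part 2 (uniqueness)}, minimality is decisive: forbidding self-intersections makes each strand an embedded arc, and forbidding bigons forces any two strands to share at most one triple point, so a minimal diagram behaves like a coherently oriented pseudoline arrangement. I would prove connectedness by a peeling induction on $n$. Given two minimal diagrams $D,D'$ realizing the same boundary matching, select a strand $\sigma$ and, using its embeddedness together with $2-2$ moves (Figure~\ref{2-2}), slide $\sigma$ into a standardized position relative to the others; the key is that any local obstruction to sliding $\sigma$ across a triple point is precisely a configuration that a $2-2$ move removes, while minimality rules out a bigon obstructing the slide. Once $\sigma$ occupies the same combinatorial position in both $D$ and $D'$, delete it, verify that the survivors again form a minimal diagram on $n-1$ strands with the induced matching, apply the inductive hypothesis to connect them by $2-2$ moves, and finally reinstate $\sigma$.

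The main obstacle is this reconciliation step in Part~2: showing that $2-2$ moves \emph{alone} suffice to bring $\sigma$ into a common position in $D$ and $D'$, and that deleting $\sigma$ leaves a minimal diagram so the induction can proceed. Both points demand the careful topological bookkeeping that is the heart of the Postnikov--Thurston argument: one must check that the only way two coherent, bigon-free diagrams can disagree along $\sigma$ is through triple points that can be commuted past $\sigma$ or cancelled by $2-2$ moves, and that removing an embedded strand cannot create a new bigon among those remaining. This is exactly where the hypothesis ``minimal'' (no bigons, no self-intersections) is indispensable, since the completeness of the $2-2$ move fails for non-minimal diagrams.
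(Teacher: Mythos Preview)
The paper does not prove this theorem: it is quoted from \cite{Post} and \cite{Thur} (note the citation in the theorem heading) and used as a black box in the proof of Theorem~\ref{thm::rmat}. There is therefore no paper-proof to compare your proposal against. The paper's only contact with the internals of this result is to restate one of its ingredients as Proposition~\ref{prop:parallel} (Thurston's Lemma~12: an inclusion-minimal strand can be made boundary-parallel by $2$--$2$ moves), again without proof.

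For what it is worth, your Part~2 sketch is in the spirit of Thurston's argument: the ``standardized position'' into which you propose to slide $\sigma$ is exactly the boundary-parallel position of Proposition~\ref{prop:parallel}, and once such a strand is peeled off, minimality of the remainder is immediate. Your Part~1 is looser than the sources. The phrase ``threading an existing [triple point] or creating a new one, with auxiliary triple points \dots\ inserted wherever needed to keep the orientation checkerboard globally consistent'' hides the entire content: Property~3 is a global constraint on the complementary $2$-colouring, and it is not automatic that routing one more arc through a diagram can always be repaired into a triple-crossing diagram by local insertions. The $n=2$ crossed-matching aside illustrates the difficulty rather than resolving it. Postnikov and Thurston give explicit constructions here rather than an inductive routing argument. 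Since the paper treats the theorem as a citation, though, none of this is at odds with the paper itself.
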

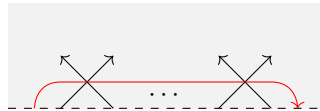
\begin{figure}[ht]
\begin{center}
	\begin{tikzpicture}[scale=0.7] 
 \fill[black!5] (0,0) rectangle (6,2); 
 \draw[dashed] (0,0) -- (6,0);
\draw[->] (1,0) -- (2,1);
\draw[->]
(2,0)--(1,1)
;
\draw[->] (4,0) -- (5,1);
\draw[->]
(5,0)--(4,1)
;
\draw[red,->] (.5,0) .. controls +(0,0) and +(-0.5,0) .. (1,0.5) -- (5,0.5).. controls +(0.5,0) and +(0,0) ..
		(5.5,0);
  \node[] (no) at (3,0.25) {$\cdots$};
	\end{tikzpicture}
\end{center}
\caption{A boundary-parallel strand (red).} \la{fig:boundaryparallel} 
\end{figure}

Each pair of ``in" and ``out" endpoints in the matching divides the boundary of the disk into two intervals. Suppose that $I$ is a minimal such interval with respect to inclusion. A strand $\alpha$ whose endpoints are the endpoints of $I$ is called \textit{boundary-parallel} if there are no triple crossings within the region between $\alpha$ and $I$ (Figure~\ref{fig:boundaryparallel}). 
	\begin{proposition}[\cite{Thur}*{Lemma 12}] \label{prop:parallel}
		Suppose $I$ is an inclusion-minimal interval of the boundary matching of a minimal triple-crossing diagram, and let $\alpha$ be the strand whose endpoints are the endpoints of $I$. Then $\alpha$ can be made boundary-parallel using a sequence of $2-2$ moves. 
	\end{proposition}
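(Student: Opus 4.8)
The plan is to argue by induction on the number $c(R)$ of triple crossings lying strictly inside the region $R$ bounded by $\alpha$ and $I$. Because the diagram is minimal, $\alpha$ has no self-intersections, so $R$ is an embedded disk and $c(R)$ is well-defined; when $c(R)=0$ the strand $\alpha$ is boundary-parallel by definition, which is the base case. For the inductive step we must produce a single $2-2$ move, or a short sequence of them, after which $c(R)$ has strictly decreased. Since a $2-2$ move preserves the total number of triple crossings in the diagram, the real content of the step is to exhibit a move supported near $\alpha$ that transports one triple crossing from the inside of $R$ across $\alpha$ to the outside.

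To locate such a move I would first record the local picture at a triple crossing lying on $\alpha$: the orientation-consistency condition (Property 3 in the definition of a triple-crossing diagram) forces the three strands through a crossing to alternate ``in''/``out'' around the crossing point, and hence exactly two of the three strands through a crossing on $\alpha$ have a ray emanating into $R$. Using inclusion-minimality of $I$ — which guarantees that no two interior points of $I$ are matched to each other, so every strand meeting $\mathring I$ must cross $\alpha$ — one sees that the arcs of strands contained in $R$ form a family of curves with endpoints on $\alpha\cup\mathring I$ that are pairwise disjoint except at triple crossings strictly inside $R$. I would then pick an \emph{innermost} such arc against $\alpha$, i.e.\ a strand-arc $\beta'\subset R$ that together with a segment of $\alpha$ bounds a sub-disk $R'\subseteq R$ containing no further strand-arcs in its interior. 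Minimality (no empty bigons) then forces the situation in $R'$ to be exactly one side of a $2-2$ move: either $R'$ contains a single triple crossing sitting against $\alpha$, which the $2-2$ move slides across $\alpha$ and out of $R$, decreasing $c(R)$; or $R'$ contains no crossing at all, in which case one has simply uncovered a piece of $\alpha$ that is already boundary-parallel and can be ignored. Carrying this out may require first ``combing'' the crossings along $\alpha$ by additional $2-2$ moves so that the relevant configuration is literally the one drawn in Figure~\ref{2-2}. Since $2-2$ moves preserve both minimality and the boundary matching (Theorem~\ref{thm:thur}), the inductive hypotheses persist, and the induction closes.

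The main obstacle is precisely the inductive step: proving that one can never get ``stuck'' — that is, that whenever $c(R)>0$ there is an applicable, reducing $2-2$ move. This is where minimality does the essential work: a stuck configuration would exhibit an embedded bigon (two strand-arcs meeting only at their endpoints and bounding a crossing-free region) inside $R$, contradicting minimality of the diagram. Making this rigorous requires a careful case analysis of how the two auxiliary strands through an innermost crossing on $\alpha$, together with the strands attached to interior points of $I$, can be positioned relative to $\alpha$; the triple (rather than double) nature of the crossings is what makes this more delicate than the classical wiring-diagram analogue, since each elementary reduction must coordinate the two auxiliary strands of a crossing simultaneously. An alternative route, which avoids the combing but not this geometric input, is to construct directly a single minimal triple-crossing diagram $D'$ with the same boundary matching in which $\alpha$ is boundary-parallel — routing $\alpha$ and the strands emanating from $\mathring I$ inside a thin collar of $I$, filling in the remainder of the diagram on the complementary disk via Theorem~\ref{thm:thur}(1), and deleting any bigons created — and then invoking Theorem~\ref{thm:thur}(2) to connect $D'$ to the given diagram by a sequence of $2-2$ moves.
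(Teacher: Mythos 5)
First, note that the paper does not prove Proposition~\ref{prop:parallel}: it is imported verbatim from Thurston's work (\cite{Thur}, Lemma~12), so there is no internal proof to compare yours against. Judged on its own terms, your first route --- induction on the number $c(R)$ of triple crossings in the region $R$ between $\alpha$ and $I$, locating an innermost configuration against $\alpha$ and pushing one crossing out of $R$ by a $2-2$ move --- is the natural strategy and matches the shape of Thurston's argument, and your preliminary observations are correct: $2-2$ moves preserve the crossing count, inclusion-minimality forces every strand meeting $\mathring I$ to leave $R$ through $\alpha$, and the two non-$\alpha$ strands at a crossing on $\alpha$ each send exactly one ray into $R$. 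But the entire mathematical content of the lemma is the step you explicitly leave open: showing that whenever $c(R)>0$ there is an applicable $2-2$ move that strictly decreases it. A $2-2$ move requires a very specific local configuration (a bigon between two strands with the two auxiliary strands threaded through its corners, as in Figure~\ref{2-2}), and an innermost arc against $\alpha$ need not sit in such a configuration; the ``combing'' you allude to is precisely where the case analysis lives, and it is also where one must check that the auxiliary strands do not obstruct the move and that performing it does not create a bigon and destroy minimality (which would invalidate the inductive hypothesis). As written, the inductive step is asserted rather than proved, so the proposal is an outline, not a proof.

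Your alternative route is logically legitimate within this paper, where Theorem~\ref{thm:thur} is taken as known: part (2) of that theorem connects any two minimal diagrams with the same boundary matching by $2-2$ moves, so it would suffice to exhibit a single minimal diagram with the given matching in which the strand joining the endpoints of $I$ is boundary-parallel. Two caveats, though. First, in Thurston's own development the connectivity statement is established using Lemma~12, so this route is circular at the source and cannot stand in for the cited proof. Second, your construction of $D'$ is itself incomplete: after assembling the collar of $I$ and filling the complementary disk via Theorem~\ref{thm:thur}(1), you must reduce the result to a minimal diagram, and you do not verify that the necessary bigon removals can be carried out without reintroducing crossings into the region between $\alpha$ and $I$ (for instance, a strand leaving the collar could meet its continuation in the inner disk and force a reduction that disturbs the collar). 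Either route needs its central combinatorial step carried out before it constitutes a proof.
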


Next, we recall the equivalence between minimal triple-crossing diagrams and minimal bipartite graphs in the disk from \cite{GK13}*{Section 2}. We say that a bipartite graph in the disk is minimal if zig-zag paths have no loops and parallel bigons.
\begin{enumerate}
    \item Bipartite graphs to triple-crossing diagrams: By repeatedly expanding black vertices of degree greater than $3$ into two black vertices connected by a $2$-valent white vertex and contracting all $2$-valent black vertices, we can assume that all black vertices have degree $3$. We draw all zig-zag paths as paths in the medial graph, and contract all the complementary regions corresponding to black vertices to get a triple-crossing diagram.
    \item triple-crossing diagrams to bipartite graphs: Resolve each triple crossing point into a counterclockwise oriented triangle. Place a black vertex in each complementary region that is oriented counterclockwise and a white vertex in each complementary region that is oriented clockwise. Edges between black and white vertices are
given by the vertices of the resolved triple-crossing diagram. 
\end{enumerate}

For an illustration, compare Figure~\ref{2-2} with the bottom left and bottom right pictures of Figure~\ref{et}. Under this equivalence, the notions of minimality on the two objects coincide, the spider move corresponds to a $2-2$ move with one orientation and contraction-uncontraction move at a degree-two white vertex incident to two degree-three black vertices corresponds to a $2-2$ move with the other orientation (Figure~\ref{et}).

\begin{proof}[Proof of Theorem \ref{thm::rmat}]
Without loss of generality, we may assume $i=1$, so we want to interchange $\alpha_1$ and $\alpha_2$. By Theorem \ref{thm:gk2.5}, any two minimal bipartite graphs in $\T$ with Newton polygon $N$ are related by a sequence of elementary transformations. Therefore, we may take $\Gamma$ to be a specific graph that we construct now using an argument similar to the proof of \cite{GK13}*{Theorem 2.5}.

\emph{Step 1: Construction of the graph.}

We choose a basis $e_1,e_2$ for $H_1(\T,\Z)$ such that $e_2=[\alpha_i]$ for $\alpha_i \in Z_\rho$. Such a choice always exists since $[\alpha_i]$ is primitive. Replacing $e_1$ with $e_1+ k e_2$ for some $k \in \Z$, assume that there is a zig-zag path $\beta$ such that $[\beta]=(a,b)$ with $a>0,b<0$. Suppose $R$ is a fundamental rectangle for $\T$ with sides parallel to $e_1$ and $e_2$. We draw a Euclidean geodesic in $\T$ for each zig-zag path, so the number of intersections of any zig-zag path with $\partial R$ is minimal. The preimage of each zig-zag path under the quotient map $R \ra \T$ is a finite disjoint union of curves in $R$ with endpoints in $\partial R$, which we call \emph{strands}. By construction, the preimage of each $\alpha_i$ is a single strand, with endpoints in the top and bottom sides of $R$, oriented from the bottom endpoint to the top one. {Up to shifting $R$ horizontally,} we can assume that the left-most intersection points of a strand with the top and the bottom sides of $\partial R$ are that of $\alpha_1$. By translating $\alpha_2$ horizontally, we can make its endpoints immediately to the right of $\alpha_1$ so that there are no other ``out" points between the ``out" points of $\alpha_1$ and $\alpha_2$. By isotoping $\beta$, we can assume that the bottom-most ``in" point in the left side of $R$ and the left-most ``out" point in the bottom side of $R$ are that of $\beta$.

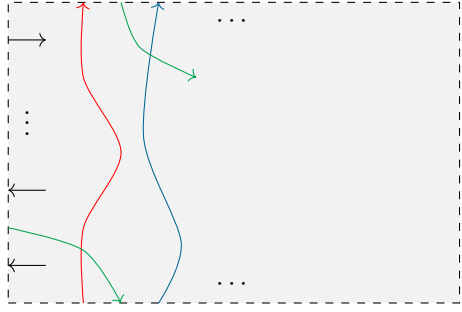
\begin{figure}[ht]
\begin{center}
	\begin{tikzpicture}[scale=1] 
 \fill[black!5] (0,0) rectangle (6,4); \draw[dashed] (0,0) rectangle (6,4);
\draw [red,->] plot [smooth, tension=0.5] coordinates {(1,0) (1,1) (1.5,2) (1,3) (1,4)};

\draw [MidnightBlue,->] plot [smooth, tension=0.5] coordinates {(2,0) (2.3,0.8) (1.8,2.2)  (2,4)};

\draw [Green,->] plot [smooth, tension=0.5] coordinates {(0,1) (1,0.7)  (1.5,0)};

\draw [Green,->] plot [smooth, tension=0.5] coordinates {(1.5,4) (1.75,3.4)  (2.5,3)};

\draw[->] (0.5,0.5) -- (0.,0.5);

\draw[->] (0.,3.5) -- (0.5,3.5);
\draw[<-] (0.,1.5) -- (0.5,1.5);

\node[](no) at (0.25,2.5) {$\vdots$};
\node[](no) at (3,3.75) {$\cdots$};
\node[](no) at (3,0.25) {$\cdots$};
	\end{tikzpicture}
\end{center}
\caption{The configuration at the end of Step 1. There are no other endpoints of strands between the endpoints of $\alpha_1$ and the black strands (resp., between the endpoints of $\alpha_2$ and the black strands).} \la{fig:step21} 
\end{figure}

Next, we isotope the strands inside $R$ so that their endpoints in $\partial R$ alternate in orientation, while keeping the relative order of outgoing points of strands and the relative order of incoming points of strands in each side of $\partial R$ fixed.
Finally we use Theorem \ref{thm:thur} to isotope the configuration of strands to a minimal triple-crossing diagram. We have created a graph $\Gamma$ whose zig-zag paths have the configuration shown in Figure \ref{fig:step21}.

\emph{Step 2: Creation of a chain of hexagons between $\alpha_1$ and $\alpha_2$.}

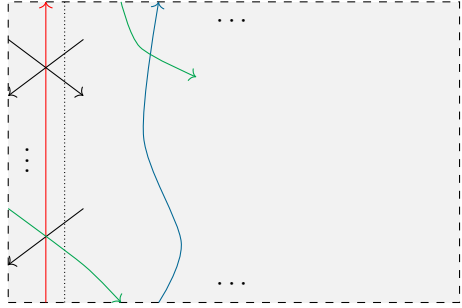
\begin{figure}[ht]
\begin{center}
	\begin{tikzpicture}[scale=1] 
 \fill[black!5] (0,0) rectangle (6,4); \draw[dashed] (0,0) rectangle (6,4);
\draw[densely dotted] (.75,0) -- (.75,4);
\draw [red,->] (0.5,0.)--(0.5,4);

\draw [MidnightBlue,->] plot [smooth, tension=0.5] coordinates {(2,0) (2.3,0.8) (1.8,2.2)  (2,4)};
\draw[->] (1,1.25) -- (0.,0.5);

\draw [Green,->] plot [smooth, tension=0.5] coordinates {(0.,1.25)  (1,0.5)  (1.5,0.)};

\draw[->] (0.,4-0.5)--(1,4-1.25) ;
\draw[->]   (1.,4-0.5)--(0.,4-1.25);

\draw [Green,->] plot [smooth, tension=0.5] coordinates {(1.5,4) (1.75,3.4)  (2.5,3)};
\node[](no) at (0.25,2) {$\vdots$};
\node[](no) at (3,3.75) {$\cdots$};
\node[](no) at (3,0.25) {$\cdots$};
	\end{tikzpicture}
\end{center}
\caption{The red strand runs parallel to the left side of $R$, i.e., there are no triple crossings strictly between the red strand and the left side of $R$.} \la{fig:step22}
\end{figure}

\begin{figure}[ht]
\begin{center}
	\begin{tikzpicture}[scale=1] 
 \fill[black!5] (0,0) rectangle (6,4); \draw[dashed] (0,0) rectangle (6,4);
\draw[densely dotted] (.75,0) -- (.75,4);
\draw [red,->] (0.5,0.)--(0.5,4);

\draw [MidnightBlue,->] (2,0)-- (2,4);

\draw [<-] plot [smooth, tension=0.5] coordinates {(0,2.75) (0.5,3.125) (1.25,3.125) (2,3.125+0.25)  (2.5,3.5+0.25)};
\draw [->,Green] plot [smooth, tension=0.5] coordinates {(1.25,4) (2,3.125+0.25) (2.5,2.75+0.25+0.2)};

\draw [<-] plot [smooth, tension=0.5] coordinates { (1.25,3.125-0.75) (2,3.125+0.25-0.75)  (2.5,3.5+0.25-0.75)};

\draw [->] plot [smooth, tension=0.5] coordinates {(0.,4-0.5) (0.5,3.125) (1,2.75) (2,3.125-0.5)  (2.5,2.4)};

\node[](no) at (1.5,2) {$\vdots$};
\node[](no) at (3,3.75) {$\cdots$};
\node[](no) at (3,0.25) {$\cdots$};

\begin{scope}[yscale=1,xscale=-1,shift={(-2.5,0)}]

\draw [->] plot [smooth, tension=0.5] coordinates {(0,4-2.75) (0.5,4-3.125) (1.25,4-3.125) (2,4-3.125-0.25)  (2.5,4-3.5-0.25)};
\draw [<-,Green] plot [smooth, tension=0.5] coordinates {(1.25,0) (2,4-3.125-0.25) (2.5,4-2.75-0.25-0.2)};

\draw [->] plot [smooth, tension=0.5] coordinates { (1.25,4-3.125+0.75) (2,4-3.125-0.25+0.75)  (2.5,4-3.5-0.25+0.75)};

\draw [<-] plot [smooth, tension=0.5] coordinates {(0.,0.5) (0.5,4-3.125) (1,4-2.75) (2,4-3.125+0.5)  (2.5,4-2.4)};
\end{scope}
	\end{tikzpicture}
\end{center}
\caption{The blue strand runs parallel to the left side of $R'$, i.e., there are no triple crossings strictly between the red strand and the blue strand.} \la{fig:step23}
\end{figure}
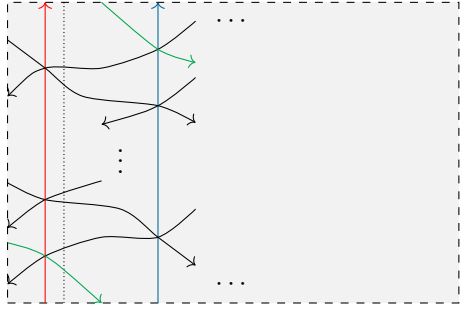
As mentioned in Remark~\ref{remark::sunita}, this step follows from the results of \cite{Chepuri}*{Section 4}. We provide a proof for completeness. Since the number of intersections of any zig-zag path with $\partial R$ is minimal by construction, there are no strands that have both endpoints on the left side of $R$. Therefore, the boundary interval between the endpoints of $\alpha_1$ containing the left side of $R$ is inclusion-minimal, so using Proposition~\ref{prop:parallel}, we can make the strand $\alpha_1$ boundary-parallel to the left side of $R$. Then, $\alpha_1$ forms triple crossings with each pair of strands with consecutive endpoints along the left side of $R$, so we get the configuration in Figure \ref{fig:step22}.  Cutting the rectangle along the dotted line {as in Figure \ref{fig:step22}}, we get a minimal triple-crossing diagram in the smaller rectangle $R'$ on the right of the dotted line. The bottom-left portion of the green zig-zag path is inclusion-minimal, so use Proposition~\ref{prop:parallel} to make it boundary-parallel and remove it. The blue strand is now inclusion-minimal, so using Proposition~\ref{prop:parallel} again, we can make the blue strand boundary-parallel to the left side of $R'$ to get the configuration in Figure \ref{fig:step23}. 
Under the correspondence between minimal bipartite graphs and triple-crossing diagrams, we have a cyclic chain of hexagonal faces between $\alpha_1$ and $\alpha_2$.

\emph{Step 3: Geometric $R$-matrix transformation.}

We now use a geometric $R$-matrix transformation at the cyclic chain of hexagons to interchange $\alpha_1$ and $\alpha_2$. Finally, we perform Step 2 in reverse order to recover the graph we started with, but with
$\alpha_1$ and $\alpha_2$ interchanged.
\end{proof}

\begin{proposition}\la{prop:surj}
The group homomorphism $\overline\lambda$ is surjective.
\end{proposition}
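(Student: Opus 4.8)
The plan is to exhibit an explicit generating set of $\mathcal H_N$ and check that each generator lies in the image of $\overline\lambda$. Since $\mathcal H_N=L_N/jH_1(\T,\Z)$, it suffices to generate $L_N$. Now $\theta_N$ is surjective onto $\Z^{\Sigma(1)}$ (it sends $\tau_\rho$ to the $\rho$-th standard basis vector $e_\rho$), its kernel is $\prod_{\rho\in\Sigma(1)}\ker\theta_{|E_\rho|}$, and for each $\rho$ the subgroup $\ker\theta_{|E_\rho|}$ is the affine symmetric group $\langle s_{\rho,0},\dots,s_{\rho,|E_\rho|-1}\rangle$ (indeed $\widehat S_{|E_\rho|}$ is the semidirect product of this normal subgroup with $\langle\tau_\rho\rangle\cong\Z$, on which $\theta_{|E_\rho|}$ restricts to an isomorphism). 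Since $L_N=\theta_N^{-1}(\Z^{\Sigma(1)}_0)$ and $\Z^{\Sigma(1)}_0$ is generated by the differences $e_\rho-e_{\sigma_0}$ for a fixed ray $\sigma_0$, the group $L_N$ is generated by the simple reflections $s_{\rho,i}$ ($\rho\in\Sigma(1)$, $0\le i\le|E_\rho|-1$) together with the shift differences $\tau_\rho\tau_{\sigma_0}^{-1}$ ($\rho\in\Sigma(1)$). It therefore suffices to realize each of these classes in $\mathcal H_N$ by a generalized cluster transformation.

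\textbf{The reflections $s_{\rho,i}$.} The isomorphism type of $\widehat{\mathcal G}_\Gamma$ together with the map $\overline\lambda$ to the graph-independent group $\mathcal H_N$ depends only on $N$: if $\psi:\Gamma\rightsquigarrow\Gamma'$ is a connecting sequence of elementary transformations and isotopies (which exists by Theorem \ref{thm:gk2.5}), conjugation by $\psi$ identifies $\widehat{\mathcal G}_\Gamma$ with $\widehat{\mathcal G}_{\Gamma'}$ and carries the image of $\overline\lambda$ to a conjugate subgroup of $\mathcal H_N$, so surjectivity for one graph implies it for all. Hence we may take $\Gamma$ to be the graph built in the proof of Theorem \ref{thm::rmat}. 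For $\rho$ with $|E_\rho|\ge 2$ and $i\in[|E_\rho|]$, that theorem produces a generalized cluster transformation based at $\Gamma$ (a sequence of elementary transformations and geometric $R$-matrix transformations, possibly composed with an isotopy) which interchanges the positions of $\alpha_i$ and $\alpha_{i+1}$ and leaves every other zig-zag path where it was; tracking the induced homotopy of zig-zag paths in $\widetilde\Gamma$, its image under $\lambda$ is the transposition of $\widetilde\alpha_i$ and $\widetilde\alpha_{i+1}$ in the $\rho$-factor and the identity in all other factors. For $1\le i\le|E_\rho|-1$ this is precisely $s_{\rho,i}$, and for $i=|E_\rho|$ — the swap of $\alpha_{|E_\rho|}$ with $\alpha_1$ across the seam of the cyclic order — periodicity of the extended affine permutation forces the window notation $[0,2,\dots,|E_\rho|-1,|E_\rho|+1]$, i.e.\ $s_{\rho,0}$. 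Since none of these permutations is a power of the $\tau_\rho$'s, their classes stay nontrivial in $\mathcal H_N$, and all the $s_{\rho,i}$ thus lie in $\mathrm{im}(\overline\lambda)$ (when $|E_\rho|=1$ there are no such generators).

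\textbf{The shift differences $\tau_\rho\tau_{\sigma_0}^{-1}$.} These already lie in the image of the ordinary cluster modular group $G_\Gamma$, which sits inside $\mathcal G_\Gamma$ because cluster transformations are in particular generalized cluster transformations and trivial ones remain trivial. The restriction of $\lambda$ to cluster transformations lands in the subgroup of pure shifts $\{(\tau_\rho^{f(\rho)})_\rho:\sum_\rho f(\rho)=0\}$ — a cluster transformation never permutes parallel zig-zag paths — and is compatible with the homomorphism $\phi\mapsto f$ of the introduction, which by \cite{Gi} (formula \eqref{eq:GI}) is surjective onto $\Z^{\Sigma(1)}_0/H_1(\T,\Z)$ when $N$ has an interior lattice point; when $N$ has none, the same shift differences are supplied by translations of $\T$ preserving a convenient choice of $\Gamma$. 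In either case the class of each $\tau_\rho\tau_{\sigma_0}^{-1}$ in $\mathcal H_N$ lies in $\mathrm{im}(\overline\lambda)$. Together with the previous paragraph this shows $\overline\lambda$ hits a generating set of $\mathcal H_N$, and hence is surjective.

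\textbf{Main obstacle.} The substantive content is Theorem \ref{thm::rmat} — manufacturing, out of elementary moves, a cyclic chain of hexagons bounded by two prescribed adjacent parallel zig-zag paths so that the geometric $R$-matrix transformation can interchange them — and that has already been established; what remains for Proposition \ref{prop:surj} is the group-theoretic bookkeeping above, the only delicate point being that the affine generators $s_{\rho,0}$ must be obtained from the wrap-around case $i=|E_\rho|$ of Theorem \ref{thm::rmat} rather than from an honestly interior transposition.
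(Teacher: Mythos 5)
Your proof is correct and takes essentially the same route as the paper's: reduce surjectivity to hitting the generating set of $L_N$ consisting of the simple reflections $s_{\rho,i}$ and the shift differences of the $\tau_\rho$'s, realize the former via Theorem~\ref{thm::rmat} and the latter via the purely cluster transformations of \cite{Gi}. The additional bookkeeping you supply (the semidirect-product decomposition of $\widehat S_k$, the identification of the wrap-around swap with $s_{\rho,0}$, and the base-point independence) is left implicit in the paper but is accurate.
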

\begin{proof}
It suffices to show that $\lambda$ is surjective, namely that given any element of $(w_\rho)_{\rho \in \Sigma(1)} \in L_N$, there is a sequence $\phi$ of elementary transformations and geometric $R$-matrix transformations such that $\lambda(\phi)=(w_\rho)_{\rho \in \Sigma(1)}$. The group $L_N$ is generated by $s_{\rho,i}$, where $\rho \in \Sigma(1), i=0,\ldots,|E_\rho|-1$, and by $\tau_\rho \tau_\sigma^{-1}$, where $\sigma$ and $\rho$ are two consecutive rays in $\Sigma(1)$. Therefore, we only need to show that there is a generalized cluster transformation that $\lambda$ maps to these generators. We have:
\begin{enumerate}
    \item For $s_{\rho,i}$, a generalized cluster transformation $\phi$ from Theorem \ref{thm::rmat}. 
    \item For $\tau_\rho \tau_\sigma^{-1}$, a cluster transformation $\phi$ from \cite{Gi}*{Section 3}.
\end{enumerate}

\end{proof}
\section{Geometric \texorpdfstring{$R$}{R}-matrix and the spectral transform}
\la{sec:transform}

Suppose $\phi:\Gamma {\rightsquigarrow} \Gamma$ is a geometric $R$-matrix transformation that interchanges the zig-zag paths $\alpha_i, \alpha_{i+1} \in Z_\rho$. Let $\mu^\phi$ denote the induced birational map of $\mathcal X_N$. The goal of this section is to understand the induced map $\kappa_{\Gamma,{\w}} \circ \mu^\phi \circ \kappa_{\Gamma,{\w}}^{-1}$ on spectral data. By splitting a white vertex into two white vertices separated by a two-valent black vertex if necessary, we may assume that there is a white vertex $\w_0$ that is not contained in the cyclic chain of hexagons. 

\begin{proposition}\la{rmatrixspec}
Under the spectral transform $\kappa_{\Gamma,\w_0}$, the geometric $R$-matrix transformation becomes the map $({\mathcal C},S,\nu) \mapsto ({\mathcal C},S,\nu')$, where $\nu'$ is obtained from $\nu$ by interchanging the points at infinity associated to $\alpha_i$ and $\alpha_{i+1}$.
\end{proposition}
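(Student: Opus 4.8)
The plan is to compute the induced birational automorphism $\Psi:=\kappa_{\Gamma,\w_0}\circ\mu^\phi\circ\kappa_{\Gamma,\w_0}^{-1}$ of $\mathcal S_N$ componentwise on a generic spectral datum $(\mathcal C,S,\nu)$; since $\kappa_{\Gamma,\w_0}$ is birational by Fock's theorem \cite{Fock}, it is enough to check that $\Psi$ agrees with $(\mathcal C,S,\nu)\mapsto(\mathcal C,S,\nu')$ on a dense subset. Invariance of the spectral curve $\mathcal C$ is already contained in the proof of Proposition~\ref{prop:interchange}: inserting or deleting the bigon does not change the Kasteleyn matrix, and the intervening spider moves preserve the spectral curve. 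For the parameterization, recall that $\nu_\rho(\alpha)$ is the point of $\mathcal C\cap D_\rho$ singled out by the requirement that a prescribed coordinate take the value $C_\alpha$. By Proposition~\ref{prop:interchange} we have $(\mu^\phi)^*C_{\alpha_i}=C_{\alpha_{i+1}}$ and $(\mu^\phi)^*C_{\alpha_{i+1}}=C_{\alpha_i}$, while $(\mu^\phi)^*C_\alpha=C_\alpha$ for every other zig-zag path $\alpha$; since the finite set $\mathcal C\cap D_\rho$ is unchanged, the new parameterization $\nu'$ is exactly $\nu$ post-composed with the transposition $(\alpha_i\,\alpha_{i+1})$, i.e.\ the points at infinity associated to $\alpha_i$ and $\alpha_{i+1}$ are interchanged and all others fixed.

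It remains to show that the spectral divisor $S$ is preserved. The strategy is to decompose $\phi$ into the bigon insertion at $f_k$, the string of $n$ spider moves at $f_r,f_{k+1},\dots,f_{k-1}$, and the bigon deletion, and to follow the $\w_0$-section $s_{\w_0}$ of $\coker K$ through each stage. The bigon insertion and deletion are harmless for a sharp reason: the two edges of the bigon bound the two-gon face $f_b$, so they are homotopic rel endpoints and hence contribute the same monomial in $z,w$ to the Kasteleyn matrix; they carry equal Kasteleyn signs, since a two-gon has $2\equiv 2\pmod{4}$ boundary edges and so $X_{f_b}([\kappa])=1$; and their weights are opposite ($\pm x$). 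Thus the two new terms in the entry of $K(z,w)$ joining the two poles of the bigon cancel, so $K(z,w)$ itself — and with it $\coker K$ and the section $s_{\w_0}$ — is literally unchanged by inserting or deleting the bigon.

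Consequently the effect of $\phi$ on the spectral divisor equals the cumulative effect of the $n$ spider moves, and here I would use the split-vertex reduction to keep $\w_0$ outside the cyclic chain of hexagons throughout. Each spider move in the string is performed at a face all of whose vertices lie in the chain, so $\w_0$ is never among the finitely many vertices that get modified; the task is then to verify that a spider move disjoint from $\w_0$ changes $s_{\w_0}$ only by a factor that has no zeros or poles on $\mathcal C^\circ$. Concretely, $s_{\w_0}$ is represented by a cofactor of $K(z,w)$ along the row of $\w_0$, and a spider move localized away from $\w_0$ rescales this cofactor only by a nonzero constant times a Laurent monomial in $z,w$, by the same mechanism that makes the open spectral curve $\mathcal C^\circ\subset(\C^\times)^2$ a cluster invariant; such a factor does not move the zero divisor $S_{\w_0}\subset\mathcal C^\circ$. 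Combining the three computations, $\Psi(\mathcal C,S,\nu)=(\mathcal C,S,\nu')$ on a dense subset, which proves the proposition.

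I expect the main obstacle to be exactly this last point — establishing that the spectral divisor is genuinely fixed, and not merely translated in the Jacobian, by the string of spider moves. The bigon-cancellation observation is the clean new input that makes the problem local to the chain of hexagons; what then needs care is the naturality of the spectral transform under an elementary transformation that leaves the distinguished white vertex undisturbed (for instance by transporting the cokernel line bundle and its $\w_0$-section through each urban-renewal move and checking that the divisor inside the torus is preserved), and it is precisely here that keeping $\w_0$ outside the chain is indispensable, in contrast with the situation in Fock's theorem, where cluster transformations can translate the spectral divisor.
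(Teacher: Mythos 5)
Your proposal is correct and takes essentially the same route as the paper's proof: the bigon insertion/deletion leaves the Kasteleyn matrix literally unchanged (your sign/monomial cancellation is the same observation, just spelled out), the intervening spider moves preserve the spectral curve and the cokernel — hence the divisor $S_{\w_0}$, precisely because $\w_0$ lies outside the chain of hexagons — and the swap of $\nu$ follows from Proposition~\ref{prop:interchange}. The one step you flag as needing care (that an elementary transformation away from $\w_0$ preserves the section $s_{\w_0}$ up to a global unit) is exactly the fact the paper also invokes without further proof, citing preservation of the cokernel under elementary transformations as known.
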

\begin{proof}
Insertion and deletion of a bigon does not change the Kasteleyn matrix. Elementary transformations change the Kasteleyn matrix, but they preserve the spectral curve and the cokernel of the Kasteleyn matrix. Since the geometric $R$-matrix transformation is a sequence of these moves, it does not change the spectral curve or the spectral divisor $S_{\w_0}$. By Proposition \ref{prop:interchange}, the geometric $R$-matrix transformation interchanges the zig-zag paths $\alpha_i$ and $\alpha_{i+1}$, so $\nu$ becomes $\nu'$.
\end{proof}

 Let $\mathrm{Div}_\infty({\mathcal C}):=\oplus_{\rho \in \Sigma(1)} \oplus_{\alpha \in Z_\rho} \Z \cdot \nu(\alpha)$ denote the group of divisors at infinity of ${\mathcal C}$, i.e., formal $\Z$-linear combinations of points at infinity of ${\mathcal C}$. The \emph{discrete Abel map}
\[
\dd: \widetilde B \sqcup \widetilde W \ra \mathrm{Div}_\infty({\mathcal C})
\]
was defined by Fock \cite{Fock} as follows:
\begin{enumerate}
    \item Let $\dd(\widetilde \w_0)=0$ for a fixed white vertex $\widetilde \w_0 \in \widetilde W$. This is a choice of normalization.
    \item For an edge $\widetilde e=\widetilde \bw \widetilde \w$ in $\widetilde \Gamma$, we have $\dd(\widetilde \bw)-\dd(\widetilde \w)=\nu(\alpha)+\nu(\beta)$, where $\widetilde \alpha,\widetilde \beta$ are the two zig-zag paths that contain the edge $\widetilde e$, and $\widetilde \alpha$ (resp., $\widetilde \beta$) is a lift of $\alpha$ (resp., $\beta$).
\end{enumerate}
The discrete Abel map changes by a principal divisor under translations in the universal cover of $\T$ (which are canonically identified with elements of $H_1(\T,\Z)$):
\be \la{eq:equiv}
\dd(\widetilde{\w}+i \gamma_z + j \gamma_w)=\dd(\widetilde{\w})+\mathrm{div}_{\mathcal C} ~ z^i w^j,
\ee
where $\mathrm{div}_{\mathcal C} ~ z^i w^j$ is the principal divisor of the rational function $z^i w^j$ on the curve $\mathcal C$. Let $\mathrm{Cl}({\mathcal C})$ denote the \emph{class group of ${\mathcal C}$}, i.e., the group of Weil divisors modulo principal divisors. The following theorem describes how the spectral divisor changes as we vary the white vertex $\w \in W$. 

\begin{theorem} \cite{Fock}  \la{fock:2}
If $\w_1,\w_2$ are two white vertices in $\Gamma$, then we have
\be
\la{eq:fock:2}
S_{\w_1} + \dd(\widetilde \w_1) = S_{\w_2} +\dd(\widetilde \w_2)  \text{ in }\mathrm{Cl}({\mathcal C}),
\ee
where $S_{\w_1},S_{\w_2}$ denote the spectral divisors defined using $\w_1,\w_2$ respectively, and $\widetilde \w_1,\widetilde \w_2$ are any respective lifts of $\w_1,\w_2$ to the plane.
\end{theorem}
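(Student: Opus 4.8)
The plan is to exploit that, for generic weights, $S_{\w_1}$ and $S_{\w_2}$ are the affine parts of the divisors of two sections of one and the same line bundle on $\mathcal C$, so their difference is principal up to a correction supported on the lines at infinity, and then to recognize that correction as the discrete Abel map. \emph{Setup and reduction.} After splitting high-degree black vertices into trivalent ones by contraction--uncontraction moves — which change neither $\mathcal C$, nor $\coker K$, nor the divisors $S_\w$ — we may assume every black vertex has degree three, a convenience for the asymptotic analysis below. For generic $[\wt]$ the curve $\mathcal C$ is smooth and irreducible, and $\coker K$, a priori the pushforward of a line bundle on $\mathcal C^\circ$, extends to a line bundle $\mathcal L$ on the smooth projective model $\mathcal C$; the cokernel map $\C^W \twoheadrightarrow \coker K$ produces, for each white vertex $\w$, a rational section $s_\w$ of $\mathcal L$ that is holomorphic on $\mathcal C^\circ$ with $\operatorname{div}_{\mathcal C^\circ}(s_\w)=S_\w$. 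Write $D_\w:=\operatorname{div}_{\mathcal C}(s_\w)-S_\w$, a divisor supported on $\bigcup_{\rho}D_\rho$.

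\emph{Step 1: the ratio of sections is rational.} Since $s_{\w_1}$ and $s_{\w_2}$ are rational sections of the same line bundle $\mathcal L$ on the irreducible curve $\mathcal C$, the quotient $g:=s_{\w_1}/s_{\w_2}$ is a nonzero rational function on $\mathcal C$, and
\[
\operatorname{div}_{\mathcal C}(g)=\operatorname{div}_{\mathcal C}(s_{\w_1})-\operatorname{div}_{\mathcal C}(s_{\w_2})=(S_{\w_1}-S_{\w_2})+(D_{\w_1}-D_{\w_2}).
\]
As $\operatorname{div}_{\mathcal C}(g)$ is principal, we get $S_{\w_1}+D_{\w_1}=S_{\w_2}+D_{\w_2}$ in $\mathrm{Cl}(\mathcal C)$. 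Thus the theorem reduces to the identification
\[
D_\w=\dd(\widetilde\w)+(\text{a divisor independent of }\w)\qquad\text{in }\operatorname{Div}_\infty(\mathcal C),
\]
for any lift $\widetilde\w$: the $\w$-independent divisor cancels in \eqref{eq:fock:2}, and independence of the choice of lift is exactly the equivariance \eqref{eq:equiv}, since two lifts of $\w$ differ by some $i\gamma_z+j\gamma_w$ and so their $\dd$-values differ by the principal divisor $\operatorname{div}_{\mathcal C}z^iw^j$.

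\emph{Step 2: $D_\w$ equals the discrete Abel map at infinity.} This is the substantive part. Because $\dd$ is characterized by $\dd(\widetilde\w_0)=0$ together with the edge recursion $\dd(\widetilde\bw)-\dd(\widetilde\w)=\nu(\alpha)+\nu(\beta)$ across edges $\widetilde e=\widetilde\bw\widetilde\w$ (with $\alpha,\beta$ the two zig-zag paths through $e$), and because the black-vertex analogue of $D_\bullet$ can be defined from the kernel section of $K$, it suffices to prove that moving along an edge of $\widetilde\Gamma$ changes the divisor at infinity of the corresponding cokernel/kernel section by exactly $\nu(\alpha)+\nu(\beta)$. This is a local statement at each point at infinity $\nu(\gamma)\in\mathcal C\cap D_\rho$: up to a common rational factor, $s_\w$ is the maximal minor of $K(z,w)$ obtained by deleting the row of $\w$ and a fixed column, and its order of vanishing at $\nu(\gamma)$ (with respect to a chosen local uniformizer on $\mathcal C$ there) is read off from the Newton polygons of the entries $\wt(e)\kappa(e)z^{\langle e,\gamma_w\rangle}w^{\langle e,-\gamma_z\rangle}$, whose monomial exponents are governed by the homology classes of the zig-zag paths crossing $e$. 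Deleting the row of $\w$ rather than that of an adjacent white vertex $\w'$ shifts the leading exponent precisely by the contribution of the zig-zag paths meeting the edge $\bw\w'$ between them, i.e.\ by $\nu(\alpha)+\nu(\beta)$. Summing this recursion along a path in $\widetilde\Gamma$ from $\widetilde\w_0$ to $\widetilde\w$, and using that $\dd$ is well defined (Fock's construction) while $D_\bullet$ is well defined because $\operatorname{div}_{\mathcal C}(s_\w)$ is intrinsic, yields $D_\w-D_{\w_0}=\dd(\widetilde\w)$, which closes the argument.

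The main obstacle is Step 2: matching the order of vanishing of a maximal minor of the Kasteleyn matrix at a point of $\mathcal C\cap D_\rho$ with the combinatorial term $\nu(\alpha)+\nu(\beta)$. Carrying this out cleanly requires translating the exponents $\langle e,\gamma_w\rangle$, $\langle e,-\gamma_z\rangle$ into zig-zag homology data and into the chosen toric local coordinate — essentially re-deriving the content of Fock's discrete Abel map — and keeping track of the leading-term cancellations in the minor, which is where the trivalence normalization and the combinatorics of zig-zag paths at infinity do the work. Step 1 and the reduction to trivalent black vertices are routine.
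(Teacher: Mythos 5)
First, note that the paper does not prove this statement itself; it is quoted from \cite{Fock}, so the comparison is against the cited proof. Your overall strategy coincides with Fock's (and with the treatment in \cite{GGK}): realize $S_{\w_1}$ and $S_{\w_2}$ as the affine parts of the divisors of two rational sections of a single line bundle on the compactified spectral curve, so that their difference is principal up to a correction supported on the lines at infinity, and then identify that correction with the discrete Abel map. Step 1, the reduction to trivalent black vertices, and the observation that independence of the choice of lift is exactly the equivariance \eqref{eq:equiv} are all correct and routine.

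The gap is that Step 2 --- which carries essentially all of the content of the theorem --- is described rather than proved. You assert that the order of vanishing of the relevant maximal minor at a point $\nu(\gamma)\in\mathcal C\cap D_\rho$ ``is read off from the Newton polygons of the entries'' and that deleting the row of $\w$ rather than that of an adjacent $\w'$ shifts the leading exponent by $\nu(\alpha)+\nu(\beta)$, but no computation supports this: one must choose a toric local coordinate at $\nu(\gamma)$, exhibit the degeneration of $K$ there (which is controlled by the zig-zag path $\gamma$), and actually count the crossings of $\gamma$ with the edges separating the two vertices to extract the exponent. Moreover, the recursion you invoke compares two white vertices adjacent to a common black vertex, whereas $\dd$ is defined by a white-to-black recursion, so the bookkeeping --- in particular the overall sign, i.e.\ whether $D_\w=+\dd(\widetilde\w)$ or $-\dd(\widetilde\w)$ up to a $\w$-independent divisor --- is left undetermined; with the opposite sign the conclusion would be $S_{\w_1}-\dd(\widetilde\w_1)=S_{\w_2}-\dd(\widetilde\w_2)$, which is a genuinely different statement. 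As you acknowledge, completing Step 2 amounts to re-deriving Fock's local analysis at infinity, so what you have is a correct plan of attack rather than a proof.
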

\begin{remark}
By Riemann's theorem, a generic degree $g$ effective divisor on ${\mathcal C}$ is determined by its linear equivalence class in $\mathrm{Cl}({\mathcal C})$, so the condition \eqref{eq:fock:2} determines $S_{\w_2}$ from $S_{\w_1}$, and moreover,   by \eqref{eq:equiv}, does not depend on the choices of lifts of $\w_1$ and $\w_2$. 
\end{remark}

Now we return to the setting of a geometric $R$-matrix transformation $\phi:\Gamma \rightsquigarrow \Gamma$. Let $\dd$ and $\dd^\phi$ denote the discrete Abel maps before and after the geometric $R$-matrix transformation, both normalized to be $0$ at a chosen lift $\widetilde{\w}_0$ of $\w_0$. Let $\alpha_1$ and $\alpha_2$ denote the two zig-zag paths that bound the cyclic chain of hexagons as on the left of Figure \ref{fig:Rmatrixsequence}. Using Proposition~\ref{rmatrixspec}, we get the explicit formula:
\be \la{dam:rmat}
\dd^\phi(\widetilde{\w})= \begin{cases} \dd(\widetilde{\w})+\nu(\alpha_1)-\nu(\alpha_2) &\text{ if $\w=p(\widetilde{\w})$ is contained in the zig-zag path $\alpha_1$};\\
\dd(\widetilde \w) &\text{ otherwise}.
\end{cases}
\ee
For $\w$ any white vertex of $\Gamma$, let $\widetilde{\w}$ be a  choice of lift so that $p(\widetilde \w)=\w$, and let $S_\w, S_{\w}^\phi $ denote the spectral divisors before and after the transformation $\phi$. 
\begin{proposition} \la{prop:cmtspec}
For any white vertex $\w$ of $\Gamma$, we have the equality
\[
S_{\w}^\phi +\dd^\phi(\widetilde \w) = S_\w + \dd(\widetilde \w)
\]
in the divisor class group of ${\mathcal C}$.
\end{proposition}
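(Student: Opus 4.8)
The plan is to deduce the identity for an arbitrary white vertex $\w$ from the already-settled case of the distinguished white vertex $\w_0$ (Proposition~\ref{rmatrixspec}), transporting the statement from $\w_0$ to $\w$ by means of Fock's comparison formula (Theorem~\ref{fock:2}), applied once to the configuration before $\phi$ and once to the configuration after it.

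First I would record that $\phi:\Gamma\rightsquigarrow\Gamma$ preserves the spectral curve (Proposition~\ref{rmatrixspec}), so the weighted graph $(\Gamma,[\wt])$ and the weighted graph $(\Gamma,\mu^\phi([\wt]))$ obtained after $\phi$ — same underlying graph, transformed weights — have the common spectral curve $\mathcal C$, and all divisor classes below may be compared inside the single group $\mathrm{Cl}(\mathcal C)$. Applying Theorem~\ref{fock:2} to $(\Gamma,[\wt])$ with the white vertices $\w$ and $\w_0$ gives $S_\w+\dd(\widetilde{\w})=S_{\w_0}+\dd(\widetilde{\w}_0)$, and applying it to $(\Gamma,\mu^\phi([\wt]))$ with the same pair gives $S_\w^\phi+\dd^\phi(\widetilde{\w})=S_{\w_0}^\phi+\dd^\phi(\widetilde{\w}_0)$. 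Since $\w_0$ was chosen outside the cyclic chain of hexagons it is a white vertex of both graphs, and both discrete Abel maps are normalized to vanish at the common lift $\widetilde{\w}_0$, so $\dd(\widetilde{\w}_0)=\dd^\phi(\widetilde{\w}_0)=0$; the two identities reduce to $S_\w+\dd(\widetilde{\w})=S_{\w_0}$ and $S_\w^\phi+\dd^\phi(\widetilde{\w})=S_{\w_0}^\phi$. It then remains to use Proposition~\ref{rmatrixspec} once more, which says that $\phi$ leaves the spectral divisor taken with respect to $\w_0$ unchanged, $S_{\w_0}^\phi=S_{\w_0}$ (in fact as effective divisors, since under $\kappa_{\Gamma,\w_0}$ the transformation merely permutes the points at infinity). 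Combining,
\[
S_\w^\phi+\dd^\phi(\widetilde{\w})=S_{\w_0}^\phi=S_{\w_0}=S_\w+\dd(\widetilde{\w})
\]
in $\mathrm{Cl}(\mathcal C)$, which is the claim.

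I do not expect a genuine obstacle: the substance is entirely in Proposition~\ref{rmatrixspec} and Theorem~\ref{fock:2}, and what is left is routine verification — that Theorem~\ref{fock:2} legitimately applies to the post-transformation configuration (for generic $[\wt]$ the curve $\mathcal C$ is smooth and $S_\w^\phi$ is a well-defined degree-$g$ effective divisor, $\phi$ preserving genericity), that the remark on independence of lifts following Theorem~\ref{fock:2} permits using the same lift $\widetilde{\w}$ in the two instances, and that $\dd$ and $\dd^\phi$ are normalized at the same $\widetilde{\w}_0$. One can also check that the explicit formula~\eqref{dam:rmat} is consistent with this picture (in particular it forces $\w_0\notin\alpha_1$), though~\eqref{dam:rmat} is not needed for the argument above.
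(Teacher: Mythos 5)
Your argument is correct and is essentially identical to the paper's proof: both chain the two instances of Theorem~\ref{fock:2} (before and after $\phi$) with the invariance $S_{\w_0}^\phi=S_{\w_0}$ from Proposition~\ref{rmatrixspec} and the common normalization $\dd(\widetilde\w_0)=\dd^\phi(\widetilde\w_0)=0$. No gaps.
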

\begin{proof}
We have
\[
S_{\w}^\phi +\dd^\phi(\widetilde \w) =S_{\w_0}^\phi +\dd^\phi(\widetilde \w_0)= S_{\w_0} +\dd(\widetilde \w_0)=S_\w + \dd(\widetilde \w),
\]
where the equality in the middle is because of $S_{\w_0} = S_{\w_0}^\phi$ (Proposition~\ref{rmatrixspec}) and the normalization $\dd(\widetilde \w_0)= \dd^\phi(\widetilde \w_0)=0$, and the equalities on the sides are from Theorem~\ref{fock:2}.
\end{proof}

\begin{remark}\la{rem:homo}
Intuitively, we think of the geometric $R$-matrix transformation at the level of the graphs, without the weights, as follows: Let $\alpha_1,\dots,\alpha_{|E_\rho|} $ denote the zig-zag paths in $Z_\rho$ in cyclic order, and let $\widetilde \alpha_i, i\in \Z$, be their lifts to the universal cover, so that $p(\widetilde \alpha_i)=\alpha_j$ where $j \equiv i$ mod $|E_\rho|$ and $1\leq j\leq |E_\rho|$. Then the lifts of $\alpha_1$ (resp., $\alpha_2$) are given by $\widetilde \alpha_{1 + n |E_\rho|}$ (resp., $\widetilde \alpha_{2 + n |E_\rho|}$), where $n \in \Z$.  The geometric $R$-matrix transformation $\phi$ is a homotopy of zig-zag paths in $\widetilde \Gamma$, periodic in the plane along the direction $\rho$, that slides the two zig-zag paths $\widetilde{\alpha}_{1+n |E_\rho|}$ and $\widetilde{\alpha}_{2+n |E_\rho|}$ through each other, interchanging them while keeping all other zig-zag paths unchanged (Figure \ref{fig:Rmatrixsequence}). The discrete Abel map $\dd$ is a way to encode the effect of the homotopy (see \eqref{dam:rmat}), and Proposition \ref{prop:cmtspec} says that it determines transformation of weights.

We have the same intuitive description for elementary transformations as well, as explained in \cite{Gi}*{Section 4}: each elementary transformation can be understood at the level of graphs as a homotopy of zig-zag paths in the universal cover, and the homotopy determines the transformation of weights via the discrete Abel map.
\end{remark}

\section{Proof of Theorem \ref{thm::main1}}
\la{sec:proof}

As explained in Remark \ref{rem:homo}, we can think of $\phi$ in $\mathcal G_\Gamma$ as a homotopy of zig-zag paths in the universal cover of $\T$, and we have an induced discrete Abel map $\dd^\phi$, defined such that $\dd^\phi(\widetilde \w)-\dd(\widetilde\w)$ records, with sign, the number of times the lifts of each zig-zag path crossed the vertex $\widetilde \w$ during $\phi$. It follows from Proposition \ref{prop:cmtspec} and the corresponding result of \cite{Fock} for elementary transformations that:
\begin{proposition} \la{prop:gcmtspec}
The group $\mathcal G_\Gamma$ acts on $\mathcal S_N$ as follows:
\begin{enumerate}
    \item ${\mathcal C}$ is invariant;
    \item For any white vertex $\w$ of $\Gamma$, and a choice of lift $\widetilde \w$ of $\w$, the spectral divisor $S_\w$ is translated by $\dd(\widetilde \w)-\dd^\phi(\widetilde \w)$ in the Jacobian of ${\mathcal C}$, i.e., the spectral divisor $S_\w^\phi$ after $\phi$ is given by 
    \[
S_{\w}^\phi +\dd^\phi(\widetilde \w) = S_\w + \dd(\widetilde \w),
\]
where the equality is in $\mathrm{Cl}({\mathcal C})$.
    \item $\nu$ is permuted to $\nu^\phi$, where the permutation is obtained by comparing the initial and final positions of zig-zag paths.
\end{enumerate}
 
\end{proposition}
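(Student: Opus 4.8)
The plan is to decompose a generalized cluster transformation into its constituent moves --- isotopies, elementary transformations, and geometric $R$-matrix transformations --- to show that each type acts on spectral data in the stated way, and then to compose. Since the spectral transform $\kappa_{\Gamma,\w}$ is birational \cite{Fock}, conjugation by it converts the birational action of $\widehat{\mathcal G}_\Gamma$ on $\mathcal X_N$ into a birational action on $\mathcal S_N$; a trivial generalized cluster transformation has $\mu^\phi=\mathrm{id}$ and therefore acts as the identity on $\mathcal S_N$, so the action descends to $\mathcal G_\Gamma$. Note that the description in the statement is phrased purely in terms of the homotopy of zig-zag paths in the universal cover (through the discrete Abel map and the induced permutation of zig-zag paths), hence is manifestly independent of which white vertex is used to define $\kappa$. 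It is convenient to prove the assertion for an arbitrary morphism $\Gamma_1 \stackrel{\psi}{\rightsquigarrow} \Gamma_2$ of the groupoid $\mathcal A_N$ --- tracking a white vertex of $\Gamma_1$, its image in $\Gamma_2$, and lifts to the plane sharing a common normalization point $\widetilde\w_0$ --- and then to specialize to loops $\Gamma \rightsquigarrow \Gamma$. We induct on the number of moves in $\psi$.

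For the length-one cases: (i) an isotopy changes neither the Kasteleyn matrix, the spectral curve, the cokernel, nor the homotopy classes of zig-zag paths, so it fixes $(\mathcal C, S, \nu)$ and the discrete Abel map; (ii) for an elementary transformation the assertion is Fock's result \cite{Fock} (see also \cite{Gi}*{Section 4}): $\mathcal C$ is invariant, the spectral divisor is translated by the difference of discrete Abel maps, and $\nu$ is cyclically permuted as prescribed by the homotopy; (iii) for a geometric $R$-matrix transformation, Proposition~\ref{rmatrixspec} gives that $\mathcal C$ and $S_{\w_0}$ are invariant (after splitting a vertex if necessary so that $\w_0$ lies outside the cyclic chain) and that $\nu$ is modified by the transposition interchanging the points at infinity of $\alpha_i$ and $\alpha_{i+1}$, while the explicit formula \eqref{dam:rmat} for $\dd^\phi$ together with Proposition~\ref{prop:cmtspec} yields $S_\w^\phi + \dd^\phi(\widetilde\w) = S_\w + \dd(\widetilde\w)$ in $\mathrm{Cl}(\mathcal C)$ for every white vertex $\w$.

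For the inductive step, write $\psi = t_m \circ \psi'$ with $\psi'$ of length $m-1$, and let $\dd'$ and $\dd^\psi$ be the discrete Abel maps after $\psi'$ and after $\psi$, all normalized at $\widetilde\w_0$. Composing the inductive hypothesis for $\psi'$ with the length-one case for $t_m$: the spectral curve stays invariant throughout; the spectral divisor is translated at step $t_m$ by $\dd'(\widetilde\w) - \dd^\psi(\widetilde\w)$ on top of the earlier translation by $\dd(\widetilde\w) - \dd'(\widetilde\w)$, and these telescope to $\dd(\widetilde\w) - \dd^\psi(\widetilde\w)$ in $\mathrm{Cl}(\mathcal C)$ (using that by Riemann's theorem the generic degree $g$ effective divisor is recovered from its class); and the permutation of $\nu$ is the composite of the two, which is exactly the permutation obtained by comparing initial and final positions of zig-zag paths under the homotopy induced by $\psi$. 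Consistency of this bookkeeping rests on the fact, recorded in Remark~\ref{rem:homo}, that the change in the discrete Abel map and the permutation of zig-zag paths both depend only on, and concatenate correctly along, the homotopy in the universal cover. Specializing to a loop $\Gamma \rightsquigarrow \Gamma$ yields the proposition.

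The main obstacle is precisely this bookkeeping across graph-changing moves: a white vertex of $\Gamma$ must be followed coherently through the intermediate graphs $\Gamma_k$, which may acquire or lose vertices under contraction--uncontraction, and each discrete Abel map --- normalized at the common lift $\widetilde\w_0$ --- must be shown to change at each step only by the prescribed correction, so that the telescoping identity is an equality in $\mathrm{Cl}(\mathcal C)$ with no residual ambiguity. This is handled by the normalization discipline already used for Proposition~\ref{prop:cmtspec} (all discrete Abel maps vanish at $\widetilde\w_0$, and any two white vertices of a fixed graph are related by Theorem~\ref{fock:2}), which reduces every step to the already-established statements at $\w_0$.
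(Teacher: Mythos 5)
Your proposal is correct and follows essentially the same route as the paper, whose proof is the one-line observation that the statement follows by composing Proposition~\ref{prop:cmtspec} (the geometric $R$-matrix case) with Fock's corresponding result for elementary transformations, exactly as you do move by move. The extra detail you supply --- the telescoping of the discrete Abel map differences normalized at $\widetilde\w_0$ and the composition of the permutations of $\nu$ --- is just an explicit unwinding of the bookkeeping the paper delegates to Remark~\ref{rem:homo}.
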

We note that the action is completely determined by the homotopy of zig-zag paths, or more precisely, by $\dd^{\phi}-\dd$.

Next, we need to identify the kernel of $\overline \lambda$.  

\begin{lemma}\la{lem:factor}
Let $\phi \in \widehat{\mathcal G}_\Gamma$ be a generalized cluster transformation, and let $\mu^\phi:\mathcal S_N \ra \mathcal S_N$ be the induced birational automorphism of $\mathcal S_N$. The action of $\widehat{\mathcal G}_\Gamma$ on $\mathcal S_N$ factors through $L_N$:
\begin{center}
    \begin{tikzcd}
\widehat{\mathcal G}_\Gamma \arrow[r, "\lambda"] \arrow[rd,"\phi \mapsto \mu^\phi"'] & L_N \arrow[d,"\exists"]\\
& \mathrm{Bir}(\mathcal S_N)
\end{tikzcd},
\end{center}
where $\mathrm{Bir}(\mathcal S_N)$ is the group of birational automorphisms of $\mathcal S_N$.
\end{lemma}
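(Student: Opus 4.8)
The plan is to show that if a generalized cluster transformation $\phi$ lies in the kernel of $\lambda$, then the induced birational automorphism $\mu^\phi$ of $\mathcal S_N$ is the identity; granting this, the homomorphism $\phi \mapsto \mu^\phi$ kills $\ker\lambda$, hence factors through $\lambda(\widehat{\mathcal G}_\Gamma)$, which lies inside $L_N$ (as recalled before Theorem \ref{thm::main1}). So suppose $\lambda(\phi)=1$, i.e., for every ray $\rho$ the extended affine permutation $w_\rho$ is the identity. By Proposition \ref{prop:gcmtspec}, the action of $\phi$ on $\mathcal S_N$ is entirely controlled by the data $\dd^\phi - \dd$: the spectral curve $\mathcal C$ is fixed, the bijections $\nu$ are permuted according to the net motion of zig-zag paths, and the spectral divisor $S_\w$ is translated by $\dd(\widetilde\w)-\dd^\phi(\widetilde\w)$ in $\mathrm{Cl}(\mathcal C)$. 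The condition $\lambda(\phi)=1$ says precisely that, in the universal cover, each lifted zig-zag path $\widetilde\alpha_i$ returns to its own initial position (not merely to that of a parallel one), so the permutation of $\nu$ is trivial.

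The key point is then to show that $\lambda(\phi)=1$ forces $\dd^\phi = \dd$, hence the spectral divisor is also fixed, so $\mu^\phi = \mathrm{id}$ on the image of the (birational) spectral transform, and therefore $\mu^\phi=\mathrm{id}$ as a birational map. This is where the homotopy interpretation of Remark \ref{rem:homo} does the work: $\dd^\phi(\widetilde\w)-\dd(\widetilde\w)$ records, with signs, the net number of times the lift of each zig-zag path sweeps across the vertex $\widetilde\w$ during the homotopy $\widetilde\phi$. If every zig-zag path $\widetilde\alpha_i$ in every family $\widetilde Z_\rho$ ends at its starting position, then — since parallel zig-zag paths cannot cross one another during the dynamics, and each family is a bi-infinite ordered collection of disjoint curves — the net signed crossing number of $\widetilde\alpha_i$ over any fixed vertex $\widetilde\w$ is zero. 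Summing over all zig-zag paths through $\widetilde\w$ (there are four of them, counted with the two passes of each of the two zig-zag paths through the incident edges, as in the definition of $\dd$), we get $\dd^\phi(\widetilde\w)=\dd(\widetilde\w)$ for all $\widetilde\w$, as desired. Conversely, $\phi$ in the image of $j H_1(\T,\Z)$ (a translation of $\widetilde\Gamma$) induces the identity on $\Gamma$ and hence on $\mathcal S_N$, consistent with $\mathcal H_N = L_N/jH_1(\T,\Z)$; but for the factorization through $L_N$ itself we only need the kernel statement above.

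The main obstacle I anticipate is making rigorous the claim that $\lambda(\phi)=1$ implies the net signed crossing number over each vertex vanishes. A priori, a zig-zag path could return to its starting position after a homotopy that is not null-homotopic rel endpoints — e.g. it could wind around the torus and come back, sweeping some vertices a nonzero net number of times. This is ruled out because each step $t_i$ of $\phi$ is an elementary transformation or geometric $R$-matrix transformation, and each such step moves zig-zag paths only within a bounded region (a disk, or a cyclic chain of hexagons), so the induced homotopy of $\widetilde\alpha_i$ is, step by step, supported in a compact set and hence the total track of $\widetilde\alpha_i$ together with its reverse bounds; combined with the non-crossing of parallel lifts, this pins down the net crossing count. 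Alternatively, and perhaps more cleanly, one can argue at the level of the discrete Abel map directly: each elementary transformation changes $\dd$ by the explicit rule of \cite{Gi}*{Section 4}, each geometric $R$-matrix transformation by \eqref{dam:rmat}, these changes are additive along $\phi$, and one checks that the total change depends only on $\lambda(\phi)\in L_N$ — this is essentially the content of Proposition \ref{prop:cmtspec} iterated along the sequence defining $\phi$, together with the fact that $\dd^\phi-\dd$ is determined by how far each zig-zag path has moved. I would present the argument in the second form, citing Proposition \ref{prop:cmtspec} and \cite{Gi} for the two types of moves and then arguing that the accumulated effect is a function of $\lambda(\phi)$ alone, so that $\lambda(\phi)=1\Rightarrow \dd^\phi=\dd\Rightarrow\mu^\phi=\mathrm{id}$.
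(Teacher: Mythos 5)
Your proposal is correct and follows essentially the same route as the paper: the paper's proof is a one-line appeal to Proposition \ref{prop:gcmtspec}, observing that the action on $\mathcal S_N$ is determined by $\nu^\phi$ and $\dd^\phi-\dd$, both of which depend only on $\lambda(\phi)$, which is the same content as your kernel formulation (the two are equivalent since both maps are group homomorphisms and $\lambda$ is surjective onto $L_N$ by Proposition \ref{prop:surj}). The extra detail you supply --- that the net signed sweeping number of a zig-zag path over a vertex is pinned down by its initial and final positions, so that $\dd^\phi-\dd$ really is a function of $\lambda(\phi)$ --- is a point the paper takes for granted via Remark \ref{rem:homo}, and your second, move-by-move version of that argument is a reasonable way to make it precise.
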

\begin{proof}
Follows from Proposition \ref{prop:gcmtspec}, since both $\nu^\phi$ and $\dd^\phi(\w)-\dd(\w)$ are determined by $\lambda(\phi)$.
\end{proof}
\begin{corollary}\la{cor:ker}
The kernel of $\overline \lambda$ is contained in the subgroup $\mathcal K_{\Gamma}$ of trivial generalized cluster transformations.
\end{corollary}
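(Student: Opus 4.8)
The plan is to derive Corollary~\ref{cor:ker} as a direct consequence of Lemma~\ref{lem:factor} together with the birationality of the spectral transform. Suppose $\phi \in \widehat{\mathcal G}_\Gamma$ lies in the kernel of $\overline\lambda$, i.e., $\lambda(\phi) \in jH_1(\T,\Z)$. First I would reduce to the case $\lambda(\phi)=1$: since translations in $H_1(\T,\Z)$ act trivially on $\mathcal X_\Gamma$ (they do nothing to $\Gamma$ itself, as explained in Section~\ref{sec:intro}), the corresponding generalized cluster transformation is trivial, and so $\mu^\phi$ only depends on the class of $\lambda(\phi)$ in $\mathcal H_N$; replacing $\phi$ by its composition with an appropriate translation, we may assume $\lambda(\phi)=1 \in L_N$.

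Next I would show $\mu^\phi = \mathrm{id}$ on $\mathcal S_N$. By Lemma~\ref{lem:factor}, the birational automorphism $\mu^{\mathcal S}_\phi$ induced by $\phi$ on $\mathcal S_N$ depends only on $\lambda(\phi) \in L_N$; since $\lambda(\phi)=1$, it equals the automorphism associated to the identity of $L_N$, which is the identity of $\mathrm{Bir}(\mathcal S_N)$. Concretely, unpacking Proposition~\ref{prop:gcmtspec}: because $\lambda(\phi)=1$, every zig-zag path $\widetilde\alpha_i$ returns to its own initial position, so $\nu^\phi = \nu$ (the permutation of points at infinity is trivial); and the discrete Abel map satisfies $\dd^\phi = \dd$, since $\dd^\phi - \dd$ records the net number of crossings of each $\widetilde\w$ by lifts of zig-zag paths, which is determined by $\lambda(\phi)$ and vanishes when $\lambda(\phi)=1$. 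Hence the spectral divisor is unchanged, $S_\w^\phi = S_\w$, and the spectral curve is always invariant. Therefore $\mu^{\mathcal S}_\phi$ fixes $(\mathcal C, S, \nu)$ generically, so it is the identity birational map.

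Finally I would transport this back to $\mathcal X_N$. By Fock's theorem, the spectral transform $\kappa_{\Gamma,\w_0} : \mathcal X_N \dashrightarrow \mathcal S_N$ is birational, and by construction $\kappa_{\Gamma,\w_0} \circ \mu^\phi = \mu^{\mathcal S}_\phi \circ \kappa_{\Gamma,\w_0} = \kappa_{\Gamma,\w_0}$ as rational maps $\mathcal X_N \dashrightarrow \mathcal S_N$. Composing on the left with the birational inverse $\kappa_{\Gamma,\w_0}^{-1}$ gives $\mu^\phi = \mathrm{id}$ as a rational self-map of $\mathcal X_N$, which is precisely the statement that $\phi$ is a trivial generalized cluster transformation, i.e., $\phi \in \mathcal K_\Gamma$.

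I expect the only genuine subtlety to be the careful bookkeeping in the reduction step — making sure that precomposing $\phi$ with a translation representing $-\lambda(\phi) \in jH_1(\T,\Z)$ genuinely produces a generalized cluster transformation with $\lambda = 1$ and identical $\mu^\phi$, which relies on the fact that translations of $\widetilde\Gamma$ descend to the trivial operation on $\Gamma$. Everything else is a formal chase through Lemma~\ref{lem:factor}, Proposition~\ref{prop:gcmtspec}, and the birationality of the spectral transform; no calculation is required.
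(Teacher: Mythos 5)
Your proof is correct and follows essentially the same route as the paper: the paper's own (one-line) argument also invokes Lemma~\ref{lem:factor} to conclude that the induced birational map on $\mathcal S_N$ coincides with that of a translation, and then implicitly uses the birationality of the spectral transform to pull the identity back to $\mathcal X_N$. You have merely made explicit the normalization to $\lambda(\phi)=1$ and the final transport step via $\kappa_{\Gamma,\w_0}^{-1}$, both of which the paper leaves tacit.
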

\begin{proof}
If $\phi \in \ker \overline \lambda$, then by Lemma \ref{lem:factor}, the induced birational map of $\mathcal S_N$ is the same as the birational map induced by a translation in $H_1(\T,\Z)$, so $\phi \in \mathcal K_{\Gamma}$.
\end{proof}
Finally, we need to show that $\mathcal K_{\Gamma} \subseteq \ker \overline \lambda$.
\begin{lemma}\la{lem:ker}
Suppose $N$ contains at least one interior lattice point. If $\phi \in \mathcal K_{\Gamma}$, then $\overline \lambda(\phi)=0$.
\end{lemma}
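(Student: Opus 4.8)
The plan is to push the triviality of $\phi$ through the spectral transform, read off the consequences for the induced action on $\mathcal S_N$, and convert those into the statement that $\lambda(\phi)$ is a translation. First I would reduce to a divisor condition. Since $\phi\in\mathcal K_\Gamma$ we have $\mu^\phi=\mathrm{id}$ on $\mathcal X_N$; because the spectral transform $\kappa_{\Gamma,\w}$ is birational (Fock) for \emph{every} choice of white vertex $\w$, the conjugate $\kappa_{\Gamma,\w}\circ\mu^\phi\circ\kappa_{\Gamma,\w}^{-1}$ is the identity on $\mathcal S_N$. Reading off Proposition~\ref{prop:gcmtspec}, this forces: (a) $\nu^\phi=\nu$, and (b) $\dd^\phi(\widetilde\w)=\dd(\widetilde\w)$ in $\mathrm{Cl}(\mathcal C)$ for every white vertex $\w$ and every generic $[\wt]$ (hence for a generic curve $\mathcal C$ with Newton polygon $N$). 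From (a), each component $w_\rho$ of $\lambda(\phi)$ reduces to the identity in $S_{|E_\rho|}$, so $w_\rho$ lies in the translation subgroup $T_\rho=\ker(\widehat S_{|E_\rho|}\to S_{|E_\rho|})$; I would then write $w_\rho(i)=i+|E_\rho|\,m^\rho_{j}$, where $1\le j\le|E_\rho|$ with $j\equiv i\pmod{|E_\rho|}$, the integers $(m^\rho_j)$ satisfying $\sum_\rho\sum_{j=1}^{|E_\rho|}m^\rho_j=0$ because $\lambda(\phi)\in L_N$.

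Next I would compute $\dd^\phi-\dd$ explicitly. By Lemma~\ref{lem:factor} (cf.\ Remark~\ref{rem:homo}) this divisor depends only on $\lambda(\phi)$, so it may be evaluated using the homotopy that, for each ray $\rho$, simply slides the lift $\widetilde\alpha_n^\rho$ to the initial location of $\widetilde\alpha_{w_\rho(n)}^\rho$. Since $\lambda(\phi)$ has trivial permutation part, $\dd^\phi$ is read off with the same parameterization $\nu$, so $\dd^\phi(\widetilde\w)-\dd(\widetilde\w)$ is purely the signed record of which zig-zag paths swept past $\widetilde\w$ during this homotopy. The parallel lifts $\{\widetilde\alpha_i^\rho\}_{i\in\Z}$ are pairwise disjoint and partition the plane into strips; let $k_\rho=k_\rho(\widetilde\w)$ be the index with $\widetilde\w$ lying between $\widetilde\alpha_{k_\rho}^\rho$ and $\widetilde\alpha_{k_\rho+1}^\rho$. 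The net number of times the path starting at position $n$ crosses $\widetilde\w$ is an intersection number, hence depends only on its endpoints $\widetilde\alpha_n^\rho,\widetilde\alpha_{w_\rho(n)}^\rho$: it is $+1$ if $n\le k_\rho<w_\rho(n)$, it is $-1$ if $w_\rho(n)\le k_\rho<n$, and it is $0$ otherwise. Summing over all $n$, grouping into arithmetic progressions $\{j+t|E_\rho|\}_{t\in\Z}$ and using $w_\rho(n+|E_\rho|)=w_\rho(n)+|E_\rho|$, each progression yields a telescoping sum equal to $m^\rho_j$, so that
\[
\dd^\phi(\widetilde\w)-\dd(\widetilde\w)=\sum_{\rho\in\Sigma(1)}\ \sum_{j=1}^{|E_\rho|} m^\rho_j\,\nu(\alpha^\rho_j),
\]
independently of $\w$ (and of degree $0$, consistently with $\sum_{\rho,j}m^\rho_j=0$).

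Finally I would invoke genericity of the spectral curve. By (b), the degree-$0$ divisor $D:=\sum_\rho\sum_j m^\rho_j\,\nu(\alpha^\rho_j)$, supported at the points at infinity of $\mathcal C$, is principal on $\mathcal C$ for generic $[\wt]$; since the spectral transform is birational, $\mathcal C$ is then a generic curve with Newton polygon $N$. For such a curve --- and this is exactly where the hypothesis $g\ge1$ enters --- every principal divisor supported at infinity has the form $\mathrm{div}_{\mathcal C}(z^aw^b)$: equivalently $\mathcal O^\times(\mathcal{C}^{\circ})$ is generated by constants and monomials, equivalently the only relations among the classes $[\mathcal C\cap D_\rho]=\sum_j\nu(\alpha_j^\rho)$ in $\mathrm{Cl}(\mathcal C)$ are the toric ones. (When $g=0$ one has $\mathcal C\cong\P^1$ and every degree-$0$ divisor is principal, so no information is obtained; this is why that case is genuinely different, with $\mathcal G_\Gamma\cong\prod_\rho S_{|E_\rho|}$.) Granting this, $D=\mathrm{div}_{\mathcal C}(z^aw^b)=\sum_\rho\mathrm{ord}_{D_\rho}(z^aw^b)\sum_j\nu(\alpha_j^\rho)$; comparing coefficients of the distinct points $\nu(\alpha_j^\rho)$ gives $m^\rho_j=\mathrm{ord}_{D_\rho}(z^aw^b)$ for all $j$ --- in particular independent of $j$ --- and this tuple is precisely $j(m)$ for the lattice vector $m$ corresponding to $(a,b)$. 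Hence $\lambda(\phi)\in jH_1(\T,\Z)$, i.e.\ $\overline\lambda(\phi)=0$ in $\mathcal H_N$.

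The hard part will be the last step: verifying that for a generic spectral curve the subgroup of $\mathrm{Div}_\infty(\mathcal C)$ dying in $\mathrm{Cl}(\mathcal C)$ is precisely the rank-$2$ lattice of monomial divisors --- i.e.\ that the boundary points $\mathcal C\cap D_\rho$ are in sufficiently general position in $\mathrm{Jac}(\mathcal C)$ --- which is exactly where the interior-lattice-point hypothesis does its work. The crossing-number identity of the second step is, by contrast, routine once one observes that the relevant count is a homotopy invariant and may therefore be computed from the naive sliding homotopy.
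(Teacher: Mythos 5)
Your opening reduction is exactly the paper's: triviality of $\mu^\phi$ plus birationality of the spectral transform and Proposition~\ref{prop:gcmtspec} force $\nu^\phi=\nu$ (so each $w_\rho$ is a translation $i\mapsto i+|E_\rho|m^\rho_j$) and force $\dd^\phi(\widetilde\w)-\dd(\widetilde\w)$ to be principal on a generic spectral curve. Your computation of that divisor as $\sum_\rho\sum_j m^\rho_j\,\nu(\alpha^\rho_j)$ via the signed-crossing description of Remark~\ref{rem:homo} is also correct. The problem is that the entire difficulty of the lemma is then concentrated in the step you explicitly defer: that for a generic curve with Newton polygon $N$ and $g\geq 1$, the subgroup of degree-zero divisors supported at infinity that become principal is \emph{exactly} the rank-two lattice of monomial divisors $\mathrm{div}_{\mathcal C}(z^aw^b)$. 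This is not a routine genericity statement --- the points $\mathcal C\cap D_\rho$ are constrained to lie on the toric boundary, one must rule out unexpected linear equivalences among them for at least one curve in the family, and the statement genuinely fails for $g=0$ --- so asserting it with "granting this" leaves the proof incomplete. As written, you have reduced the lemma to an unproven (though true) algebro-geometric claim rather than proved it.

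The paper avoids this entirely by a different final step: once $\nu^\phi=\nu$, the element $\lambda(\phi)$ lies in the translation part of $L_N$, so by \cite{Gi}*{Section 3} there is a sequence $\phi'$ of isotopies and elementary transformations alone with $\lambda(\phi')=\lambda(\phi)$; by Lemma~\ref{lem:factor} the action on $\mathcal S_N$ depends only on $\lambda$, so $\mu^{\phi'}$ acts trivially on $\mathcal S_N$ and hence $\phi'$ is a trivial \emph{cluster} transformation; the injectivity statement already proved in \cite{Gi}*{Theorem 4.8} then gives $\overline\lambda(\phi')=\overline\lambda(\phi)=0$. In effect, the hard divisor-theoretic input you would need is exactly what \cite{Gi} establishes, and the paper's proof is the observation that the generalized case reduces to it. Either supply a proof of your genericity claim or make this reduction; without one of the two, the argument has a genuine gap.
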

\begin{proof}
In order for $\phi$ to be a trivial generalized cluster transformation, we must have by Proposition \ref{prop:gcmtspec} that $\nu^\phi=\nu$ i.e., the cyclic orders of zig-zag paths in any $Z_\rho$ before and after $\phi$ are the same. By \cite{Gi}*{Section 3} there is a sequence of isotopies and elementary transformations $\phi'$ such that $\lambda(\phi)=\lambda(\phi')$, i.e., we can achieve the same generalized cluster transformation without using geometric $R$-matrix transformations. Then, the proof of \cite{Gi}*{Theorem 4.8} shows that $\overline \lambda(\phi')=0$. 
\end{proof}

\begin{proof}[Proof of Theorem \ref{thm::main1}]
If $N$ contains at least one interior lattice point, then the homomorphism $\overline \lambda$ is surjective by Proposition \ref{prop:surj} and the kernel is $\mathcal K_{\Gamma}$ by Corollary \ref{cor:ker} and Lemma \ref{lem:ker}.

If $N$ contains no interior lattice points, then the spectral divisor contains $g=0$ points, and therefore the action on $\mathcal S_N$ only changes the bijections $\nu$.
\end{proof}

\bibliographystyle{alpha}
\bibliography{references}
\Addresses
\end{document}